\newtheorem{proposition}{Proposition}
\newtheorem{remark}{Remark}
\newtheorem{lemma}{Lemma}
\begin{document}
%
\title{Joint Channel Estimation and Data Recovery for Millimeter Massive MIMO: Using Pilot to Capture Principal Components}
%
%
%

\author{Shusen~Cai,
        Li~Chen~\IEEEmembership{Senior~Member~IEEE},
        Yunfei~Chen~\IEEEmembership{Senior~Member~IEEE} 

        Huarui~Yin~\IEEEmembership{Member~IEEE}, 
        and Weidong Wang
\thanks{Shusen Cai, Li Chen, Huarui Yin and Weidong Wang are with the Department of Electronic Engineering and Information Science, University of Science and Technology of China, Hefei 230027, Anhui, China (e-mail: cssemail@mail.ustc.edu.cn; chenli87@ustc.edu.cn; yhr@ustc.edu.cn; wdwang@ustc.edu.cn).} 
\thanks{Yunfei Chen is with the Department of Engineering, University of Durham, Durham DH1 3LE, U.K. (e-mail: yunfei.chen@durham.ac.uk).}}

\maketitle

\begin{abstract}
Channel state information (CSI) is important to reap the full benefits of millimeter wave (mmWave) massive multiple-input multiple-output (MIMO) systems. The traditional channel estimation methods using pilot frames (PF) lead to excessive overhead. To reduce the demand for PF, data frames (DF) can be adopted for joint channel estimation and data recovery. However, the computational complexity of the DF-based methods is prohibitively high. To reduce the computational complexity, we propose a joint channel estimation and data recovery (JCD) method assisted by a small number of PF for mmWave massive MIMO systems. The proposed method has two stages. In Stage 1, differing from the traditional PF-based methods, the proposed PF-assisted method is utilized to capture the angle of arrival (AoA) of principal components (PC) of channels. In Stage 2, JCD is designed for parallel implementation based on the multi-user decoupling strategy. The theoretical analysis demonstrates that the PF-assisted JCD method can achieve equivalent performance to the Bayesian-optimal DF-based method, while greatly reducing the computational complexity. Simulation results are also presented to validate the analytical results.
\end{abstract}

\begin{IEEEkeywords}
MmWave massive MIMO, joint channel estimation and data recovery, PF-assisted, principal components.
\end{IEEEkeywords}

%
\IEEEpeerreviewmaketitle

\section{Introduction}\label{section_1}
\IEEEPARstart{M}{illimeter} wave (mmWave) massive multiple-input multiple-output (MIMO) has been considered as a key enabler for future cellular wireless communications \cite{HuangJie2019}. It can greatly increase information transmission rates, enhance spectral efficiencies and provide spatial diversity using the joint capabilities of its ultra-wide bandwidth in the mmWave frequency bands (30-300GHz) \cite{Rappaport2017,YangXi2019} and high beamforming gains achieved by massive antenna arrays \cite{Albreem2019,Busari2018}.

The attainment of the most benefits above requires the channel state information (CSI). The acquisition of CSI is traditionally accomplished through estimation from pilot frames (PF) (i.e. PF-based methods). The classical PF-based methods include the least squares (LS) method \cite{Ozdemir2007} and the minimum mean square error (MMSE) method \cite{YinHaifan2013}. These methods employ linear matrix operations, and are applicable for Rayleigh fading MIMO channels. For mmWave massive MIMO systems, the oversight of sparsity of channels incurs excessive PF overhead in these methods.

By exploiting the sparsity of mmWave massive MIMO channels, compressive sensing is adopted to reduce PF overhead \cite{TaoJun2019,Duong2023,Flinth2016,LianLixiang2018,Hawej2019,LiuAn2016,WeiChao2017,MoJianhua2018,Rajoriya2023}. In \cite{TaoJun2019,Duong2023}, orthogonal matching pursuit (OMP) based algorithms were proposed for sparse channel estimation, and could simply track the parameters of multi-path components to reconstruct the channel matrix. The estimation of the channel matrix was modeled as a least squares minimization with \(l_{0}\) norm regularization in \cite{Flinth2016,LianLixiang2018,Hawej2019,LiuAn2016}, and could be solved by using relevant optimization algorithms. The approximate message passing (AMP) algorithm was applied to obtain the posterior mean estimator of the channel matrix \cite{WeiChao2017,MoJianhua2018,Rajoriya2023}. 

Besides PF, data frames (DF) can be also used for acquiring CSI, thereby reducing the demand for PF \cite{MaShuo2018,Mezghani2018,Ghavami2018,ZhangJianwen2018,Xiong2019,Wen2016,ZhangRuoyu2022,Du2021,DuJianhe2023}. In \cite{MaShuo2018,Mezghani2018}, the expectation maximization (EM) algorithm was utilized to design a DF-based channel estimator for obtaining the modulus values of channel parameters. By utilizing expectation propagation (EP) algorithm, the DF-based method proposed in \cite{Ghavami2018} further improved estimation accuracy based on the estimation results from the eigenvalue decomposition method. In \cite{ZhangJianwen2018,Xiong2019,Wen2016}, the DF-based methods adopting bilinear generalized approximate message passing (BiGAMP) algorithm exclusively utilized basic operations (i.e. additions and multiplications) for iterative message updates, and could converge to the Bayesian optimal performance of joint channel estimation and data recovery.

The joint recovery of both channel parameters and data symbols in the existing DF-based methods \cite{MaShuo2018,Mezghani2018,Ghavami2018,ZhangJianwen2018,Xiong2019,Wen2016} leads to high computational complexity, especially for mmWave massive MIMO systems. To simplify the joint recovery process, DF-based methods utilizing tensor decomposition were proposed in \cite{ZhangRuoyu2022,Du2021,DuJianhe2023}. The tensor decomposition algorithm treated the observed signals as the product of the unknown symbol tensor and unknown channel matrix tensor, and estimated tensors using the alternating least squares (ALS) method. The mmWave massive MIMO channel matrix could be simply reconstructed by the parameters of multi-path components extracted from the estimated channel matrix tensor.

However, tensor decomposition algorithm can't resolve the high computational complexity issue caused by large-scale antenna arrays, because the computational complexity of the alternating tensor estimation grows quadratically with the number of antennas. To reduce the computational complexity of estimating the high-dimensional channel matrix, \cite{ChenXiao2021,WangAnding2019,ShanYaru2021} represented the channel matrix in the angular (and delay) domain, and utilized a small number of multi-path components to simplify the PF-based estimation of high-dimensional channel matrix. This has inspired us to use the principal components (PC) in mmWave massive MIMO channels to simplify the process of joint channel estimation and data recovery.

Motivated by the above observations, this work proposes a joint channel estimation and data recovery (JCD) method assisted by a limited number of PF for a low computational complexity design. The proposed PF-assisted JCD method involves two stages: coarse channel estimation stage and joint fine channel estimation and data recovery stage. In Stage 1, differing from the traditional PF-based methods, the PF-assisted coarse channel estimation is utilized to preliminarily acquire the prior knowledge of massive MIMO channels in the angular domain. Based on the acquired prior knowledge, the search range of the angle of arrival (AoA) is confined to focus on the PC of channels. Then, all users are decoupled into user groups without mutual interference (i.e. no multi-path component overlap) by the multi-user decoupling strategy. In Stage 2, DF are utilized to realize the JCD of all user groups in parallel using the decoupled EM-BiGAMP algorithm. The theoretical analysis shows that the proposed PF-assisted JCD method can reduce computational complexity without performance loss, which is also confirmed by simulation. The main contributions of this paper are summarized as follows:

\begin{itemize}
\item \textbf{A low computational complexity PF-assisted JCD method:} The low computational complexity method is based on the PC acquisition and block-wise partitioning of the high-dimensional channel matrix, and obtained by a two-stage estimation. In Stage 1, the proposed PF-assisted method aims to capture the PC of channels within an extremely narrow AoA search range, which significantly reduces the dimensionality of the observed signals. To further accelerate
the solution of the multi-user estimation problem, the designed JCD algorithm (i.e decoupled EM-BiGAMP algorithm) partitions the channel matrix, and then parallelly implements the solution to the estimation problems for submatrices in Stage 2.

\item \textbf{Performance and complexity analysis of PF-assisted JCD method:} Mean squared error (MSE) is employed as a performance metric for studying the proposed PF-assisted JCD method. According to both theoretical analysis and simulations, the sparsity-exploiting operations of the proposed PF-assisted JCD method are verified not to cause performance loss in the joint channel estimation and data recovery, and the computational complexity and processing time of the PF-assisted JCD method are respectively reduced by approximately \(\alpha_{\rm r}\) times and approximately \(\alpha_{\rm r}\cdot N\) times, compared to the original DF-based method. \(N\) is the number of users and \(\alpha_{\rm r}\) is the proportion of signal dimension reduction.

\end{itemize}

The rest of the paper is organized as follows. In Section \ref{section_2}, we will describe the mmWave massive MIMO system in an uplink communication scenario, and raise the issue of the excessive computational complexity in existing channel estimation methods. In Sections \ref{section_3}, we will propose the PF-assisted JCD method for a low computational complexity design. In Sections \ref{section_4}, the MSE performance analysis of the proposed PF-assisted JCD method will be provided. Simulation results will be shown in Section \ref{section_5}, followed by the conclusions in Section \ref{section_6}.

\textit{Notations:} \(a\) is a scalar, \(\boldsymbol{\rm a}\) is a vector, \(\boldsymbol{\rm A}\) is a matrix. \(\boldsymbol{\rm I}_{a}\) is an \(a \times a\) identity matrix. \({\mathbb{C}}\) represents a complex number field. \({\mathcal{CN}}(\boldsymbol{\rm x};\boldsymbol{\rm a},\boldsymbol{\rm A})\) means that random vector \(\boldsymbol{\rm x}\) obeys a Gaussian distribution with mean \(\boldsymbol{\rm a}\) and covariance matrix \(\boldsymbol{\rm A}\), and \({\delta(\cdot) } \) is a dirac delta function. \({\rm E}\{\cdot\}\) and \({\rm Var}\{\cdot\}\) are the calculations of mathematical expectation and variance respectively. \((\cdot)^T\) is defined as the transpose of a matrix or a vector, \((\cdot)^H\) is the conjugate transpose of a matrix or a vector, and \({\rm det}(\cdot)\) is the determinant operation of a matrix. \(\left\| \cdot \right\|\) is the Euclidean norm of a matrix or a vector.

\section{System Model and Problem Formulation}\label{section_2}

\subsection{System model}\label{section_2_1}
Consider an uplink communication scenario, where \(N\) single-antenna users are simultaneously served by a base station (BS) equipped with a uniform linear array (ULA) of \(M\) antennas, as illustrated in Fig. \ref{system_model}. The BS receives the uplink signals transmitted by users, and the radio frequency (RF) chains and analog combiner \(\boldsymbol{\rm F}\) are used to process the uplink signals as
\begin{equation}\label{original_signal}
{\boldsymbol{\mathrm{y}}_k} = \boldsymbol{\rm F}(\boldsymbol{\mathrm{G}}{\boldsymbol{\mathrm{x}}_k} + {\boldsymbol{\mathrm{w}}_k}) = \boldsymbol{\mathrm{H}}{\boldsymbol{\mathrm{x}}_k} + {\boldsymbol{\mathrm{n}}_k},\  \forall k \in \{ 1,2,...,K\} 
\end{equation}
where \(\boldsymbol{\mathrm{G}} \in {{\mathbb{C}}^{M \times N}}\) is a mmWave channel matrix, \(\boldsymbol{\mathrm{x}}_{k} \in {{\mathbb{C}}^{N \times 1}}\) is a multi-user symbol vector at the \(k\)th symbol duration, \(\boldsymbol{\mathrm{w}}_{k} \in {{\mathbb{C}}^{M \times 1}}\) is the additive white Gaussian noise (AWGN) vector with zero mean and covariance matrix \(\sigma_{\rm w}^2 \boldsymbol{\rm I}_{M}\) at the \(k\)th symbol duration, and \(K\) is the total number of symbols. After being processed by the analog combiner \(\boldsymbol{\rm F}\), the equivalent channel matrix and equivalent noise vector can be represented as \( \boldsymbol{\rm H} = \boldsymbol{\rm F} \boldsymbol{\rm G} \) and \({\boldsymbol{\mathrm{n}}_k} = {\boldsymbol{\rm F}}{\boldsymbol{\mathrm{w}}_k}\) respectively. 

Due to the highly directional, quasi-optical nature of mmWave channels, the channel matrix \(\boldsymbol{\mathrm{G}} = [\boldsymbol{\mathrm{g}}_{1}, \boldsymbol{\mathrm{g}}_{2},\ ...\ ,\boldsymbol{\mathrm{g}}_{N}]\) can be modeled as
\begin{equation}
{\boldsymbol{\mathrm{g}}_n} = \sum\limits_{l = 1}^{{L_n}} {{\beta _{n,l}}} {\boldsymbol{\mathrm{a}}_M}({\theta _{n,l}}),\quad\forall n \in \{ 1,2,...,N\}
\end{equation}
where 
\begin{equation}
\begin{aligned}
{{\boldsymbol{\mathrm{a}}}_M}(\theta ) = \left[ {1,{e^{ - j\cdot \pi \sin \theta }},{e^{ - j \cdot 2\pi \sin \theta }},{\rm{ }}...{\rm{ }},{e^{ - j\cdot (M - 1) \pi \sin \theta  }}} \right]^{T},
\end{aligned}
\end{equation}
\({\boldsymbol{\mathrm{a}}_M}(\theta)\) is the steering vector of a ULA with half-wavelength antenna spacing. \(\beta _{n,l}\) is the complex gain of the \(l\)th path of the \(n\)th user, and \(\theta _{n,l}\) is the \(l\)th path's AoA of the \(n\)th user. \(L_{n}\) is the total number of multi-path components of the \(n\)th user. 

Transforming the channel matrix \(\boldsymbol{\rm G}\) into the angular domain representation, one has
\begin{equation}
\begin{aligned}
& {\boldsymbol{\mathrm{G}}} = {\boldsymbol{\mathrm{U}}}{\tilde{\boldsymbol{\mathrm{G}}}},
\end{aligned}
\end{equation}
where
\begin{equation}
{\boldsymbol{\mathrm{U}}} = \frac{1}{{\sqrt {{M}} }}\left[ {{\boldsymbol{\mathrm{a}}_{{M}}}(0),{\boldsymbol{\mathrm{a}}_{{M}}}(\frac{1}{{{M}}}),\ ...\ ,{\boldsymbol{\mathrm{a}}_{{M}}}(\frac{{{M} - 1}}{{{M}}})} \right]
\end{equation}
is a normalized discrete Fourier transform (DFT) matrix of \(M\)-point, and \({\tilde{\boldsymbol{\mathrm{G}}}}\) is the channel matrix represented in the angular domain. To receive the multi-path components with unknown AoA, the DFT analog combiner is adopted to give 
\begin{equation}
\boldsymbol{\mathrm{F}} = {\boldsymbol{\mathrm{U}}}^{-1} = {\boldsymbol{\mathrm{U}}}^{H}.    
\end{equation}
where the equivalent channel matrix \(\boldsymbol{\rm H}\) equals to \(\tilde{\boldsymbol{\rm G}}\). Because of the small value of \(L_{n}\), the matrix \(\boldsymbol{\rm H}\) exhibits sparse characteristics, manifested by a large amount of signal energy concentrated on a small number of multi-path components. 

\begin{figure}
\centering
\includegraphics[width=\linewidth]{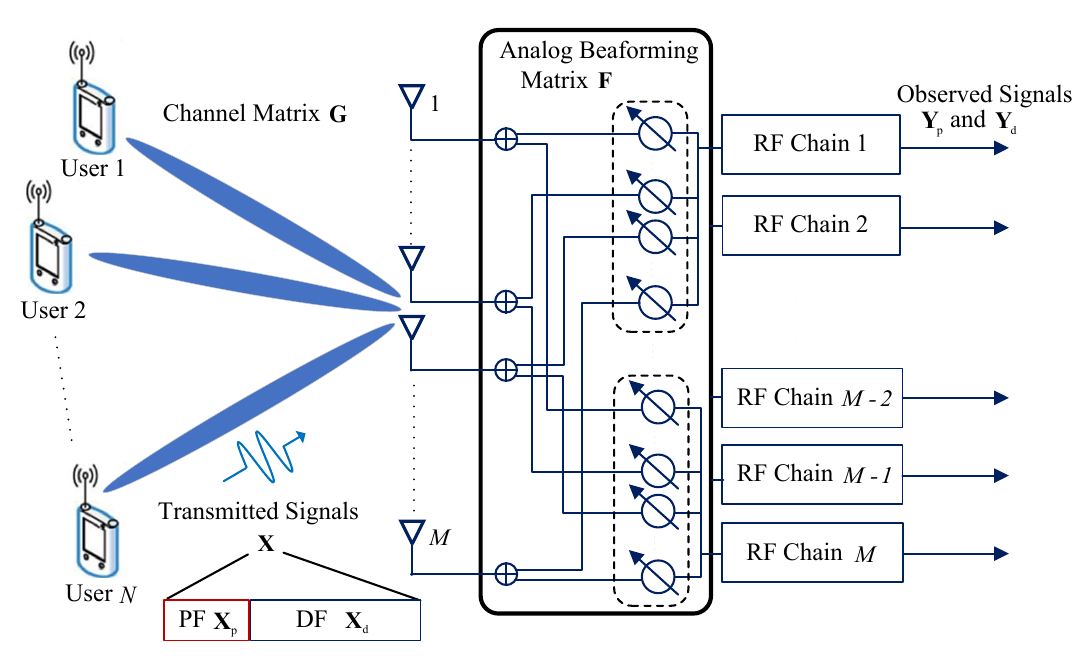}
\caption{System Model: a mmWave massive MIMO uplink communcation system with a BS and multiple users.}\label{system_model} 
\end{figure}

\subsection{Problem formulation}\label{section_2_2}
For MIMO communication systems, the acquisition of the channel matrix \(\boldsymbol{\rm H}\) utilizing the observed signal in \eqref{original_signal} needs to be addressed. The frames of the observed signal \({\boldsymbol{\mathrm{X}}} = [\boldsymbol{\mathrm{x}}_{1}, \boldsymbol{\mathrm{x}}_{2},\ ...\ ,\boldsymbol{\mathrm{x}}_{K}] \in {{\mathbb{C}}^{N \times K}}\) are comprised of PF \({{\boldsymbol{\mathrm{X}}}_{\mathrm{p}}}\in {{\mathbb{C}}^{N \times K_{\mathrm{p}}}}\) and DF \({{\boldsymbol{\mathrm{X}}}_{\mathrm{d}}}\in {{\mathbb{C}}^{N \times K_{\mathrm{d}}}}\) (i.e. \({{\boldsymbol{\mathrm{X}}}} = [{{\boldsymbol{\mathrm{X}}}_{\mathrm{p}}}, {{\boldsymbol{\mathrm{X}}}_{\mathrm{d}}}]\), and \(K = K_{\mathrm{p}} + K_{\mathrm{d}}\)). Based on the types of time frames used, the channel estimation methods are mainly divided into the following two categories:

\subsubsection{\bf{PF-based method}}
The traditional channel estimation methods only utilize PF to acquire channel state information, and the most classical methods are the LS method and the MMSE method. In order to exploit the sparsity of the channel matrix \(\boldsymbol{\rm H}\) to reduce system complexity and enhance estimation performance, many PF-based methods have been already proposed, such as OMP methods \cite{TaoJun2019,Duong2023}, least absolute shrinkage and selection operator (LASSO) methods \cite{Flinth2016,LianLixiang2018,Hawej2019,LiuAn2016}, AMP methods \cite{WeiChao2017,MoJianhua2018,Rajoriya2023}. 

The main disadvantage of these PF-based methods is that their performance depends on the length of PF. Because of the high sensitivity of mmWave channels to changes in the propagation environment, the coherence time of the mmWave channel is relatively short, hence requiring frequent CSI updates \cite{Du2021}. The PF-based channel estimation methods require a significant number of PF sequences to complete a single measurement and thus, is not suitable for frequent CSI updates.

\subsubsection{\bf{DF-based method}}
In order to address the issue of excessive PF overhead in the traditional PF-based methods, the DF-based methods can use both PF and DF for joint channel estimation and data recovery, which is given by
\begin{equation}\label{bilinear_problem}
\left[ {
{{\boldsymbol{\rm Y}_{\rm{p}}}},{{\boldsymbol{\rm Y}_{\rm{d}}}}
} \right] = \boldsymbol{\rm H}\left[ {
{{\boldsymbol{\rm X}_{\rm{p}}}},{{\boldsymbol{\rm X}_{\rm{d}}}}
} \right] + \left[ {
{{\boldsymbol{\rm N}_{\rm{p}}}},{{\boldsymbol{\rm N}_{\rm{d}}}}
} \right],
\end{equation}
where the probability density functions (PDF) of the channel matrix \(\boldsymbol{\rm H}\) and data matrix \(\boldsymbol{\rm X}\) are known (Note that \(\boldsymbol{\rm X}_{\rm p}\) can be regarded as random variables with known mean and zero variance). 

The joint estimation of channel matrix \(\boldsymbol{\rm H}\) and symbol matrix \(\boldsymbol{\rm X}_{\rm d}\) can be modeled as a bilinear inference problem \cite{Parker2016}. The BiGAMP algorithm proposed in \cite{Parker2014} is an excellent method for solving bilinear problems, which simplifies the loopy message passing (MP) algorithm to achieve implementation through simple computations (i.e. addition and multiplication operations), and has been applied in Rayleigh and multi-path fading channels \cite{Xiong2019,Wen2016}. The factor graph of the BiGAMP algorithm can be represented in Fig. \ref{fig:Factor_Graph}. Through iterative updates of messages \(\Delta _{m \to nk}^{\boldsymbol{\rm X}}(t,{x_{n,k}})\), \(\Delta _{k \to mn}^{\boldsymbol{\rm H}}(t,{h_{m,n}})\), \(\Delta _{m \leftarrow nk}^{\boldsymbol{\rm X}}(t,{x_{n,k}})\) and \(\Delta _{k \leftarrow mn}^{\boldsymbol{\rm H}}(t,{h_{m,n}})\), the estimated log-posterior PDF of \(\boldsymbol{\rm X}\) and \(\boldsymbol{\rm H}\), \(\Delta _{nk}^{\boldsymbol{\rm X}}(t,{x_{n,k}})\) and \(\Delta _{mn}^{\boldsymbol{\rm H}}(t,{h_{m,n}})\), can converge to the true log-posterior PDF \({P_{{x_{n,k}}|\boldsymbol{\rm Y}}}({x_{n,k}}|\boldsymbol{\rm Y})\) and \({P_{{h_{m,n}}|\boldsymbol{\rm Y}}}({h_{m,n}}|\boldsymbol{\rm Y})\). Given the estimated posterior PDF, the Bayesian optimal estimator of \(x_{n,k}\) and \(h_{m,n}\) can be obtained by 
\begin{equation}\label{DF_Ori_7}
\hat x_{n,k}^{(t)} = \int_{{x_{n,k}}} {{x_{n,k}} \cdot {P_{{x_{n,k}}|\boldsymbol{\rm Y}}^{(t)}}({x_{n,k}}|\boldsymbol{\rm Y})} {\rm d}{x_{n,k}},
\end{equation}
\begin{equation}\label{DF_Ori_8}
\hat h_{m,n}^{(t)} = \int_{{h_{m,n}}} {{h_{m,n}} \cdot {P_{{h_{m,n}}|\boldsymbol{\rm Y}}^{(t)}}({h_{m,n}}|\boldsymbol{\rm Y})} {\rm d}{h_{m,n}},
\end{equation}
where 
\begin{equation}
{P_{{x_{n,k}}|\boldsymbol{\rm Y}}^{(t)}}({x_{n,k}}|\boldsymbol{\rm Y}) = \frac{1}{C_{x}} e^{(\Delta _{nk}^{\boldsymbol{\rm X}}(t ,{x_{n,k}}))}
\end{equation}
and
\begin{equation}
{P_{{h_{m,n}}|\boldsymbol{\rm Y}}^{(t)}}({h_{m,n}}|\boldsymbol{\rm Y}) = \frac{1}{C_{h}} e^{(\Delta _{mn}^{\boldsymbol{\rm H}}(t ,{h_{m,n}}))}.
\end{equation}
\(C_{x}\) and \(C_{h}\) are normalization parameters which ensure that the PDF integrates to one.

The existing BiGAMP algorithm is only applicable to scenarios with a small number of transmitting and receiving antennas. As the signal frequency rises and the number of antennas at the transmitter or receiver increases, the number of messages that need to be updated in the BiGAMP algorithm grows rapidly, leading to extremely high computational complexity and unaffordable processing time. For a message iteration, the number of messages requiring updates increases by \(2(M_{\rm t}-1)MK_{\rm d} + 2(N_{\rm t}-1)NK_{\rm d}\), when the number of transmitting antennas increases by a factor of \(M_{\rm t}\) and the number of receiving antennas increases by a factor of \(N_{\rm t}\) \cite{Parker2014,Parker2016}. How to reduce the computational complexity and processing time of DF-based estimation algorithms is crucial for practical implementation in mmWave massive MIMO systems.

\begin{figure}
\centering
\subfigure[Traditional PF-based method]{\includegraphics[width=\linewidth]{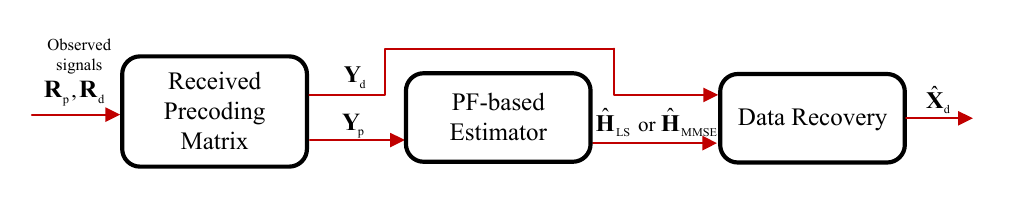}\label{fig:T_PF_Model}}
\subfigure[Original DF-based method]{\includegraphics[width=\linewidth]{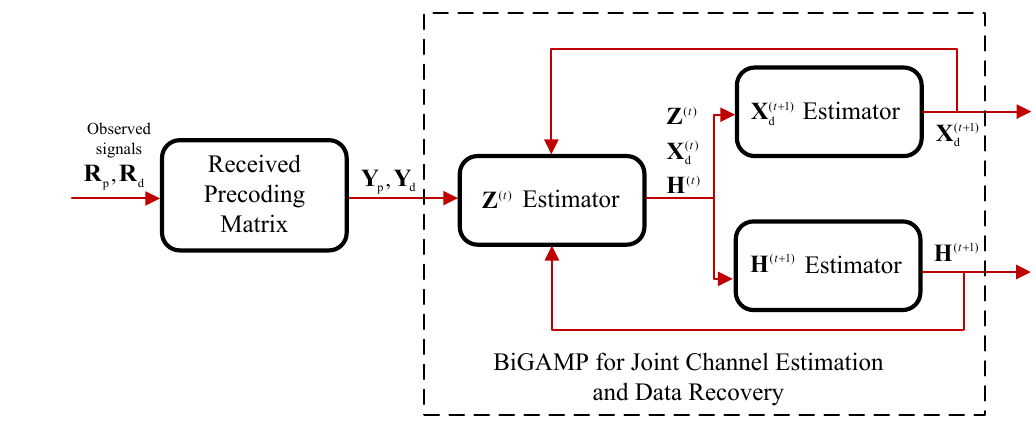}\label{fig:DF_Model}}
\subfigure[New PF-assisted JCD method]{\includegraphics[width=\linewidth]{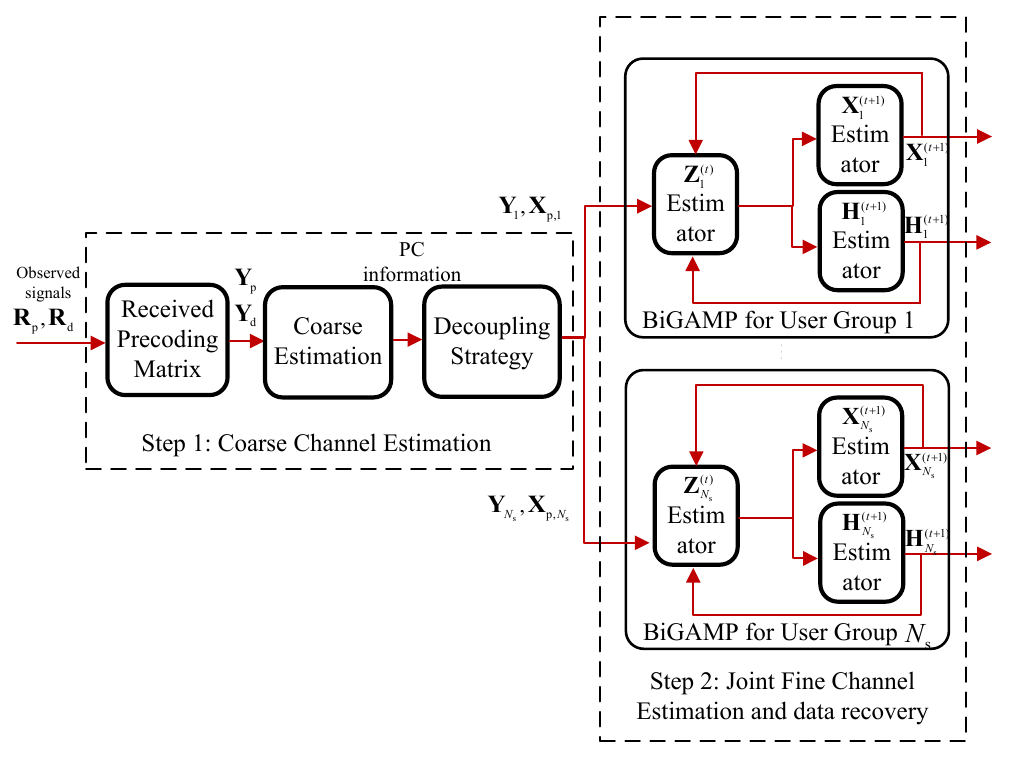}\label{fig:D_DF_Model}}
\caption{Block diagrams of the traditional PF-based method, original DF-based method and new PF-assisted JCD method.}\label{fig:Summary_PF_and_DF_method} 
\end{figure}

\begin{figure}
\centering
\includegraphics[width=\linewidth]{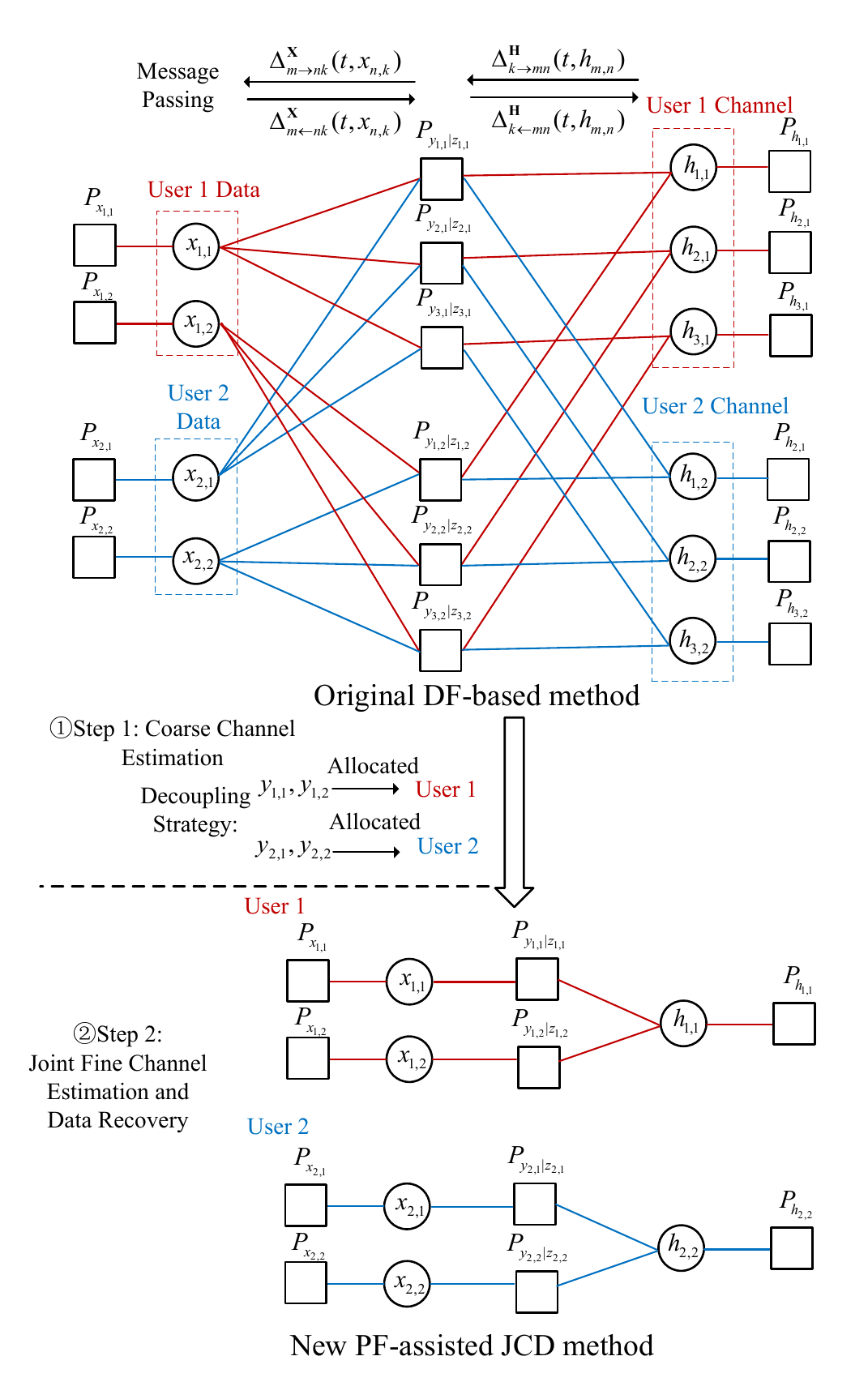}
\caption{Factor graphs of the original DF-based method and new PF-assisted JCD method with \(M=3, N=2, K=K_{\rm d}=2, M_{\rm track}=1, M_{\rm s}=0\).}\label{fig:Factor_Graph} 
\end{figure}

\section{PF-assisted JCD Method}\label{section_3}
In scenarios where the channel matrix \(\boldsymbol{\rm H}\) is sparse, the inefficiency of existing DF-based methods lies in the iterative updates of ineffective messages which do not contain information about uplink signals. In this section, the PF-assisted JCD method illustrated in Fig. \ref{fig:D_DF_Model} is proposed to solve the problem of inefficient message updates in DF-based methods using a two-stage estimation. In Stage 1, a coarse estimation method is proposed to preliminarily search for the PC of mmWave channels in a global angular domain through PF, and the designed decoupling strategy is applied to dimensionality reduction and multi-user decoupling of the multi-user estimation problem, which will be discussed in Section \ref{section_4_2}. In Stage 2, with the prior knowledge of PC obtained in Stage 1, a low computational complexity parallel algorithm is developed for joint fine channel estimation and data recovery utilizing both DF and PF, which will be discussed in Section \ref{section_4_3}.

\subsection{Stage 1: Coarse channel estimation in the global angular domain}\label{section_4_2}
when the CSI of mmWave channels is completely unknown, the purpose of coarse channel estimation in the global angular domain is to preliminarily obtain the information of multi-path components, thereby focusing on the PC of the channel matrix \(\boldsymbol{\rm H}\) to reduce the search range of AoA for PC in the subsequent estimation stage. At first, after being processed with the analog combiner \(\boldsymbol{\mathrm{F}}\), the observed PF signal \(\boldsymbol{\mathrm{Y}}_{\rm p}\) is given by
\begin{equation}
{{\boldsymbol{\mathrm{Y}}}_{\rm p}} = {{\boldsymbol{\mathrm{H}}}}{{\boldsymbol{\mathrm{X}}}_{\rm p}} + {{\boldsymbol{\mathrm{N}}}_{\rm p}},
\end{equation}
where the noise matrix \(\boldsymbol{\rm N}_{\rm p} = \boldsymbol{\mathrm{F}} {\boldsymbol{\mathrm{W}}}_{\rm p} \) is still Gaussian whose elements are independent and identically distributed (i.i.d.) and follow a complex Gaussian distribution \(\mathcal{CN}(0, \sigma_{\rm w}^2)\), because \( \boldsymbol{\mathrm{F}}\) is a unitary matrix.

The coarse channel estimation stage needs to address two issues: weak multi-path component detection in the global angular domain and the elimination of the interference caused by multi-user decoupling. For multi-path component detection, the misjudgments of multi-path components can result in a degradation of estimation performance and an increase in unnecessary AoA searches. For multi-user decoupling strategy, the overlapping multi-path components among different decoupling user groups will become interference. These interference severely degrades the estimation performance and is difficult to eliminate in the subsequent joint channel estimation and data recovery. The coarse estimation method is described in the following steps:

\subsubsection{Weak Multi-path Component Detection}
Firstly, the proposed coarse channel estimation method needs to obtain the equivalent channel matrix \(\boldsymbol{\rm H}\) for multi-path component detection in the global angular domain. Given the pilot symbols \({\boldsymbol{\mathrm{X}}}_{\rm p}\), the equivalent channel matrix \({\boldsymbol{\mathrm{H}}}\) can be directly estimated by LS method without requiring the probability statistics of channels as
\begin{equation}
{\hat{\boldsymbol{\mathrm{H}}}_{\rm LS}} = {{\boldsymbol{\mathrm{Y}}}_{\rm p}}{{\boldsymbol{\mathrm{X}}}_{\rm p}^{H}}({\boldsymbol{\mathrm{X}}}_{\rm p}{{\boldsymbol{\mathrm{X}}}_{\rm p}^{H}})^{-1}.
\end{equation}
where the pilot symbol matrix \(\boldsymbol{\rm X}_{\rm p}\) is designed to be row full rank (with \(K_{\rm p} \ge N\)).

To avoid the misjudgment of multi-path components caused by strong noise, the estimated channel matrix \({\hat{\boldsymbol{\mathrm{H}}}_{\rm LS}}\) requires the elimination of noise influence. Based on the spectral subtraction method, \({\hat{\boldsymbol{\mathrm{H}}}_{\rm LS}}\) is transformed into \(\left| {\hat{\boldsymbol{\mathrm{H}}}_{\rm LS}} \right|^2\) obtained by retaining only the magnitude squared of each element of matrix \({\hat{\boldsymbol{\mathrm{H}}}_{\rm LS}}\), and then the elements of matrix \(\left| {\hat{\boldsymbol{\mathrm{H}}}_{\rm LS}} \right|^2\) which only contain noise energy are set to zero. The threshold setting for multi-path component detection is based on the Neyman-Pearson (NP) criterion, and can be configured as
\begin{equation}
{\eta _{\rm NP}} = \Phi _{\rm RD}^{ - 1}(1 - \varepsilon ){\rm Var}\{\left| {{\boldsymbol{\rm N}_{{\rm{p}}}}} \right|_{i,j}^2{\rm{\} }} + {\rm E}\{ \left| {{\boldsymbol{\rm N}_{{\rm{p}}}}} \right|_{i,j}^2\}, 
\end{equation}
where \(\Phi _{\rm RD}^{ - 1}(\cdot)\) is the inverse function of the Rayleigh distribution's cumulative distribution function (CDF)
\begin{equation}
\begin{aligned}
{\Phi _{\rm RD}}(x) &= \int_{ 0 }^x {r \cdot {e^{ - \frac{{{r^2}}}{2}}}} {\rm d}r \\
&= 1 - e^{ - \frac{{{x^2}}}{2}}.  
\end{aligned}
\end{equation}
\( \varepsilon \) is the probability of noise-induced false alarm, \({\rm E}\{ \left| {{\boldsymbol{\rm N}_{{\rm{p}}}}} \right|_{i,j}^2\} \) and \({\rm Var}\{\left| {{\boldsymbol{\rm N}_{{\rm{p}}}}} \right|_{i,j}^2{\rm{\} }} \) represent the mean and variance of individual elements in matrix \(\left| {{\boldsymbol{\rm N}_{{\rm{p}}}}} \right|^2\), and their values are independent of \(i\) and \(j\). By eliminating the noise influence, the new channel matrix \( \tilde{\boldsymbol{\rm H}}_{\rm LS} \) can be rewritten as
\begin{equation}
\begin{aligned}
&\tilde{h}_{{\rm LS,}{i,j}} =
\left\{
             \begin{array}{ll}
             \left| {\hat{\boldsymbol{\rm H}}_{{\rm{LS}}}} \right|_{i,j}^2 - {\eta _{\rm NP}}, & \!\!\!  \left| {\hat{\boldsymbol{\rm H}}_{{\rm{LS}}}} \right|_{i,j}^2 > {\eta _{\rm NP}}  \\
             0, & \!\!\!  \left| {\hat{\boldsymbol{\rm H}}_{{\rm{LS}}}} \right|_{i,j}^2 \le {\eta _{\rm NP}} 
             \end{array},
\right.\\
&\quad\quad\quad\quad\quad\quad\quad\quad\quad\quad\quad\quad\quad\quad\quad\quad \forall i \in \{1,2,\ \ldots\ ,M_{\rm s}\}\\
&\quad\quad\quad\quad\quad\quad\quad\quad\quad\quad\quad\quad\quad\quad\quad\quad \forall j \in \{1,2,\ \ldots\ ,N\}\\
\end{aligned}
\end{equation}
where \({{{\tilde h}_{{\rm LS},i,j}}}\) is the element in the \(i\)th row and \(j\)th column of matrix \({\tilde{\boldsymbol{\mathrm{H}}}_{\rm LS}}\).

Then, the principal multi-path components are extracted from the denoised matrix \({\tilde{\boldsymbol{\mathrm{H}}}_{\rm LS}}\). The proposed PF-assisted JCD method tracks the \(M_{\rm track}\) strongest path for each user, and makes \(M_{\rm track} > L_{n}\). The angular index of the \(r\)th strongest pathes of the \(j\)th user can be given by
\begin{equation}\label{strategy_1}
\begin{aligned}
 {q_{r,j}} = \arg \mathop {\rm Find }\limits_i \left( {\left\| {{{\tilde h}_{{\rm LS},i,j}}} \right\|}, r  \right). \quad &\forall j \in \{ 1,2,\ ...\ ,N\}\\
 & \forall r \in \{ 1,2,\ ...\ ,M_{\rm track}\}
\end{aligned} 
\end{equation} 
where \(\arg \mathop {\rm Find }\limits_i (\cdot , r)\) is defined as a function which finds the index of the \(r\)th largest value from a set of indexed discrete values. Considering the diffusion of multi-path components, the search range for the \(r\)th multi-path component of the \(j\)th user can be set to the angular indices \(\{ {q_{r,j}}-\frac{M_{\rm s}}{2},\ ...\ , {q_{r,j}}+\frac{M_{\rm s}}{2} \}\). 

\subsubsection{The Decoupling Strategy of Multiple Users}
Using the processing results above, a multi-user decoupling strategy can be designed to achieve decoupling among multiple users as illustrated in Fig. \ref{fig:decoupling_strategy}. To avoid interference between different user groups, the design of the decoupling strategy needs to ensure that there is no overlapping multi-path components (i.e. no interference) among users in different user groups. Using the graph theory, the interference among users can be modeled as an undirected graph described by a tuple \(G = (V,I)\), where \(V=\{ v_{1},v_{2},\ ...\ , v_{N} \} \) denotes the set of vertices of the graph used to represent users, and \( I = \{ i_{1,2}, i_{1,3}, \ ...\ ,i_{2,3},\ ...\ ,i_{N,N-1} \} \) denotes the set of undirected edges connecting the nodes given by 
\begin{equation}
i _ {n_{1},{n}_{2}} = 
\left\{
             \begin{array}{ll}
             1, &  \text{Interference exists} \\
             0, &  \text{No Interference} \\
             \end{array}
\right..  \quad\quad  n_{1} \neq n_{2}
\end{equation}
When the edge weight of vertice \(v_{n_{1}}\) and vertice \(v_{n_{2}}\) are set to 1, it indicates that there is interference between them, while a weight of 0 indicates no interference.

In an undirected graph \(G\), if there is a path connecting vertice \(v_{n_{1}}\) and vertice \(v_{n_{2}}\), it indicates that user \(n_{1}\) and user \(n_{2}\) have a direct or indirect interference relationship, and hence can not be decoupled. The multi-user decoupling strategy assigns users with connected paths to the same user group, while users between different groups are unable to connect to each other. The search of users with connected paths can be accomplished by using breadth-first search (BFS) method or depth-first search (DFS) method. The decoupling user groups can be represented by \(\{ \mathcal{Q}_{n_{\rm s}}\}_{n_{\rm s}=1}^{N_{\rm s}}\), where \(N_{\rm s}\) is the number of decoupling user groups.

\begin{figure}
\centering
\includegraphics[width=\linewidth]{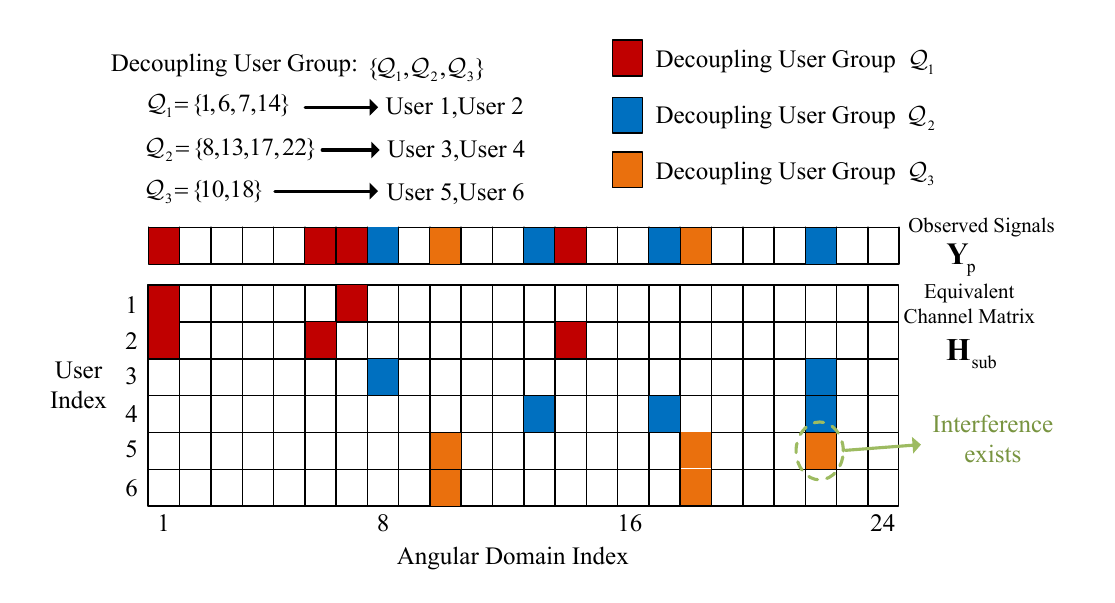}
\caption{Diagram of the multi-user decoupling strategy with \(M_{\rm track}=3, M_{\rm s} = 0\).}\label{fig:decoupling_strategy} 
\end{figure}

\subsection{Stage 2: Joint fine channel estimation and data recovery in the local angular domain}\label{section_4_3}
With the prior knowledge from the coarse channel estimation in the global angular domain, the receiver can eliminate ineffective searches within the angular domain where there are no multi-path components or only weak multi-path components exist, and narrow down the search range to the extent indicated by \( \{{q_{r,j}-\frac{M_{\rm s}}{2}},\ ...\ ,{q_{r,j}+\frac{M_{\rm s}}{2}} \}, \   \forall r \in \{ 1,2,...,M_{\rm track}\},\   \forall j \in \{ 1,2,...,N\}  \). The analog combiner \(\boldsymbol{\mathrm{F}}\) is set to choose the desired angular ranges, which is given by
\begin{equation}
\boldsymbol{\mathrm{F}} = \left[ \boldsymbol{\mathrm{F}}_{q_{1,1}}^{T}, \boldsymbol{\mathrm{F}}_{q_{2,1}}^{T},\ ...\ , \boldsymbol{\mathrm{F}}_{q_{{M_{\rm track}},{N}}}^{T} \right]^{T},
\end{equation}
where 
\begin{equation}
\begin{aligned}
\boldsymbol{\mathrm{F}}_{q_{r,j}} = \frac{1}{{\sqrt {{M}} }}\left[ {\boldsymbol{\mathrm{a}}_{{M}}}({q_{r,j}} -  \frac{M_{\rm s}}{2} ),\ ...\ ,{\boldsymbol{\mathrm{a}}_{{M}}}({q_{r,j}}),\right.\\
\left. \ ...\ ,{\boldsymbol{\mathrm{a}}_{{M}}}({q_{r,j}} +  \frac{M_{\rm s}}{2} ) \right]^{H}.  
\end{aligned}
\end{equation}
Because \(M_{\rm s}\cdot N \ll M \), processing through the analog beamforming matrix \(\boldsymbol{\rm F}\) enables the dimensionality reduction of the original high-dimensional signal. According to the decoupling user groups \(\{ \mathcal{Q}_{n_{\rm s}}\}_{n_{\rm s}=1}^{N_{\rm s}}\), the analog combiner \(\boldsymbol{\rm F}\) can rewritten as
\begin{equation}
\tilde{\boldsymbol{\mathrm{F}}} = \left[ \tilde{\boldsymbol{\mathrm{F}}}_{1}^{T}, \tilde{\boldsymbol{\mathrm{F}}}_{2}^{T},\ ...\ , \tilde{\boldsymbol{\mathrm{F}}}_{N_{\rm s}}^{T} \right]^{T},  
\end{equation}
where the analog combiner \(\tilde{\boldsymbol{\mathrm{F}}}_{n_{\rm s}}\) is composed of all submatrices \( \boldsymbol{\mathrm{F}}_{q_{r,j}} \) which are relevant to the decoupling user group \(\mathcal{Q}_{n_{\rm s}}\) (i.e. \(j \in \mathcal{Q}_{n_{\rm s}}\)). Note that the duplicated angular indices in the matrix \(\tilde{\boldsymbol{\mathrm{F}}}_{n_{\rm s}}\) need to be eliminated. Taking an example (\(M_{\rm track}=3\), \(M_{\rm s}=0\)) shown in Fig. \ref{fig:decoupling_strategy}, the analog beamforming matrix \(\tilde{\boldsymbol{\rm F}}\) can be represented as
\begin{align}
& \tilde{\boldsymbol{\mathrm{F}}} = \left[ \tilde{\boldsymbol{\mathrm{F}}}_{1}^{T}, \tilde{\boldsymbol{\mathrm{F}}}_{2}^{T},\tilde{\boldsymbol{\mathrm{F}}}_{3}^{T} \right]^{T}, \\
& \mathcal{Q}_{1} : \tilde{\boldsymbol{\mathrm{F}}}_{1}^{T} = \left[ \boldsymbol{\mathrm{F}}_{1}^{T}, \boldsymbol{\mathrm{F}}_{6}^{T}, \boldsymbol{\mathrm{F}}_{7}^{T},
\boldsymbol{\mathrm{F}}_{14}^{T}\right]\\
& \mathcal{Q}_{2} : \tilde{\boldsymbol{\mathrm{F}}}_{2}^{T} = \left[ \boldsymbol{\mathrm{F}}_{8}^{T}, \boldsymbol{\mathrm{F}}_{13}^{T}, \boldsymbol{\mathrm{F}}_{17}^{T},
\boldsymbol{\mathrm{F}}_{22}^{T}\right]\\
& \mathcal{Q}_{3} : \tilde{\boldsymbol{\mathrm{F}}}_{1}^{T} = \left[ \boldsymbol{\mathrm{F}}_{10}^{T}, \boldsymbol{\mathrm{F}}_{18}^{T}\right].
\end{align}

Then, the original high-dimensional bilinear problem can be transformed into multiple lower-dimensional subproblems. The original signal 
\eqref{bilinear_problem} can be decomposed into \(N_{\rm s}\) lower-dimensional signals, which is given by
\begin{equation}\label{full_original_signal_subproblem}
{{\boldsymbol{\mathrm{Y}}}_{n_{\rm s}}} = {{\boldsymbol{\mathrm{H}}}_{n_{\rm s}}}{{\boldsymbol{\mathrm{X}}}_{n_{\rm s}}} + {{\boldsymbol{\mathrm{N}}}_{n_{\rm s}}},   \quad \forall n_{\rm s} \in \{ 1,2,\ ...\ ,N_{\rm s}\} 
\end{equation}
where \({{\boldsymbol{\mathrm{H}}}_{n_{\rm s}}} = \tilde{\boldsymbol{\mathrm{F}}}_{n_{\rm s}} \boldsymbol{\mathrm{H}}\), \({\boldsymbol{\mathrm{N}}}_{n_{\rm s}} = \tilde{\boldsymbol{\mathrm{F}}}_{n_{\rm s}} \boldsymbol{\mathrm{N}} \) is a Gaussian random matrix whose elements follow i.i.d. complex Gaussian distribution \(\mathcal{CN}(0, \sigma^{2}_{\rm n} )\) (\(\sigma_{\rm n}^2= \sigma_{\rm w}^2\)), and \({{\boldsymbol{\mathrm{X}}}_{n_{\rm s}}} = [\boldsymbol{\rm X}_{{\rm p,}{n_{\rm s}}},\boldsymbol{\rm X}_{{\rm d,}{n_{\rm s}}}] \) is composed of symbol vectors of users which are in the set \(\mathcal{Q}_{n_{\rm s}}\). Then, the unknown parameters of the prior distribution can be estimated using the expectation maximization method, and the lower-dimensional bilinear subproblems can be parallelly processed by the decoupled EM-BiGAMP algorithm. The specific steps of the decoupled EM-BiGAMP algorithm for joint channel estimation and data recovery are as follows:

\subsubsection{\(\boldsymbol{\rm Z}^{(t)}\), \(\boldsymbol{\rm X}^{(t+1)}\) and \(\boldsymbol{\rm H}^{(t+1)}\) Estimation Step}
For \(n_{\rm s}\)th subproblem, \(\boldsymbol{\rm Z}^{(t)}\) is initially estimated. The elements of matrices \({{\boldsymbol{\mathrm{Y}}}_{n_{\rm s}}} \), \({{\boldsymbol{\mathrm{H}}}_{n_{\rm s}}} \), \({{\boldsymbol{\mathrm{X}}}_{{n_{\rm s}}}} \) and \({{\boldsymbol{\mathrm{Z}}}_{n_{\rm s}}} = {{\boldsymbol{\mathrm{H}}}_{n_{\rm s}}}{{\boldsymbol{\mathrm{X}}}_{{n_{\rm s}}}}\) can be respectively denoted as \(y_{m,k}\), \(h_{m,n}\), \(x_{n,k}\) and \(z_{m,k}\). For \(t\)th iteration, \(\bar p_{m,k}^{(t)}\) and \(\bar v_{m,k}^{p,(t)}\) are used to record the mean and variance of the prior distribution of \(z_{m,k}\) respectively. \(\hat p_{m,k}^{(t)}\) and \( v_{m,k}^{p,(t)}\) apply “Onsager” correction terms \cite{montanari2012}. \( \hat z_{m,k}^{(t)} \) and \(  v_{m,k}^{z,(t)} \) are the conditional mean and variance of the posterior PDF \(P_{z_{m,k}|y_{m,k},p_{m,k}}(z_{m,k}|y_{m,k},\hat p_{m,k}^{(t)},v_{m,k}^{p,(t)} ) = \frac{1}{C_{\rm z}} \cdot \mathcal{CN}({y_{m,k}};{z_{m,k}},{\sigma^2_{\rm n}}) \cdot \mathcal{CN}({z_{m,k}};\hat p_{m,k}^{(t)},v_{m,k}^{p,(t)}) \) (\(C_{\rm z}\) is a normalization parameter) for \(t\)th iteration. The mean and variance of the product of two Gaussians PDF can be obtained as
\begin{flalign}\label{algorithm_1_use}
& \hat z_{m,k}^{(t)} = \frac{{y_{m,k}}{v_{m,k}^{p,(t)}}+{\hat p_{m,k}^{(t)}}{\sigma_{\rm n}^2}}{v_{m,k}^{p,(t)} + \sigma_{\rm n}^2},\\
& v_{m,k}^{z,(t)} =  \frac{{v_{m,k}^{p,(t)}}{\sigma_{\rm n}^2}}{v_{m,k}^{p,(t)} + \sigma_{\rm n}^2}. \label{algorithm_2_use}
\end{flalign}

The messages flowing into the nodes to be estimated (i.e. \(x_{n,k}\) and \(h_{m,n}\)) can be linearly approximated using Taylor expansions, and its coefficients can be computed by \(\hat s_{m,k}^{(t)}\) and \(v_{m,n}^{s,(t)}\). For \(t\)th iteration, the posterior PDF of \(x_{m,k}\) and \(h_{m,n}\) given \(y_{m,k}\) can be given by \( P_{x_{n,k}|\boldsymbol{\rm Y}_{n_{\rm s}} ,r_{n,k}}(x_{n,k}|\boldsymbol{\rm Y}_{n_{\rm s}},\hat r_{n,k}^{(t)},v_{n,k}^{r,(t)} ) =  \frac{1}{C_{\rm x}}\cdot{P_{{x_{n,k}}}}({x_{n,k}})\cdot\mathcal{CN}({x_{n,k}};\hat r_{n,k}^{(t)},v_{n,k}^{r,(t)}) \) and \(P_{h_{m,n}|\boldsymbol{\rm Y}_{n_{\rm s}},q_{m,n}}(h_{m,n}|\boldsymbol{\rm Y}_{n_{\rm s}},\hat q_{m,n}^{(t)},v_{m,n}^{q,(t)} ) = \frac{1}{C_{\rm h}} \cdot {P_{h_{m,n}}}({h_{m,n}})\cdot\mathcal{CN}({h_{m,n}};\hat q_{m,n}^{(t)},v_{m,n}^{q,(t)}) \)(\(C_{\rm x}\) and \(C_{\rm h}\) are normalization parameters). Note that \(P_{x_{n,k}}(x_{n,k})\) and \(P_{h_{m,n}}(h_{m,n})\) are the prior PDF of \(x_{n,k}\) and \(h_{m,n}\). The PF symbols are known, and the DF symbols are encoded using Gaussian codebook, which is given by
\begin{equation}\label{X_G_Distribution}
\begin{aligned}
P_{x_{n,k}}(x_{n,k}) = 
\left\{
             \begin{array}{ll}
             \delta(x_{n,k} - \underline{x}_{n,k}), & \!\!\! \forall k \in \{1,2,\ ...\ ,K_{\rm p}\} \\
             \mathcal{CN}(x_{n,k} ;0, \sigma_{\rm x}^2), & \!\!\! \forall k \in \{K_{\rm p}+1,\ ...\ ,K\} \\
             \end{array}
\right.,
\end{aligned}
\end{equation}
where \(\{\underline{x}_{n,k}\}_{k=1}^{K_{\rm p}}\) are known PF symbols and \(\sigma_{\rm x}^2\) is the power of transmitted signals. Considering the sparsity of \(\boldsymbol{\mathrm{H}}_{n_{\rm s}}\), the elements of channel matrix \(\boldsymbol{\mathrm{H}}_{n_{\rm s}}\) can be modeled to follow independent Bernoulli-Gaussian (BG) distribution \cite{MoJianhua2018}
\begin{equation}\label{BG_Distribution}
P_{h_{m,n}}(h_{m,n}) = ( 1 - \lambda_{n} ){\delta(h_{m,n}) } + \lambda_{n} \cdot \mathcal{CN}( h_{m,n} ;0, \gamma_{n}),
\end{equation}
where \( \lambda_{n} \) represents the sparsity level of the \(n\)th user's channel vector (i.e. the \(n\)th column vector of the matrix \(\boldsymbol{\mathrm{H}}_{n_{\rm s}}\)), and \(\gamma_{n}\) is the variance of path gains. Note that parameters \( \lambda_{n} \) and \(\gamma_{n}\) are unknown and also require iterative estimation. Then, the posterior mean and variance of \(x_{n,k}\) and \(h_{m,n}\) can be computed as (\(\boldsymbol{\rm X}_{\rm p}\) is known and does not require estimation),
\begin{flalign}\label{algorithm_3_use}
& \hat x_{n,k}^{(t+1)} = \frac{{\hat r_{n,k}^{(t)}}{\sigma_{\rm x}^2}}{v_{n,k}^{r,(t)} + \sigma_{\rm x}^2}, \quad \forall k \in \{K_{\rm p}+1,\ ...\ ,K\}\\ 
& v_{n,k}^{x,(t+1)} =  \frac{{v_{n,k}^{r,(t)}}{\sigma_{\rm x}^2}}{v_{n,k}^{r,(t)} + \sigma_{\rm x}^2}, \quad \forall k \in \{K_{\rm p}+1,\ ...\ ,K\} \label{algorithm_4_use}
\end{flalign}
and
\begin{flalign}\label{algorithm_5_use}
& \hat h_{m,n}^{(t+1)} = \frac{{\alpha_{m,n}^{(t)}}{\hat q_{m,n}^{(t)}}{\gamma^{(t)}_n}}{v_{m,n}^{q,(t)} + \gamma^{(t)}_n},\\
& v_{m,n}^{h,(t+1)} = {\alpha_{m,n}^{(t)}}\left( \left| \frac{\hat h_{m,n}^{(t+1)}}{\alpha_{m,n}^{(t)}} \right|^2  + \frac{{\hat h_{m,n}^{(t+1)}}{v_{m,n}^{q,(t)}}}{{\alpha_{m,n}^{(t)}}{\hat q_{m,n}^{(t)}}}  \right) - \left| \hat h_{m,n}^{(t+1)}\right|^2 ,\label{algorithm_6_use}
\end{flalign}
where \(\alpha_{m,n}^{(t)}\) can be represented as
\begin{equation}
\alpha_{m,n}^{(t)} = \left(  1 + \frac{(1 - \lambda^{(t)}_{n} )\mathcal{CN}(0;\hat q_{m,n}^{(t)}, v_{m,n}^{q,(t)} )}{\lambda^{(t)}_{n} \cdot \mathcal{CN}(0;\hat q_{m,n}^{(t)}, \gamma_{n} + v_{m,n}^{q,(t)} ) } \right)^{-1}.
\end{equation}

\subsubsection{\({\lambda_{n} ^{(t + 1)}}\) and \({\gamma _n^{(t+1)}}\) Update Step}
The EM algorithm is employed to iteratively update parameters \(\lambda_{n}\) and \(\gamma_{n}\). For \(t\)th iteration, parameters \(\lambda_{n}^{(t+1)}\) and \(\gamma_{n}^{(t+1)}\) can be estimated as
\begin{equation}\label{mean_h_t}
\begin{aligned}
& {\lambda_{n} ^{(t + 1)}} = \arg \mathop {\max }\limits_{\lambda_{n}^{(t+1)}  \in [0,1]} \mathrm{E} \Big \{  \log P(\boldsymbol{\mathrm{Y}}_{{n_{\rm s}}},{{\boldsymbol{\mathrm{H}}}_{n_{\rm s}}},{{\boldsymbol{\mathrm{X}}}_{{n_{\rm s}}}}| \lambda^{(t+1)}_{n},\\ 
&   \{ \lambda_{\tilde{n}}^{(t)} \}_{\tilde{n}\neq n} ,\{ {\gamma _{\tilde{n}}^{(t)}}\}_{\tilde{n} = 0}^{{N_{{n_{\rm s}}}} - 1})
\left| {\boldsymbol{\mathrm{Y}}_{{n_{\rm s}}}, \{ \lambda_{\tilde{n}}^{(t)} \}_{\tilde{n} \neq n } ,\{ {\gamma_{\tilde{n}}^{(t)}}\} _{\tilde{n} = 0}^{{N_{{n_{\rm s}}}} - 1}} \right\},    
\end{aligned}
\end{equation}
\begin{equation}\label{var_h_t}
\begin{aligned}
& {\gamma_{n} ^{(t + 1)}} = \arg \!\!\!\! \mathop {\max }\limits_{\gamma_{n}^{(t+1)}  \in [0,+ \infty ]} \!\!\!\!\mathrm{E} \Big \{ \log P(\boldsymbol{\mathrm{Y}}_{{n_{\rm s}}},{{\boldsymbol{\mathrm{H}}}_{n_{\rm s}}},{{\boldsymbol{\mathrm{X}}}_{{n_{\rm s}}}}|\gamma_{n}^{(t+1)}, \\ 
&  \{ \lambda_{\tilde{n}}^{(t+1)}\}_{\tilde{n}=0}^{N_{n_{\rm s}}-1} ,{\{\gamma_{\tilde{n}}^{(t)}\}_{\tilde{n}\neq n}})
\left| {\boldsymbol{\mathrm{Y}}_{{n_{\rm s}}}, \{ \lambda_{\tilde{n}}^{(t+1)} \}_{\tilde{n}=0}^{N_{n_{\rm s}}-1} , {  \{ \gamma_{\tilde{n}}^{(t)} \}_{\tilde{n}\neq n }} } \right\}.    
\end{aligned}
\end{equation}
where \(N_{n_{\rm s}}\) is the number of users in the decoupling user group \(\mathcal{Q}_{n_{\rm s}}\). By eliminating some irrelevant random variables, the derivatives of equation \eqref{mean_h_t} and equation \eqref{var_h_t} are set to zero, we can compute the values of \({\lambda_{n} ^{(t + 1)}}\) and \({\gamma_{n} ^{(t + 1)}}\) as
\begin{equation}\label{algorithm_7_use}
\begin{aligned}
{\lambda_{n} ^{(t + 1)}} = \frac{1}{{{M_{{\rm r}, n_{\rm s}}}}}\sum\limits_{m = 0}^{M_{{\rm r}, n_{\rm s}} - 1} { {\alpha _{m,n}^{(t)}} }, 
\end{aligned}
\end{equation}
\begin{equation}\label{algorithm_8_use}
\begin{aligned}
{\gamma_{n} ^{(t + 1)}}  = \frac{1}{{\lambda_{n} ^{(t + 1)}}{M_{{\rm r}, n_{\rm s}}}}\sum\limits_{m = 0}^{M_{{\rm r}, n_{\rm s}} - 1} {\left( {v_{m,n}^{h,(t + 1)} + {{\left| {\hat h_{m,n}^{(t + 1)}} \right|}^2}} \right)} . 
\end{aligned}
\end{equation}
where \(M_{{\rm r}, n_{\rm s}}\) is the number of row vectors in matrix \(\boldsymbol{\rm H}_{n_{\rm s}}\).

\begin{table}
\centering
\caption{parameters for simulation analysis}\label{complexity_comparison}
\begin{tabular}{|| c | c | c||} 
\hline
\multirow{2}{*}{} & Original DF-based & PF-assisted JCD\\  
\multirow{2}{*}{} & Method \cite{Xiong2019,Wen2016} & Method \\
\hline\hline
\multirow{2}{*}{Multiplications
 } & \(10MN{K_{\rm d}}+9M{K_{\rm d}}\)  &   \(19  {M_{\rm r}}{K_{\rm d}}+16 {M_{\rm r}}\)  \\   
\multirow{2}{*}{} & \(+7N{K_{\rm d}}+16MN+N\)  &   \(+7{K_{\rm d}}+1\)  \\ \hline
\multirow{2}{*}{Additions
 } & \(10MN{K_{\rm d}}+6M{K_{\rm d}}\)  &   \(16 {M_{\rm r}}{K_{\rm d}}+ 8{M_{\rm r}}\)  \\   
\multirow{2}{*}{} & \(+4N{K_{\rm d}}+8MN\)  &  \(+4{K_{\rm d}}\)  \\ \hline
\end{tabular}
\label{TABLE1}
\end{table}

\subsection{Computational complexity}\label{section_4_4}
Compared with the joint estimation of \(\boldsymbol{\mathrm{H}}\) and \(\boldsymbol{\mathrm{X}}_{\rm d}\) using the original signal \eqref{original_signal}, the PF-assisted JCD method significantly reduces computational complexity. The whole procedure is provided in Algorithm \ref{ALGORITHM1}.

In Table \ref{complexity_comparison}, we assess the computational complexity of estimation methods in terms
of addition and multiplication operations for each iteration. Taking the original DF-based method \eqref{bilinear_problem} as an example, the computational complexity required for one iteration is approximately \(10MN{K_{\rm d}}+9M{K_{\rm d}}+7N{K_{\rm d}}+16MN+N\) multiplications and \(10MN{K_{\rm d}}+6M{K_{\rm d}}+4N{K_{\rm d}}+8MN\) additions. For the PF-assisted JCD method, assuming the original multi-path estimation problem can be decoupled into \(N\) single-user estimation subproblems and the search ranges of PC of different users do not overlap, the received uplink signal relevant to each user is reduced from \(M\) dimensions to \(M_{\rm r} = M_{\rm s}\cdot{M_{\rm track}}\) dimensions. For each subproblem, one iteration requires approximately \(19 {M_{\rm r}} {K_{\rm d}}+16{M_{\rm r}}+7{K_{\rm d}}+1\) multiplications and \(16 {M_{\rm r}} K_{\rm d}+ 8{M_{\rm r}} +4{K_{\rm d}}\) additions. All subproblems can be processed in parallel to reduce processing time. When \(M\) is large, the PF-assisted JCD method can effectively reduce the computational complexity and processing time.

\begin{algorithm}\label{ALGORITHM1}
 \SetAlgoNoLine 
 \caption{Decoupled EM-BIGAMP-based joint channel estimation and data recovery method}
  \KwIn{Observed signals \({{\boldsymbol{\mathrm{Y}}}_{{\rm d,}n_{\rm s}}}\) and PF symbol matrices \({\boldsymbol{\mathrm{X}}}_{{\rm p},{n_{\rm s}}},\ \forall {n_{\rm s}} \in \{ 1,2,\ ...\ ,N_{\rm s}\}\). }
  \KwOut{Estimated channel matrices \({\hat{\boldsymbol{\mathrm{H}}}}_{n_{\rm s}}\) and estimated DF symbol matrices \(\hat{\boldsymbol{\mathrm{X}}}_{{\rm d},{n_{\rm s}}}, \  \forall {n_{\rm s}} \in \{ 1,2,\ ...\ ,N_{\rm s}\} \). }
  \BlankLine
    \textbf{Initialize:}\\
    \(\forall m,n,k: z_{m,k}^{(0)} = 0, \hat{h}_{m,n}^{(1)} = 0, v_{m,n}^{h,(1)} =1, \hat{s}^{(0)}_{m,k}=0 , \lambda_{n}^{(1)} = 0.05, \sigma_{{\rm h},n}^{(1)}=0, t = 1 , \delta, T_{\rm max}\). \\
    \(\forall n, \forall k \in \{1,\ ...\ ,K_{\rm p}\}: x_{n,k}^{(1)}={\underline{x}_{n,k}}, v_{n,k}^{x,(1)} = 0 \). \\
    \(\forall n, \forall k \in \{K_{\rm p}+1,\ ...\ ,K\}: x_{n,k}^{(1)}=0, v_{n,k}^{x,(1)} = 1 \). \\
    \Repeat{${\left\| {{\hat{\boldsymbol{\mathrm{Z}}}_{{n_{\rm s}}}^{(t)}} - {{\hat{{\boldsymbol{\mathrm{Z}}}}}_{{n_{\rm s}}}^{(t-1)}}} \right\|} \le \delta  \left\| {{\hat{\boldsymbol{\mathrm{Z}}}_{{n_{\rm s}}}^{(t)}}} \right\| $ {\rm or} $\  t>T_{\rm max} $}
    {
       \(\forall n_{\rm s} :\) All subproblems can be processed in parallel.\\
       \textbf{Expectation phase:}\\
       \(\forall m,k :\) \(\bar v_{m,k}^{p,(t)} = {\sum\nolimits_n {\left| {\hat h_{m,n}^{(t)}} \right|} ^2}v_{n,k}^{x,(t)} + {\left| {\hat x_{n,k}^{(t)}} \right|^2}v_{m,n}^{h,(t)}\). \\
       \(\forall m,k :\) \(\bar p_{m,k}^{(t)} = \sum\nolimits_n {\hat h_{m,n}^{(t)}} \hat x_{n,k}^{(t)}\). \\
       \(\forall m,k :\) \(v_{m,k}^{p,(t)} = \bar v_{m,k}^{p,(t)} + \sum\nolimits_n {v_{m,n}^{h,(t)}v_{n,k}^{x,(t)}}. \) \\
       \(\forall m,k :\) \(\hat p_{m,k}^{(t)} = \bar p_{m,k}^{(t)} - \hat s_{m,k}^{(t - 1)}\bar v_{m,k}^{p,(t)}.  \)\\
       \(\forall m,k :\) Update \( \hat z_{m,k}^{(t)} \) and \(  v_{m,k}^{z,(t)} \) using \eqref{algorithm_1_use} and \eqref{algorithm_2_use}.\\
       \(\forall m,k :\) \(v_{m,k}^{s,(t)} = \frac{1}{{v_{m,k}^{p,(t)}}} - \frac{{v_{m,k}^{z,(t)}}}{{{{\left( {v_{m,k}^{p,(t)}} \right)}^2}}}\).\\
       \(\forall m,k :\) \(\hat s_{m,k}^{(t)} = \frac{{\hat z_{m,k}^{(t)} - \hat p_{m,k}^{(t)}}}{{v_{m,k}^{p,(t)}}}\).\\
       \(\forall n, \forall k \in \{K_{\rm p}+1,\ ...\ ,K \} :\)  \(v_{n,k}^{r,(t)} = {\left( {\sum\nolimits_m {{{\left| {\hat h_{m,n}^{(t)}} \right|}^2}v_{m,k}^{s,(t)}} } \right)^{ - 1}}\).\\
       \(\forall n, \forall k \in \{K_{\rm p}+1,\ ...\ ,K \} : \quad\quad\quad\quad\quad\quad\quad\)\(\hat r_{n,k}^{(t)} = \hat x_{n,k}^{(t)}(1 - v_{n,k}^{r,(t)}\sum\nolimits_m {v_{m,n}^{h,(t)}v_{m,k}^{s,(t)}} ) + v_{n,k}^{r,(t)}\sum\nolimits_m {{{(\hat h_{m,n}^{(t)})}^*}} \hat s_{m,k}^{(t)} \).\\
       \(\forall m,n :\) \(v_{m,n}^{q,(t)} = {\left( {\sum\nolimits_k {{{\left| {\hat x_{n,k}^{(t)}} \right|}^2}v_{m,k}^{s,(t)}} } \right)^{ - 1}}\). \\
       \(\forall m,n :\) \(\hat q_{m,n}^{(t)} = \hat h_{m,n}^{(t)}(1 - v_{m,n}^{q,(t)}\sum\nolimits_k {v_{n,k}^{x,(t)}v_{m,k}^{s,(t)}} ) + v_{m,n}^{q,(t)}\sum\nolimits_k {{{(\hat x_{n,k}^{(t)})}^*}\hat s_{m,k}^{(t)}} \). \\
       \(\forall n, \forall k \in \{1,\ ...\ ,K_{\rm p} \} :\)  Update \( \hat x_{n,k}^{(t+1)} \) and \( v_{n,k}^{x,(t+1)} \) by \( \hat x_{n,k}^{(t+1)} = \hat x_{n,k}^{(t)} \) and \( v_{n,k}^{x,(t+1)} = v_{n,k}^{x,(t)}\).\\
       \(\forall n, \forall k \in \{K_{\rm p}+1,\ ...\ ,K \} :\)  Update \( \hat x_{n,k}^{(t+1)} \) and \( v_{n,k}^{x,(t+1)} \) using \eqref{algorithm_3_use} and \eqref{algorithm_4_use}.\\
       \(\forall m,n :\) Update \( \hat h_{m,n}^{(t+1)} \) and \( v_{m,n}^{h,(t+1)} \) using \eqref{algorithm_5_use} and \eqref{algorithm_6_use}.\\
       \textbf{Maximization phase:}\\
       \(\forall n:\) Update \( \lambda_{n}^{(t+1)} \) using \eqref{algorithm_7_use}.\\
       \(\forall n:\) Update \( \gamma_{n}^{(t+1)} \) using \eqref{algorithm_8_use}.\\

       \textbf{Iteration index:} \\
       \(t= t+1\).\\

    }
    \bf{return} $ \hat{\boldsymbol{\mathrm{X}}}_{{\rm d},{n_{\rm s}}} $ and  $ \hat{\boldsymbol{\mathrm{H}}}_{n_{\rm s}},\   \forall {n_{\rm s}} \in \{ 1,2,\ ...\ ,N_{\rm s}\}. $\\

\end{algorithm}

\section{Performance analysis}\label{section_4}
In this section, we will analyze the performance of the PF-assisted JCD method and the impact of the two-stage estimation on estimation performance. First, consider the performance of joint channel estimation and data recovery for the bilinear problem \eqref{bilinear_problem}. In order to solve the bilinear problem \eqref{bilinear_problem}, the joint channel estimation and data recovery method aims to estimate the posterior means of \(\boldsymbol{\rm H}\) and \(\boldsymbol{\rm X}_{\rm d}\) using the knowledge of the observed signal \(\boldsymbol{\rm Y}\) and the prior PDF of \(\boldsymbol{\rm X}\) and \(\boldsymbol{\rm H}\) as
\begin{align}
\label{estimated_Xd_H_1} &\hat{\boldsymbol{\rm X}}_{\rm d} = \mathrm{E} \left\{   \boldsymbol{\rm X}_{\rm d} | \boldsymbol{\rm Y} \right\},\\
\label{estimated_Xd_H_2} &\hat{\boldsymbol{\rm H}} = \mathrm{E} \left\{   \boldsymbol{\rm H} | \boldsymbol{\rm Y} \right\}.
\end{align}
and mean square error (MSE) is used to evaluate the performance of the estimator, which can be given by
\begin{align}
\label{mse_Xd_H_1} &{\rm MSE}_{\boldsymbol{\rm X}_{\rm d}} = \mathrm{E} \left\{  \lVert \boldsymbol{\rm X}_{\rm d}  -  \hat{\boldsymbol{\rm X}}_{\rm d}  \rVert^{2}  \Big | \boldsymbol{\rm Y} \right\}, \\
\label{mse_Xd_H_2} &{\rm MSE}_{\boldsymbol{\rm H}} = \mathrm{E} \left\{  \lVert \boldsymbol{\rm H}  -  \hat{\boldsymbol{\rm H}}  \rVert^{2}  \Big | \boldsymbol{\rm Y} \right\}, 
\end{align}
Note that in the subsequent performance analysis, the elements of the symbols \(\boldsymbol{\rm X}_{\rm p}\) and \(\boldsymbol{\rm X}_{\rm d}\) follow the i.i.d. Gaussian distribution \eqref{X_G_Distribution}, and the elements of the equivalent channel \(\boldsymbol{\rm H}\) can be modeled to independently follow the BG distribution \eqref{BG_Distribution}.

Because of the coupling relationship between \(\boldsymbol{\rm H}\) and \(\boldsymbol{\rm X}\), it is infeasible to calculate \eqref{estimated_Xd_H_1} to \eqref{mse_Xd_H_2} directly. Inspired by \cite{Guo2005,Tanaka2002}, the asymptotic MSE performance of the joint estimation of \(\boldsymbol{\rm H}\) and \(\boldsymbol{\rm X}_{\rm d}\) can be obtained in a large system condition. A large system refers to a condition that the number of users tends to infinity, but the ratios of the number of users to other parameters remain fixed. Note that \(N \to \infty\), \(\frac{M}{N}=\alpha\), \(\frac{K_{\rm d}}{N}=\beta_{\rm d}\) and \(\frac{K_{\rm p}}{N}=\beta_{\rm p}\). With the help of the large deviation theory, the asymptotic MSE performance of the posterior mean estimators \(\hat{\boldsymbol{\rm X}}_{\rm d}\) and \(\hat{\boldsymbol{\rm H}}\) is obtained in Lemma \ref{Lemma 1}.

\begin{figure*}
\centering
\subfigure[Real channel]{\includegraphics[width=0.3\linewidth]{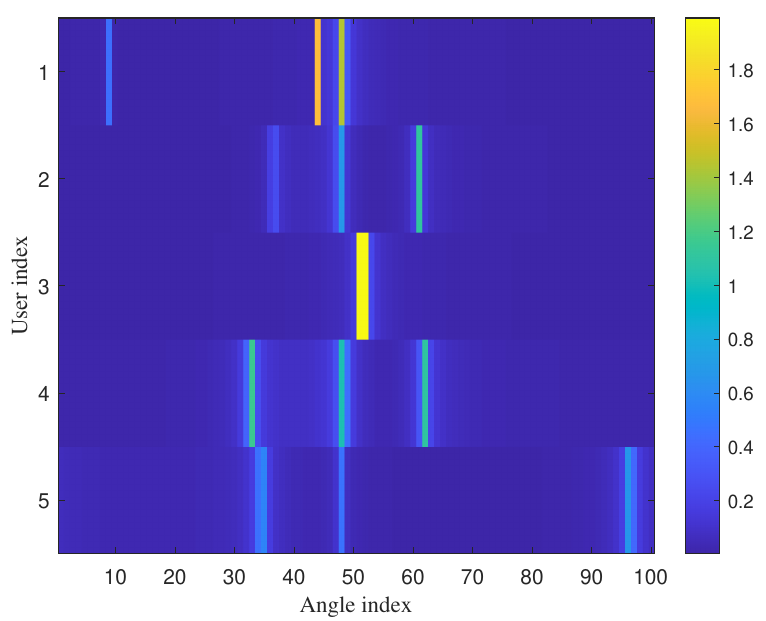}\label{fig:true_channel}}
\subfigure[Original DF-based method]{\includegraphics[width=0.3\linewidth]{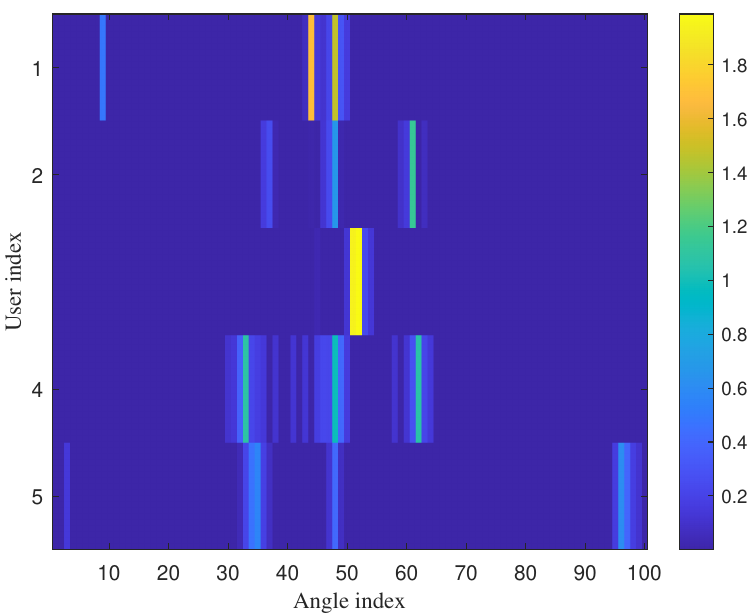}\label{fig:original_channel}}
\subfigure[PF-assisted JCD method (\(M_{\rm track}=1,M_{s}=10\))]{\includegraphics[width=0.3\linewidth]{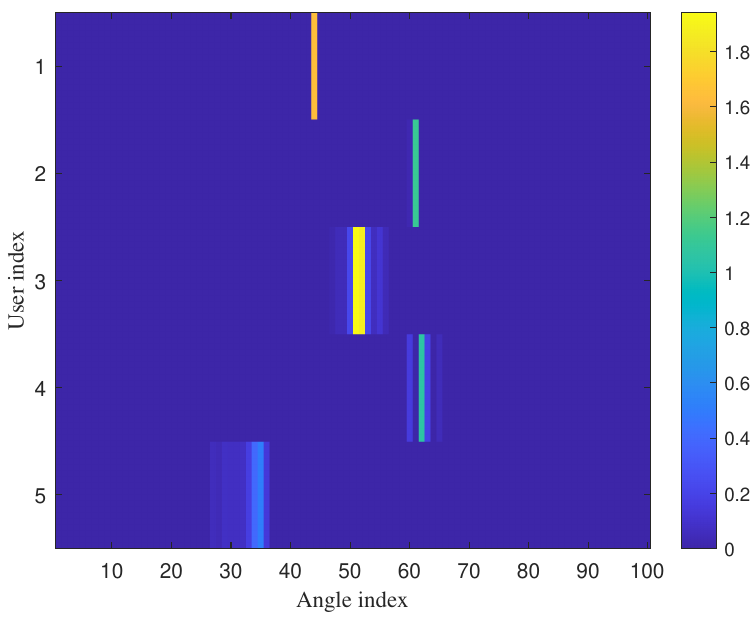}\label{fig:two-stage_channel}}
\caption{Comparison of the channel estimation of the original DF-based method, the PF-assisted JCD method and real channel with \(M=100\), \(N=5\) and \({\rm SNR}=0{\rm dB}\).}\label{fig:effect} 
\end{figure*}

Based on the results of Lemma \ref{Lemma 1}, considering a scenario where the number of antennas and symbols is much greater than the number of users (i.e. \(\alpha,\beta_{\rm d},\beta_{\rm p} \to \infty\)), the approximate expressions for \({\rm mse}_{{\boldsymbol{\rm X}}_{\rm d}}\) and \({\rm mse}_{{\boldsymbol{\rm H}}}\) can be obtained in Proposition \ref{Propostion 3}. Also, the effect of the two-stage estimation is discussed in Remark \ref{remark3}.

\begin{proposition}[The approximate expressions for \({\rm mse}_{{\boldsymbol{\rm X}}_{\rm d}}\) and \({\rm mse}_{{\boldsymbol{\rm H}}}\)]\label{Propostion 3}
A scenario where the communication system with a large number of antennas and symbols (and \(\alpha,\beta_{\rm d},\beta_{\rm p} \to \infty\)) is considered. With the knowledge of prior PDF of \(\boldsymbol{\rm X}\) and \(\boldsymbol{\rm H}\), using \eqref{Q_tilde_Q_1} to \eqref{Q_tilde_Q_6} in Lemma \ref{Lemma 1} and \eqref{x_d_mean_pro2} to \eqref{h_mse_pro2} in Lemma \ref{Lemma 2}, the MSE performance of \(\boldsymbol{\rm X}_{\rm d}\) and \(\boldsymbol{\rm H}\) can be approximated as
\begin{align}
\label{mse_x_d_pro3} & {\rm mse}_{\boldsymbol{\rm X}_{\rm d}}\approx \frac{\sigma_{\rm n}^2}{{\rm AE}_{\boldsymbol{\rm H}} - {\rm AMSE}_{\boldsymbol{\rm H}}  },\\
\label{mse_H_pro3} & {\rm mse}_{\boldsymbol{\rm H}} \approx \frac{\sigma_{\rm n}^2}{({\rm AE}_{\boldsymbol{\rm X}_{\rm d}} - {\rm AMSE}_{\boldsymbol{\rm X}_{\rm d}}) + {\rm AE}_{\boldsymbol{\rm X}_{\rm p}}  },
\end{align}
where \({\rm mse}_{\boldsymbol{\rm X}_{\rm d}}\) and \({\rm mse}_{\boldsymbol{\rm H}}\) represent the average MSE of individual elements of the matrices \(\boldsymbol{\rm X}_{\rm d}\) and \(\boldsymbol{\rm H}\), and \( \{ {\rm AE}_{\boldsymbol{\rm X}_{\rm p}},{\rm AE}_{\boldsymbol{\rm X}_{\rm d}},{\rm AE}_{\boldsymbol{\rm H}}, {\rm AMSE}_{\boldsymbol{\rm X}_{\rm d}} ,{\rm AMSE}_{\boldsymbol{\rm H}} \} \) are the key parameters affecting the performance of joint channel estimation and data recovery. Note that \({\rm AE}_{\boldsymbol{\rm H}} = \frac{{\rm E}\{\left\|  \boldsymbol{\rm H}  \right\|^2  \}}{N}\), \({\rm AE}_{\boldsymbol{\rm X}_{\rm d}} = \frac{{\rm E}\{\left\|  \boldsymbol{\rm X}_{\rm d} \right\|^2  \}}{N}\) and \({\rm AE}_{\boldsymbol{\rm X}_{\rm p}} = \frac{{\rm E}\{\left\|  \boldsymbol{\rm X}_{\rm p} \right\|^2  \}}{N}\) are the average energy (AE) of the channel matrix \(\boldsymbol{\rm H}\) and symbol matrix \(\boldsymbol{\rm X}_{\rm d}, \boldsymbol{\rm X}_{\rm p}\) per user. \({\rm AMSE}_{\boldsymbol{\rm H}} = \frac{{\rm E}\{\left\|  \boldsymbol{\rm H} - \hat{\boldsymbol{\rm H}}  \right\|^2  \}}{N} = M \cdot {\rm mse}_{\boldsymbol{\rm H}} \) and \({\rm AMSE}_{\boldsymbol{\rm X}_{\rm d}} = \frac{{\rm E}\{\left\|  \boldsymbol{\rm X}_{\rm d} - \hat{\boldsymbol{\rm X}}_{\rm d} \right\|^2  \}}{N} = {K_{\rm d}} \cdot {\rm mse}_{\boldsymbol{\rm X}_{\rm d}}\) are the average MSE (AMSE) of the channel matrix \(\boldsymbol{\rm H}\) and symbol matrix \( \boldsymbol{\rm X}_{\rm d}\) per user.
\end{proposition}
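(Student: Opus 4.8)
The plan is to obtain \eqref{mse_x_d_pro3} and \eqref{mse_H_pro3} as the leading-order reduction, in the regime $\alpha,\beta_{\rm d},\beta_{\rm p}\to\infty$, of the coupled fixed-point system established in Lemma \ref{Lemma 1}. The starting point is the decoupling principle behind that lemma: under the large-system (large-deviation) analysis of \cite{Guo2005,Tanaka2002}, the matrix bilinear inference problem \eqref{bilinear_problem} separates asymptotically into scalar additive white Gaussian noise channels, one for each element $x_{n,k}$ of $\boldsymbol{\rm X}_{\rm d}$ and one for each element $h_{m,n}$ of $\boldsymbol{\rm H}$. The variance of each equivalent scalar channel is fixed self-consistently by the order parameters $Q,\tilde{Q}$ of \eqref{Q_tilde_Q_1} to \eqref{Q_tilde_Q_6}, and these variances encode, respectively, the quality of the current estimate of the complementary matrix. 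Lemma \ref{Lemma 2} then supplies the single-letter posterior-mean and per-element MMSE expressions \eqref{x_d_mean_pro2} to \eqref{h_mse_pro2} as explicit functions of those equivalent-channel variances. I would take both lemmas as given and track what survives as the three ratios diverge.

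First I would specialize the equivalent-channel variances to the limit $\alpha,\beta_{\rm d},\beta_{\rm p}\to\infty$. As the number of receive dimensions per user grows, the effective signal-to-noise ratio of each scalar channel diverges, so two simplifications occur: the Gaussian MMSE for a data symbol, of the generic form $\sigma_{\rm x}^2 v/(\sigma_{\rm x}^2+v)$, collapses to its effective noise variance $v$ since $v\to 0$; and the residual multiuser interference terms appearing in \eqref{Q_tilde_Q_1} to \eqref{Q_tilde_Q_6} become subdominant relative to the diagonal self-energy, so the matched-filter-type combining becomes interference-free. The outcome I expect is that the effective noise variance seen when estimating a column of $\boldsymbol{\rm X}_{\rm d}$ is $\sigma_{\rm n}^2$ divided by the per-user energy of the channel estimate $\hat{\boldsymbol{\rm H}}$, while the effective noise variance seen when estimating a column of $\boldsymbol{\rm H}$ is $\sigma_{\rm n}^2$ divided by the per-user symbol energy aggregated over all $K=K_{\rm p}+K_{\rm d}$ slots.

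Next I would convert these estimate energies into the quantities appearing in the statement by invoking the orthogonality property of the MMSE estimator. For the channel side, orthogonality gives ${\rm E}\{\|\hat{\boldsymbol{\rm H}}\|^2\}={\rm E}\{\|\boldsymbol{\rm H}\|^2\}-{\rm E}\{\|\boldsymbol{\rm H}-\hat{\boldsymbol{\rm H}}\|^2\}$, so the per-user estimate energy equals ${\rm AE}_{\boldsymbol{\rm H}}-{\rm AMSE}_{\boldsymbol{\rm H}}$, which substituted into the data-symbol effective variance yields \eqref{mse_x_d_pro3}. For the symbol side, the pilot block $\boldsymbol{\rm X}_{\rm p}$ is known and therefore contributes its full deterministic energy ${\rm AE}_{\boldsymbol{\rm X}_{\rm p}}$ with zero residual error, whereas the data block contributes ${\rm AE}_{\boldsymbol{\rm X}_{\rm d}}-{\rm AMSE}_{\boldsymbol{\rm X}_{\rm d}}$; because the equivalent scalar channel for $h_{m,n}$ accumulates independent contributions across all $K$ slots, these two energies add, giving the denominator of \eqref{mse_H_pro3}. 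The two expressions remain coupled, since ${\rm mse}_{\boldsymbol{\rm X}_{\rm d}}$ depends on ${\rm AMSE}_{\boldsymbol{\rm H}}=M\cdot{\rm mse}_{\boldsymbol{\rm H}}$ and conversely; as the proposition asserts the relations rather than their closed-form solution, I would stop once both relations are in place.

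The hardest part will be the asymptotic control in the second step: rigorously showing that the cross and interference terms in \eqref{Q_tilde_Q_1} to \eqref{Q_tilde_Q_6} are genuinely of lower order than the retained self-energy terms as $\alpha,\beta_{\rm d},\beta_{\rm p}\to\infty$, and that the collapse of the Gaussian MMSE to its effective noise variance is uniform enough that the self-consistent fixed point of the truncated system coincides, to leading order, with that of the full system. Care is also needed to justify that the pilot and data energies combine additively in the effective SNR for $\boldsymbol{\rm H}$ rather than through the more intricate coupling present at finite $\alpha,\beta$; I would verify this by isolating the pilot contribution, which carries zero variance under \eqref{X_G_Distribution}, directly inside the derivation of \eqref{h_mse_pro2} before passing to the limit.
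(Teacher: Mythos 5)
Your route is essentially the paper's route (Appendix \ref{App_C}): start from the scalar AWGN channels of Lemma \ref{Lemma 2} with SNRs \(\tilde q_{\boldsymbol{\rm X}_{\rm d}}\), \(\tilde q_{\boldsymbol{\rm H}}\); note that when \(\alpha,\beta_{\rm d},\beta_{\rm p}\to\infty\) these SNRs diverge so each per-element MMSE collapses to the reciprocal SNR; then substitute the fixed-point equations of Lemma \ref{Lemma 1}, dropping the residual term \(c_{\boldsymbol{\rm X}_{v}}c_{\boldsymbol{\rm H}}-q_{\boldsymbol{\rm X}_{v}}q_{\boldsymbol{\rm H}}\) in \(\chi_{v}\) against \(\sigma_{\rm n}^2/N\). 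Your ``MMSE orthogonality'' step is not an extra ingredient you need to prove: it is exactly \eqref{Q_tilde_Q_4}--\eqref{Q_tilde_Q_6} (\(q_{\boldsymbol{\rm H}}=c_{\boldsymbol{\rm H}}-{\rm mse}_{\boldsymbol{\rm H}}\), \(q_{\boldsymbol{\rm X}_{\rm p}}=c_{\boldsymbol{\rm X}_{\rm p}}\)), and the additive pilot/data combination you list as a point needing care is likewise already supplied by \eqref{Q_tilde_Q_1}, \(\tilde q_{\boldsymbol{\rm H}}=\beta_{\rm d}q_{\boldsymbol{\rm X}_{\rm d}}\chi_{\rm d}+\beta_{\rm p}q_{\boldsymbol{\rm X}_{\rm p}}\chi_{\rm p}\), so those two worries dissolve once the lemmas are taken as given, exactly as the paper does.

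The genuine gap is on the \(\boldsymbol{\rm H}\) side of \eqref{mse_H_pro3}. Your high-SNR collapse argument is stated, and is valid, only for the Gaussian symbol prior \eqref{X_G_Distribution}, where \({\rm mse}_{\boldsymbol{\rm X}_{\rm d}}=\sigma_{\rm x}^2/(1+\tilde q_{\boldsymbol{\rm X}_{\rm d}}\sigma_{\rm x}^2)\to 1/\tilde q_{\boldsymbol{\rm X}_{\rm d}}\); this fully covers \eqref{mse_x_d_pro3}. But the channel entries follow the Bernoulli--Gaussian prior \eqref{BG_Distribution}, so the posterior in the scalar channel \eqref{H_AGMN_pro_2} is a spike-plus-slab mixture, and the numerator of \eqref{mse_H_pro3} hinges on the limiting behavior of that mixture MMSE: if it behaves like \(1/\tilde q_{\boldsymbol{\rm H}}\) you get the stated \(\sigma_{\rm n}^2\), whereas naive sparsity reasoning (a fraction \(1-\lambda\) of the entries are exact zeros that a good estimator detects at high SNR) would instead suggest \(\lambda/\tilde q_{\boldsymbol{\rm H}}\), which would put a factor \(\lambda\) into \eqref{mse_H_pro3}. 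Resolving this constant is the one non-generic computation in the paper's proof: it writes the mixture posterior \eqref{app_C_post_pdf_h} with weights \(C_1\), \(C_2\), passes to the limits \eqref{C_1_approx} and \eqref{C_2_approx}, obtains \({\rm mse}_{\boldsymbol{\rm H}}\approx\bigl(\sigma_{\rm h}^2/(1+\tilde q_{\boldsymbol{\rm H}}\sigma_{\rm h}^2)\bigr)\bigl((\lambda+\tilde q_{\boldsymbol{\rm H}}\sigma_{\rm h}^2)/(1+\tilde q_{\boldsymbol{\rm H}}\sigma_{\rm h}^2)\bigr)\), and only then concludes \({\rm mse}_{\boldsymbol{\rm H}}\approx 1/\tilde q_{\boldsymbol{\rm H}}\). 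Your proposal waves this step through as a generic collapse, implicitly treating \(h\) as Gaussian; to close the argument you must carry out (or otherwise justify) the spike-plus-slab posterior computation that pins down which constant the sparse prior contributes.
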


\begin{proof}
The proof is given in Appendix \ref{App_C}.
\end{proof}

\begin{remark}[The effect of two-stage estimation]\label{remark3}
From \eqref{mse_x_d_pro3} and \eqref{mse_H_pro3}, the key parameters that impact MSE performance are provided. It is noted that the two-stage estimation method primarily affects the performance of data recovery. Focusing on partial multi-path components and narrowing the search range for the AoA of multi-path components reduces \({\rm AE}_{\boldsymbol{\rm H}}\). To reduce the performance loss of data recovery, the primary principle of signal dimension reduction is to retain the AoA search range containing the maximum signal energy. The sparsity of the mmWave channel matrix \(\boldsymbol{\rm H}\) in the angular domain makes it feasible to utilize the PF-assisted JCD method for retaining the PC of channels, which contain a significant portion of the signal energy, through low-dimensional signals. 

Multi-user decoupling has a relatively minor impact on the performance of data recovery. When the difference in signal energy among all users is small, the decoupling of the original estimation problem does not lead to a significant change in \({\rm mse}_{\boldsymbol{\rm X}_{\rm d}}\). Thus, the proposed two-stage estimation method should decouple the original estimation problem into as many subproblems as possible to accelerate the solution of the multi-user estimation problem through parallel processing.    
\end{remark}

\begin{lemma}[Asymptotic MSE performance of the posterior mean estimators in the large system]\label{Lemma 1}
Under the condition of a large system, we can analogize the communication system to a thermodynamic system, and utilize the methods used to study thermodynamic systems to research communication systems \cite{Guo2005,Tanaka2002}. By employing the replica method and the large deviation theory, the asymptotic average free energy of this system can be given by
\begin{equation}\label{Free_energy_Pro_1}
\mathcal{F}{\rm{ = }} \mathop {\lim }\limits_{\gamma  \to 0} \frac{\partial }{{\partial \gamma }} \mathop {\rm Sadd} \limits_{\boldsymbol{\rm q}} \left\{ \mathcal{F}^{\gamma}(\boldsymbol{\rm q}) \right\},
\end{equation}
where \(\mathcal{F}^{\gamma}(\boldsymbol{\rm q})\) is the free energy obtained by the replica method, \(\mathop {\rm Sadd} \limits_{\boldsymbol{\rm x}}\{f(\boldsymbol{\rm x})\}\) represents the saddle point of \(f(\boldsymbol{\rm x})\) with respect to \(\boldsymbol{\rm x}\), and \(\boldsymbol{\rm q} = \{ q_{\boldsymbol{\rm X}_{\rm p}}, q_{\boldsymbol{\rm X}_{\rm d}}, q_{\boldsymbol{\rm H}}, \tilde{q}_{\boldsymbol{\rm X}_{\rm p}}, \tilde{q}_{\boldsymbol{\rm X}_{\rm d}}, \tilde{q}_{\boldsymbol{\rm H}} \}\) are the parameters which are associated with the MSE performance of estimated matrices \(\hat{\boldsymbol{\rm X}}_{\rm d}\) and \(\hat{\boldsymbol{\rm H}}\). Note that the parameters mentioned above are independent of \(m\), \(n\) and \(k\) under the large system condition. By finding the point of zero gradient of the free energy, the parameters \(\{ q_{\boldsymbol{\rm X}_{\rm p}}, q_{\boldsymbol{\rm X}_{\rm d}}, q_{\boldsymbol{\rm H}}, \tilde{q}_{\boldsymbol{\rm X}_{\rm p}}, \tilde{q}_{\boldsymbol{\rm X}_{\rm d}}, \tilde{q}_{\boldsymbol{\rm H}} \}\) can be obtained by solving the following equations
\begin{align}
\label{Q_tilde_Q_1} &\tilde{q}_{\boldsymbol{\rm H}} = \beta_{\rm d} q_{\boldsymbol{\rm X}_{\rm d}} {\chi }_{\rm d}+ \beta_{\rm p} q_{\boldsymbol{\rm X}_{\rm p}} {\chi }_{\rm p},\\
\label{Q_tilde_Q_2} &\tilde{q}_{\boldsymbol{\rm X}_{\rm d}} = \alpha q_{\boldsymbol{\rm H}} {\chi }_{\rm d},\\
\label{Q_tilde_Q_3} &\tilde{q}_{\boldsymbol{\rm X}_{\rm p}} = \alpha q_{\boldsymbol{\rm H}} {\chi }_{\rm p},\\
\label{Q_tilde_Q_4} &q_{\boldsymbol{\rm H}} = c_{\boldsymbol{\rm H}} - {\rm mse}_{\boldsymbol{\rm H}},\\
\label{Q_tilde_Q_5} &q_{\boldsymbol{\rm X}_{\rm d}} = c_{\boldsymbol{\rm X}_{\rm d}} - {\rm mse}_{\boldsymbol{\rm X}_{\rm d}},\\
\label{Q_tilde_Q_6} &q_{\boldsymbol{\rm X}_{\rm p}} = c_{\boldsymbol{\rm X}_{\rm p}},
\end{align}
where 
\begin{equation}
{\chi }_{\rm p} = \frac{1}{\frac{\sigma_{\rm n}^{2}}{N}+c_{\boldsymbol{\rm X}_{\rm p}} c_{\boldsymbol{\rm H}} - q_{\boldsymbol{\rm X}_{\rm p}} q_{\boldsymbol{\rm H}} },
\end{equation}
and 
\begin{equation}
{\chi }_{\rm d} = \frac{1}{\frac{\sigma_{\rm n}^{2}}{N}+c_{\boldsymbol{\rm X}_{\rm d}} c_{\boldsymbol{\rm H}} - q_{\boldsymbol{\rm X}_{\rm d}} q_{\boldsymbol{\rm H}} }.   
\end{equation}
Note that \(c_{\boldsymbol{\rm X}_{\rm p}} = c_{\boldsymbol{\rm X}_{\rm d}} = \frac{{\rm E}\{\|\boldsymbol{\rm X}\|^2\}}{NK}\) and \(c_{\boldsymbol{\rm H}} = \frac{{\rm E}\{\|\boldsymbol{\rm H}\|^2\}}{MN} \) are the average power of individual elements of the channel matrix \(\boldsymbol{\rm H}\) and symbol matrix \(\boldsymbol{\rm X}_{\rm d}\) respectively. The relationship between MSE performance \({\rm mse}_{\boldsymbol{\rm X}_{\rm d}}\), \({\rm mse}_{\boldsymbol{\rm H}}\) and parameters \(\{ q_{\boldsymbol{\rm X}_{\rm p}}, q_{\boldsymbol{\rm X}_{\rm d}}, q_{\boldsymbol{\rm H}}, \tilde{q}_{\boldsymbol{\rm X}_{\rm p}}, \tilde{q}_{\boldsymbol{\rm X}_{\rm d}}, \tilde{q}_{\boldsymbol{\rm H}} \}\) is provided in Lemma \ref{Lemma 2}.
\end{lemma}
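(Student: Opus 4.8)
The plan is to derive the fixed-point system \eqref{Q_tilde_Q_1}--\eqref{Q_tilde_Q_6} through the replica method of statistical physics, treating the posterior of \((\boldsymbol{\rm H},\boldsymbol{\rm X})\) given \(\boldsymbol{\rm Y}\) as a Boltzmann distribution and computing the associated free-energy density. First I would write the partition function \(Z(\boldsymbol{\rm Y}) = \int P(\boldsymbol{\rm Y}\mid \boldsymbol{\rm H},\boldsymbol{\rm X})\,P(\boldsymbol{\rm H})\,P(\boldsymbol{\rm X})\,{\rm d}\boldsymbol{\rm H}\,{\rm d}\boldsymbol{\rm X}\), where the likelihood is the i.i.d. Gaussian noise channel with variance \(\sigma_{\rm n}^2\) and the priors are the BG prior \eqref{BG_Distribution} for \(\boldsymbol{\rm H}\) and the Gaussian/delta prior \eqref{X_G_Distribution} for \(\boldsymbol{\rm X}\). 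The target free energy is \(\mathcal{F}=-\lim_{N\to\infty}\frac{1}{N}{\rm E}_{\boldsymbol{\rm Y}}\{\ln Z\}\), and the overlaps \(q_{\boldsymbol{\rm H}},q_{\boldsymbol{\rm X}_{\rm d}},q_{\boldsymbol{\rm X}_{\rm p}}\) — hence the MSEs via \eqref{Q_tilde_Q_4}--\eqref{Q_tilde_Q_6} — are recovered as derivatives of \(\mathcal{F}\) at the saddle point.

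Because the quenched logarithm cannot be averaged directly, I would invoke the replica identity in the form stated in \eqref{Free_energy_Pro_1}, replacing \({\rm E}\{\ln Z\}\) by \(\lim_{\gamma\to0}\frac{\partial}{\partial\gamma}\ln{\rm E}\{Z^{\gamma}\}\) and introducing \(\gamma\) replicas of the system. Using the Bayes-optimal (Nishimori) structure, the true pair that generated \(\boldsymbol{\rm Y}\) is one additional replica drawn from the same posterior. I would then perform the quenched average over the Gaussian noise \(\boldsymbol{\rm N}\) and over the true \(\boldsymbol{\rm H},\boldsymbol{\rm X}\); the bilinear term \(\boldsymbol{\rm H}\boldsymbol{\rm X}\) couples the replicas, and this coupling is decoupled by inserting the inter-replica overlaps of \(\boldsymbol{\rm H}\), of \(\boldsymbol{\rm X}_{\rm d}\), and of \(\boldsymbol{\rm X}_{\rm p}\) together with their conjugates \(\tilde q_{\boldsymbol{\rm H}},\tilde q_{\boldsymbol{\rm X}_{\rm d}},\tilde q_{\boldsymbol{\rm X}_{\rm p}}\) through Fourier/delta-function identities.

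Under the replica-symmetric ansatz and in the large-system regime \(N\to\infty\) with \(M/N=\alpha\), \(K_{\rm d}/N=\beta_{\rm d}\), \(K_{\rm p}/N=\beta_{\rm p}\) fixed, the replicated free energy \(\mathcal{F}^{\gamma}(\boldsymbol{\rm q})\) factorizes into single-letter scalar problems for each of \(\boldsymbol{\rm H}\), \(\boldsymbol{\rm X}_{\rm d}\), \(\boldsymbol{\rm X}_{\rm p}\), coupled only through the scalar order parameters \(\boldsymbol{\rm q}\). Evaluating the remaining integrals by the Laplace method and extremizing (\(\mathrm{Sadd}_{\boldsymbol{\rm q}}\)) yields the stationarity conditions \(\partial\mathcal{F}^{\gamma}/\partial\boldsymbol{\rm q}=0\). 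I expect the conjugate equations \eqref{Q_tilde_Q_1}--\eqref{Q_tilde_Q_3} to emerge from differentiating the energetic (noise-channel) part, with \(\chi_{\rm p}\) and \(\chi_{\rm d}\) arising as the inverse effective variances of the decoupled pilot and data sub-channels, i.e. the reciprocals of \(\tfrac{\sigma_{\rm n}^2}{N}+c_{\boldsymbol{\rm X}}c_{\boldsymbol{\rm H}}-q_{\boldsymbol{\rm X}}q_{\boldsymbol{\rm H}}\); the overlap equations \eqref{Q_tilde_Q_4}--\eqref{Q_tilde_Q_6} should follow from the entropic (prior) part, the single-letter BG/Gaussian posteriors supplying the \({\rm mse}\) terms, with \(q_{\boldsymbol{\rm X}_{\rm p}}=c_{\boldsymbol{\rm X}_{\rm p}}\) reflecting that the pilots are known and carry no estimation error.

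The hard part will be handling the bilinear coupling. Unlike the classical linear CDMA replica analyses of \cite{Guo2005,Tanaka2002}, here both factors \(\boldsymbol{\rm H}\) and \(\boldsymbol{\rm X}\) are unknown random matrices, so the disorder average of \(\boldsymbol{\rm H}\boldsymbol{\rm X}\) over the replicas does not collapse to a single Gaussian overlap but requires tracking overlaps for \(\boldsymbol{\rm H}\) and for \(\boldsymbol{\rm X}\) simultaneously, with their effective single-letter channels mutually coupled through \(\chi_{\rm p}\) and \(\chi_{\rm d}\). Correctly separating the pilot and data contributions in the saddle-point equations — and arguing that the replica-symmetric saddle is the dominant one so that the interchange of the \(N\to\infty\) and \(\gamma\to0\) limits is legitimate — is where the technical effort concentrates; the remaining single-letter Gaussian and BG integrals are routine and close the system together with Lemma \ref{Lemma 2}.
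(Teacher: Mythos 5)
Your proposal follows essentially the same route as the paper's Appendix A: the replica identity \eqref{Free_energy_Pro_1}, overlap order parameters for \(\boldsymbol{\rm H}\), \(\boldsymbol{\rm X}_{\rm p}\), \(\boldsymbol{\rm X}_{\rm d}\) with their conjugates, Gaussian (CLT) decoupling of the bilinear term \(\boldsymbol{\rm H}\boldsymbol{\rm X}\) so that the replica covariance is the product of the \(\boldsymbol{\rm H}\)- and \(\boldsymbol{\rm X}\)-overlaps, a replica-symmetric ansatz, and extremization of the resulting single-letter free energy to obtain \eqref{Q_tilde_Q_1}--\eqref{Q_tilde_Q_6}. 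The only cosmetic deviations are your \(1/N\) normalization (the bilinear problem has order-\(N^2\) degrees of freedom, so the paper uses \(1/N^2\)) and your use of Fourier/delta identities where the paper invokes the large-deviation theorem with moment generating functions; neither changes the argument.
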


\begin{proof}
The proof is given in Appendix \ref{App_A}.
\end{proof}

\begin{lemma}[The calculation of \({\rm mse}_{\boldsymbol{\rm X}_{\rm d}}\) and \({\rm mse}_{\boldsymbol{\rm H}}\)]\label{Lemma 2}
Directly computing \({\rm mse}_{\boldsymbol{\rm X}_{\rm d}}\) and \({\rm mse}_{\boldsymbol{\rm H}}\) is challenging, because the joint distribution of \(\{ \boldsymbol{\rm H},\boldsymbol{\rm X}_{\rm d}, \hat{\boldsymbol{\rm H}}, \hat{\boldsymbol{\rm X}}_{\rm d} \}\) is unavailable. With the help of the large system condition, the joint distribution of \(\{ \boldsymbol{\rm H},\boldsymbol{\rm X}_{\rm d}, \hat{\boldsymbol{\rm H}}, \hat{\boldsymbol{\rm X}}_{\rm d} \}\) can be decomposed into two independent parts: the joint distribution of \(\{ \boldsymbol{\rm H}, \hat{\boldsymbol{\rm H}} \}\) and the joint distribution of \(\{ \boldsymbol{\rm X}_{\rm d}, \hat{\boldsymbol{\rm X}}_{\rm d} \}\). For arbitrary elements \(x_{\rm d}\) and \(h\) of matrices \(\boldsymbol{\rm X}_{\rm d}\) and \(\boldsymbol{\rm H}\), the posterior estimations of \(x_{\rm d}\) and \(h\) based on signals \eqref{bilinear_problem} can be asymptotically approximated as posterior estimations in scalar Gaussian channels which can be modeled as 
\begin{align}
\label{X_d_AGMN_pro_2} & y_{x_{\rm d}} = \sqrt{\tilde{q}_{\boldsymbol{\rm X}_{\rm d}}} {x_{\rm d}} + w_{x_{\rm d}},\\
\label{H_AGMN_pro_2} & y_{h} = \sqrt{\tilde{q}_{\boldsymbol{\rm H}}} {h} + w_{h},
\end{align}
where \(w_{x_{\rm d}}\) and \(w_{x_{\rm d}}\) are the additive Gaussian noise with zero mean and unit variance. \( \sqrt{\tilde{q}_{\boldsymbol{\rm X}_{\rm d}} } \) and \( \sqrt{\tilde{q}_{\boldsymbol{\rm H}} } \) are the signal-to-noise ratios (SNR) of channels \eqref{X_d_AGMN_pro_2} and \eqref{H_AGMN_pro_2}. \(x_{\rm d}\) and \(h\) follow the original prior distributions \(P_{x_{\rm d}}\) and \(P_{h}\). Then, the posterior mean estimators of \(\hat{x}_{\rm d}\) and \(\hat{h}\) can be given by
\begin{align}
\label{x_d_mean_pro2} &\hat{x}_{\rm d} = \int{  {x}_{\rm d} \cdot P_{{x}_{\rm d}|y_{{x}_{\rm d}}}({x}_{\rm d}|y_{{x}_{\rm d}}) d{x}_{\rm d}},\\
\label{h_mean_pro2} &\hat{h} = \int{  {h} \cdot P_{{h}|y_{{h}}}({h}|y_{{h}}) d{h}},
\end{align}
where 
\begin{align}
\label{x_d_PP_pro2} & P_{{x}_{\rm d}|y_{{x}_{\rm d}}}({x}_{\rm d}|y_{{x}_{\rm d}}) = \frac{P_{y_{{x}_{\rm d}}|{x}_{\rm d}}(y_{{x}_{\rm d}}|{x}_{\rm d}) P_{{x}_{\rm d}}({x}_{\rm d}) }{ \int{P_{y_{{x}_{\rm d}}|{x}_{\rm d}}(y_{{x}_{\rm d}}|{x}_{\rm d}) P_{{x}_{\rm d}}({x}_{\rm d}){\rm d}{x_{\rm d}}}},\\
\label{h_PP_pro2} & P_{{h}|y_{{h}}}({h}|y_{{h}}) =  \frac{P_{y_{h}|h}(y_{h}|h) P_{h}(h) }{ \int{P_{y_{h}|h}(y_{h}|h) P_{h}(h){\rm d}{h}}},
\end{align}
are the posterior PDF of \(x_{\rm d}\) (and \(h\)) given \(y_{x_{\rm d}}\) (and \(y_{\rm h}\)). Finally, the asymptotic MSE of  \(x_{\rm d}\) and \(h\) can be given by
\begin{align}
\label{x_d_mse_pro2} & {\rm mse}_{{\boldsymbol{\rm X}}_{\rm d}} = \int{ (\hat{x}_{\rm d}-{x}_{\rm d})^2  \cdot P_{{x}_{\rm d}|y_{{x}_{\rm d}}}({x}_{\rm d}|y_{{x}_{\rm d}}) d{x}_{\rm d}},\\
\label{h_mse_pro2} & {\rm mse}_{{\boldsymbol{\rm H}}} = \int{  (\hat{h} - h ) \cdot P_{h|y_{h}}(h|y_{h}) dh},
\end{align}

\begin{proof}
The proof is given in Appendix \ref{App_B}.
\end{proof}

\end{lemma}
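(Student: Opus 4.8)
The plan is to derive the single-letter (scalar) characterization by exploiting the replica-symmetric structure of the free energy established in Lemma~\ref{Lemma 1}, and then to invoke the decoupling principle of \cite{Guo2005,Tanaka2002} to reduce the coupled bilinear inference problem to a pair of independent scalar Gaussian channels. I would proceed in four steps. First, I would return to the replica computation underlying Lemma~\ref{Lemma 1}. The partition function of the equivalent thermodynamic system is built from the posterior measure over $\{\boldsymbol{\rm H},\boldsymbol{\rm X}\}$, and the replica trick expresses the average free energy through $\gamma$ identical copies of the system. Under the replica-symmetric ansatz, the inter-replica overlaps collapse onto the scalar order parameters $\{ q_{\boldsymbol{\rm H}}, q_{\boldsymbol{\rm X}_{\rm d}}, q_{\boldsymbol{\rm X}_{\rm p}} \}$ together with their conjugates $\{ \tilde{q}_{\boldsymbol{\rm H}}, \tilde{q}_{\boldsymbol{\rm X}_{\rm d}}, \tilde{q}_{\boldsymbol{\rm X}_{\rm p}} \}$, which are precisely the quantities fixed by the saddle-point equations \eqref{Q_tilde_Q_1}--\eqref{Q_tilde_Q_6}.

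Second, I would establish the decoupling. The crucial observation is that, after applying a Hubbard--Stratonovich transformation to linearize the bilinear coupling $\boldsymbol{\rm H}\boldsymbol{\rm X}$ and evaluating the resulting Gaussian integrals at the saddle point, the joint site-measure over a generic pair $\{ h, x_{\rm d} \}$ factorizes in the large-system limit. The cross-correlation between the channel and data fluctuations vanishes because the conjugate order parameters enter the effective single-site Hamiltonian additively in $h$ and in $x_{\rm d}$ separately. This yields the claimed factorization of the joint law of $\{ \boldsymbol{\rm H},\boldsymbol{\rm X}_{\rm d},\hat{\boldsymbol{\rm H}},\hat{\boldsymbol{\rm X}}_{\rm d} \}$ into the two independent pieces $\{ \boldsymbol{\rm H},\hat{\boldsymbol{\rm H}} \}$ and $\{ \boldsymbol{\rm X}_{\rm d},\hat{\boldsymbol{\rm X}}_{\rm d} \}$.

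Third, I would identify each factor with a scalar Gaussian channel. Reading off the effective single-site measure, the posterior of a generic element $h$ (respectively $x_{\rm d}$) under the original bilinear model \eqref{bilinear_problem} coincides asymptotically with the posterior under the AWGN observation \eqref{H_AGMN_pro_2} (respectively \eqref{X_d_AGMN_pro_2}), where the effective SNR equals the conjugate order parameter $\tilde{q}_{\boldsymbol{\rm H}}$ (respectively $\tilde{q}_{\boldsymbol{\rm X}_{\rm d}}$) and the prior remains the original $P_h$ (the BG law) or $P_{x_{\rm d}}$ (the Gaussian law). Bayes' rule then gives the posterior PDFs \eqref{x_d_PP_pro2} and \eqref{h_PP_pro2}, whence the posterior-mean estimators \eqref{x_d_mean_pro2}, \eqref{h_mean_pro2} and the MSE expressions \eqref{x_d_mse_pro2}, \eqref{h_mse_pro2} follow as the standard minimum-mean-square-error quantities for these scalar channels. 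Substituting these scalar-channel MSEs back into \eqref{Q_tilde_Q_4}--\eqref{Q_tilde_Q_5} recovers the order parameters postulated in Lemma~\ref{Lemma 1}, confirming that the single-letter description is the self-consistent fixed point of the saddle-point system.

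The hard part will be justifying the decoupling and the replica-symmetric reduction: the factorization of the site-measure and the identification of $\tilde{q}_{\boldsymbol{\rm H}},\tilde{q}_{\boldsymbol{\rm X}_{\rm d}}$ with the effective SNRs rest on the non-rigorous replica heuristic, so the argument must lean on the decoupling principle of \cite{Guo2005,Tanaka2002} rather than a first-principles derivation. A secondary subtlety is treating the pilot and data columns of $\boldsymbol{\rm X}$ consistently, so that $\boldsymbol{\rm X}_{\rm p}$ (known symbols, zero estimation variance) contributes only through $\tilde{q}_{\boldsymbol{\rm X}_{\rm p}}$ while $\boldsymbol{\rm X}_{\rm d}$ carries the recovered-data MSE; verifying that this bookkeeping matches the two distinct $\chi$-terms ${\chi}_{\rm p}$ and ${\chi}_{\rm d}$ in Lemma~\ref{Lemma 1} is the place where an error would most easily creep in.
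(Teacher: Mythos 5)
Your overall strategy --- replica symmetry from Lemma \ref{Lemma 1}, then the Guo--Verd\'u/Tanaka decoupling principle to reduce the bilinear problem to the two scalar AWGN channels \eqref{X_d_AGMN_pro_2} and \eqref{H_AGMN_pro_2} --- is the same as the paper's, and your observation that the factorization comes from the conjugate order parameters acting additively (separately on the channel replicas and the data replicas) in the single-site Hamiltonian is indeed the mechanism behind the paper's key computation in Appendix \ref{App_B}. However, there is a genuine gap in how you treat the estimators. The lemma is a statement about the joint law of $\{h,\hat h, x_{\rm d},\hat x_{\rm d}\}$, and $\hat h$, $\hat x_{\rm d}$ are posterior means, i.e., functionals of the \emph{entire} observation $\boldsymbol{\rm Y}$; they do not possess a ``site measure'' that can simply be read off from the single-site effective Hamiltonian. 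The paper closes this gap in three steps that your proposal skips: (i) it inserts source terms $e^{\psi_{\rm h} g_{\rm h}\psi_{\rm x} g_{\rm x}}$ into the replicated average, following Lemma 1 of \cite{Guo2005}, and differentiates at $\psi_{\rm h}=\psi_{\rm x}=0$ as in \eqref{joint_moment_sum}, which turns joint moments of true values and replica values into derivatives of a free-energy-like functional; (ii) it shows in \eqref{proof_1} that this mixed derivative factorizes into an $h$-integral times an $x_{\rm d}$-integral (your additivity argument, but now applied to quantities that actually encode the statistics of interest), and identifies the replicas $\tilde h,\ \tilde x_{\rm d}$ with samples from the posterior (the retrochannel), so that moments involving $\hat h,\ \hat x_{\rm d}$ follow; (iii) it invokes Carleman's theorem to upgrade convergence of all joint moments to convergence of the joint distributions, which is what licenses the posterior-mean and MSE formulas \eqref{x_d_mean_pro2}--\eqref{h_mse_pro2}. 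Without (i)--(iii), ``reading off the effective single-site measure'' assumes precisely what the lemma asserts.

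A secondary correction: the Hubbard--Stratonovich-type identity is not what handles the bilinear coupling $\boldsymbol{\rm H}\boldsymbol{\rm X}$. In Appendix \ref{App_A} the product $\boldsymbol{\rm Z}=\boldsymbol{\rm H}\boldsymbol{\rm X}$ is treated by a conditional central-limit argument: given the overlap matrices, the replicated entries $z^{(a)}_{m,k}$ are modeled as jointly Gaussian with covariance $(\boldsymbol{\Sigma}_{v})_{a,b}=q_{\boldsymbol{\rm H},a,b}\,q_{\boldsymbol{\rm X}_{v},a,b}$; the Gaussian-integral identity is used afterwards, inside the moment generating functions, to decouple replica--replica interactions. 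Your self-consistency check (substituting the scalar-channel MSEs back into \eqref{Q_tilde_Q_4}--\eqref{Q_tilde_Q_5}) is consistent with how Lemmas \ref{Lemma 1} and \ref{Lemma 2} close the fixed-point loop, and the pilot/data bookkeeping you flag as risky is handled automatically in the paper by carrying $\boldsymbol{\rm Q}_{\boldsymbol{\rm X}_{\rm p}}$ and $\boldsymbol{\rm Q}_{\boldsymbol{\rm X}_{\rm d}}$ as separate order parameters, which is where the two susceptibilities $\chi_{\rm p}$ and $\chi_{\rm d}$ originate.
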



\section{Numerical results and discussion}\label{section_5}
In this section, the simulation results are provided for the proposed PF-assisted JCD method. The mmWave massive MIMO channels are modeled by the S-V model for simulation with 100 realizations \cite{Xiong2019}. The AoA of multi-path components are uniformly distributed within the interval \([0,  \pi ]\), the path gains of multi-path components of the \(n\)th user follow i.i.d Gaussian distribution \(\mathcal{CN}(0,1)\), and the number of paths of each user is set to 3 (i.e. \(L_1=L_2=...=L_{N_{\rm s}}=3\)). Each user ensures that the strongest multi-path component is not overwhelmed by noise (i.e. the SNR of the strongest multi-path components \(\ge 3{\rm dB}\)). The number of BS antennas \(M\) is 1000, the number of users \(N\) is 10 or 20, the transmitted signal contains 100 symbols (i.e. \(K=100\)), and \({\rm SNR} = 10 \log \frac{ \left\|  {\boldsymbol{\rm H}}{\boldsymbol{\rm X}}\right\|^2  }{\left\|  {\boldsymbol{\rm N}}\right\|^2}\). The parameters of the PF-assisted JCD method can be given by the number of tracked multi-path components \(M_{\rm track}=4\), the AoA search range of an individual multi-path component \(M_{\rm s} = 5\ {\rm or}\ 10\), and the false alarm probability \( \varepsilon = 10^{-5} \).

In Fig. \ref{Sim_Runtime_sub_1} and Fig. \ref{Sim_Runtime_sub_2}, we first compare the processing time of PF-based AMP method \cite{MoJianhua2018}, the original DF-based method \cite{Xiong2019,Wen2016} and the proposed PF-assisted JCD method under different conditions of \(M\), \(N\) and \(K_{\rm p}\). One sees that, compared with the original DF-based method and the PF-based AMP method, the PF-assisted JCD method significantly reduces processing time, especially when the number of antennas \(M\) and the number of users \(N\) are large. When \(M=1000\), \(N=20\), and \(K_{\rm p}=20\), the processing time of the PF-assisted JCD method is reduced by a factor of \(10^{2}\) compared to the original DF-based method. Moreover, the processing time of the PF-assisted JCD method is hardly affected by the increase in the number of antennas and users. It is because the strategy of narrowing the AoA search range to a finite interval \(M_{\rm track}\cdot M_{\rm s}\) restricts the processing time and computational complexity from increasing with the growth of the number of antennas \(M\), and the multi-user decoupling strategy reduces the processing time of the joint channel estimation and data recovery for \(N\) users to that of a single user (or a small number of users) through parallel processing. 

\begin{figure}
\centering
\includegraphics[width=\linewidth]{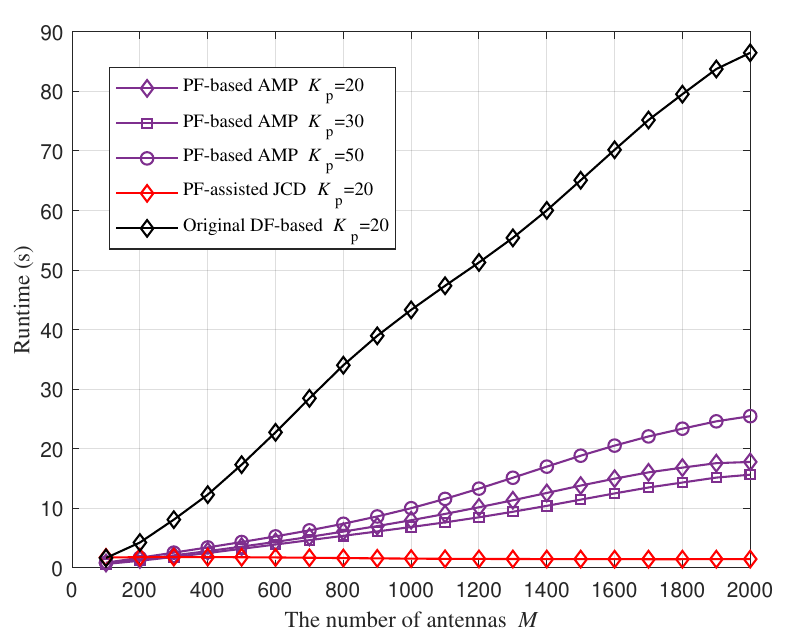}
\caption{Processing time of the PF-based AMP method, the original DF-based method and the PF-assisted JCD method at various \(M\) with \({\rm SNR} = 0 {\rm dB}\), \(\frac{M_{\rm s}}{M} = 0.01\), \(K = 100\) (note that \(K = K_{\rm d}+K_{\rm p}\)) and \(N=10\).}\label{Sim_Runtime_sub_1} 
\end{figure}

\begin{figure}
\centering
\includegraphics[width=\linewidth]{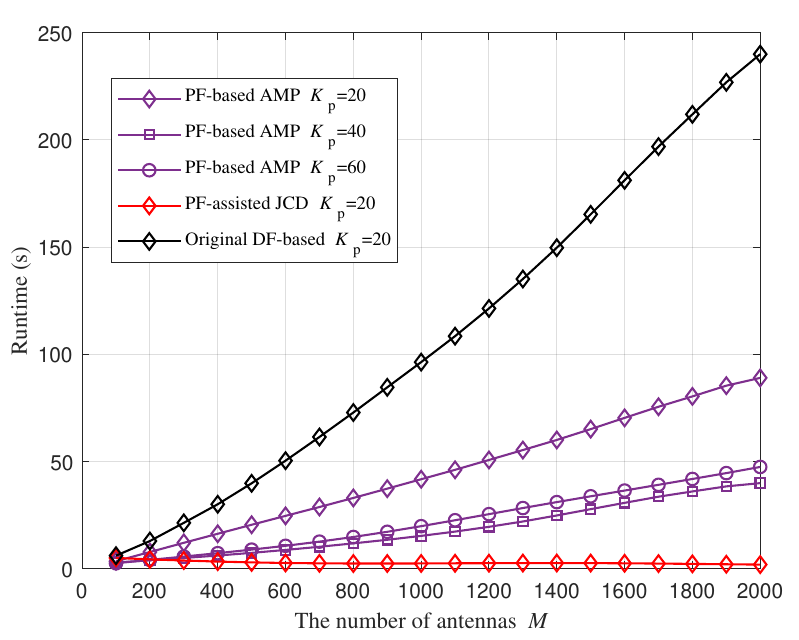}
\caption{Processing time of the PF-based AMP method, the original DF-based method and the PF-assisted JCD method at various \(M\) with \({\rm SNR} = 0 {\rm dB}\), \(\frac{M_{\rm s}}{M} = 0.005\), \(K = 100\) and \(N=20\).}\label{Sim_Runtime_sub_2} 
\end{figure}

\begin{figure}
\centering
\includegraphics[width=\linewidth]{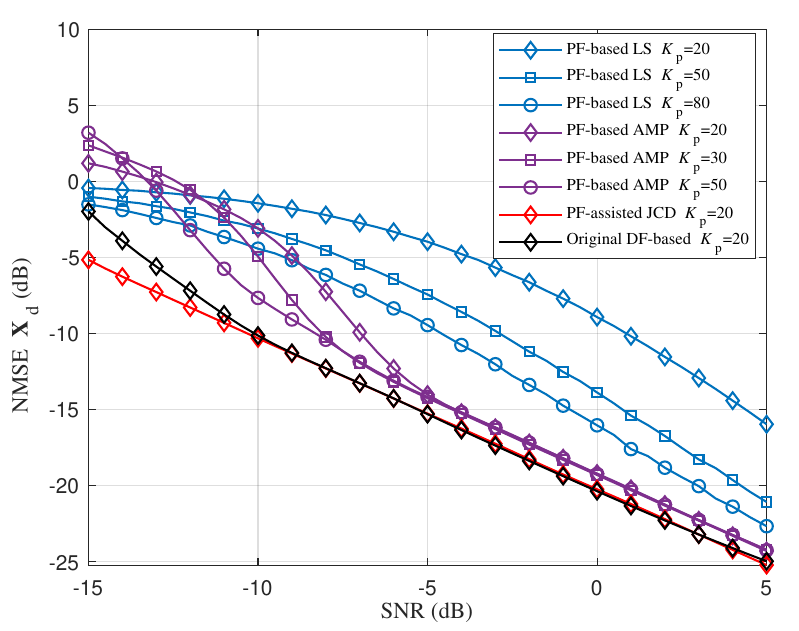}
\caption{NMSE \(\boldsymbol{{\rm X}}_{\rm d}\) performance of the traditional PF-based LS method (\(K_{\rm p} = 20,50,80\)), the PF-based AMP method (\(K_{\rm p}=20,30,50\)), the original DF-based method (\(K_{\rm p} = 20\)) and the PF-assisted JCD method (\(K_{\rm p} = 20\)) at various \(\rm{SNR}\) with \(N = 10\) and \(K = 100\).}\label{X_symbol_user=10} 
\end{figure}

\begin{figure}
\centering
\includegraphics[width=\linewidth]{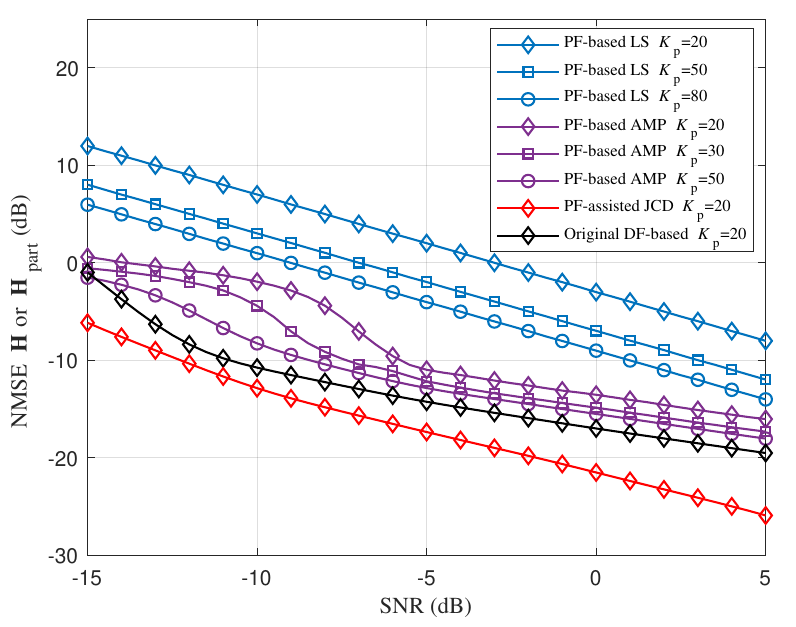}
\caption{NMSE \(\boldsymbol{{\rm H}}\) (or \(\boldsymbol{\rm H}_{\rm part}\)) performance of the traditional PF-based LS method (\(K_{\rm p} = 20,50,80\)), the PF-based AMP method (\(K_{\rm p}=20,30,50\)), the original DF-based method (\(K_{\rm p} = 20\)) and the PF-assisted JCD method (\(K_{\rm p} = 20\)) at various \(\rm{SNR}\) with \(N = 10\) and \(K = 100\).}\label{channel_err_user=10} 
\end{figure}

In Fig. \ref{X_symbol_user=10} to Fig. \ref{channel_err_user=20}, we analyze the performance of the traditional PF-based method (i.e. LS estimator), the PF-based AMP method, the original DF-based method and the PF-assisted JCD method. The normalized mean square error (NMSE) of \({\boldsymbol{\rm X}}_{\rm d}\) and \({\boldsymbol{\rm H}}\) is used to assess the performance of data recovery and channel estimation respectively, which can be given by 
\begin{equation}
{\rm NMSE}({\hat{\boldsymbol{\rm X}} }_{\rm d}) = \frac{{{{\left\| {{{\hat{\boldsymbol{\rm X}} }_{\rm d}}{\rm{ - }}{\boldsymbol{\rm X}_{\rm d}}} \right\|}}}}{{{{\left\| {{\boldsymbol{\rm X}_{\rm d}}} \right\|}}}},
\end{equation}
\begin{equation}
{\rm NMSE}({\hat{\boldsymbol{\rm H}} }) = \frac{{{{\left\| {{{\hat{\boldsymbol{\rm H}} }}{\rm{ - }}{\boldsymbol{\rm H}}} \right\|}}}}{{{{\left\| {{\boldsymbol{\rm H}}} \right\|}}}}.
\end{equation}

Compared to the traditional PF-based methods, the PF-assisted JCD method and the original DF-based method utilize DF for joint channel estimation and data recovery, thereby reducing a significant amount of PF overhead. In simulation, the PF-assisted JCD method and the original DF-based method demonstrate a notable performance gain over the traditional PF-based method in both channel estimation and data recovery. Compared to the original DF-based method, the PF-assisted JCD method reduces the dimensionality of the observed signals while preserving the information of the principal components of channels. The simulation results shows that the PF-assisted JCD method can achieve the same performance of data recovery as the the original DF-based method. Even in a low SNR regime, the PF-assisted JCD method exhibits greater robustness compared to the original DF-based method, owing to the denoising function obtained by the strategy of eliminating invalid search ranges. In summary, the PF-assisted JCD method is an excellent joint channel estimation and data recovery method, which offers an implementation solution with limited resources (including PF overhead, computational complexity and processing time) while achieving better performances than the original DF-based method in most cases.


\begin{figure}
\centering
\includegraphics[width=\linewidth]{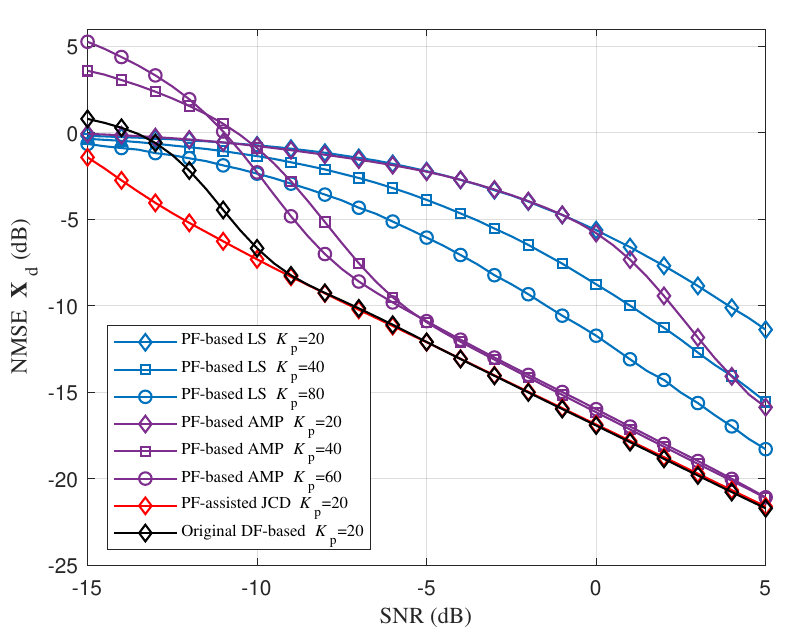}
\caption{NMSE \(\boldsymbol{{\rm X}}_{\rm d}\) performance of the traditional PF-based LS method (\(K_{\rm p} = 20,40,80\)), the PF-based AMP method (\(K_{\rm p}=20,40,60\)), the original DF-based method (\(K_{\rm p} = 20\)) and the PF-assisted JCD method (\(K_{\rm p} = 20\)) at various \(\rm{SNR}\) with \(N = 20\) and \(K = 100\).}\label{X_symbol_user=20} 
\end{figure}

\begin{figure}
\centering
\includegraphics[width=\linewidth]{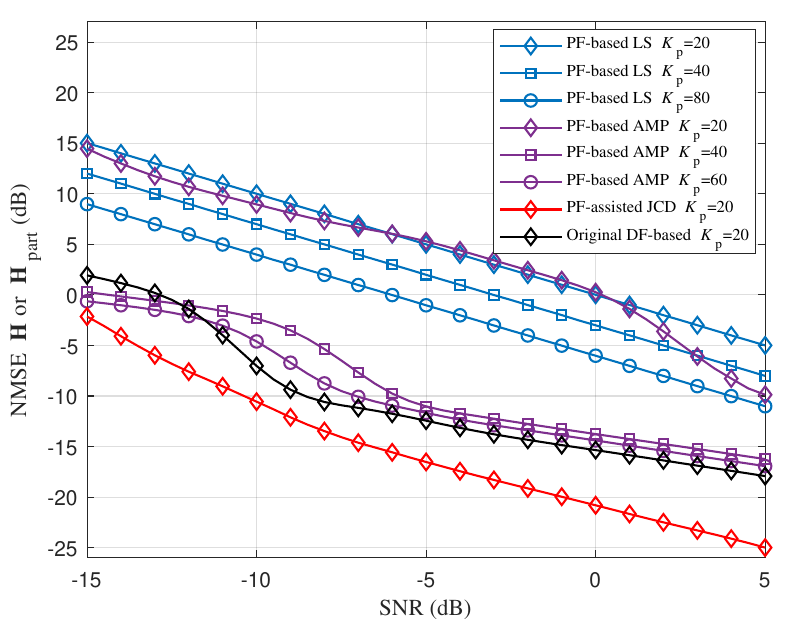}
\caption{NMSE \(\boldsymbol{{\rm H}}\) (or \(\boldsymbol{\rm H}_{\rm part}\)) performance of the traditional PF-based LS method (\(K_{\rm p} = 20,40,80\)), the PF-based AMP method (\(K_{\rm p}=20,40,60\)), the original DF-based method (\(K_{\rm p} = 20\)) and the PF-assisted JCD method (\(K_{\rm p} = 20\)) at various \(\rm{SNR}\) with \(N = 20\) and \(K = 100\).}\label{channel_err_user=20} 
\end{figure}

\section{Conclusion}\label{section_6}
In this paper, we have proposed a PF-assisted JCD method which achieved low computational complexity and low PF overhead. By combining the advantages of PF-based and DF-based methods, the proposed PF-assisted JCD method has initially utilized PF to acquire the AoA information of PC of channels, and then DF to achieve the joint estimation of the channel matrix \(\boldsymbol{\rm H}\) and symbol matrix \(\boldsymbol{\rm X}_{\rm d}\). The theoretical analysis has revealed that, in terms of estimation performance, the PF-assisted JCD method was proven not to cause performance loss in joint channel estimation and data recovery. In terms of computational complexity, the PF-assisted JCD method achieved a reduction of approximately \(\alpha_{\rm r} \approx \frac{M}{M_{\rm track}\cdot M_{\rm s}} \) times compared to the original DF-based method. The simulation results have shown that the PF-assisted JCD method performed comparably to the original DF-based method and even exhibited greater robustness in a low SNR regime. The processing time of the PF-assisted JCD method was approximately \(10^2\) times less than that of the original DF-based method.

\appendices
\section{The Proof of Lemma 1}\label{App_A}
For a many-body thermodynamic system, the microscopic state can be described by the matrices \(\boldsymbol{\rm X}\) and \(\boldsymbol{\rm H}\). The asymptotic average free energy of this system can be written as
\begin{equation}\label{free_energy_1}
\mathcal{F}{\rm{ = }}\mathop {{\rm{lim}}}\limits_{N \to \infty }\frac{{\rm{1}}}{{{N^{\rm{2}}}}}{{\rm E}_{\boldsymbol{\rm Y}}}{\rm{\{ }}{\log{P}_{\boldsymbol{\rm Y}}}{\rm{(}}{\boldsymbol{\rm Y}}{\rm{)\} }},
\end{equation}
where 
\begin{equation}
{{P}_{\boldsymbol{\rm Y}}}(\boldsymbol{\rm Y}) = \iint {{{P}_{\boldsymbol{\rm Y}|{\boldsymbol{\rm H}},\boldsymbol{\rm X}}}{\rm{(}}\boldsymbol{\rm Y}|\boldsymbol{\rm X},\boldsymbol{\rm H}{){P}_{\boldsymbol{\rm X}}(\boldsymbol{\rm X}){P}_{\boldsymbol{\rm H}}(\boldsymbol{\rm H}){\rm d}\boldsymbol{\rm X}{\rm d}\boldsymbol{\rm H}}}
\end{equation}
is the partition function. The expectation \({\rm E}_{\boldsymbol{\rm Y}}(\cdot)\) is taken over observed signals \(\boldsymbol{\rm Y}\) with the distribution \({{P}_{\boldsymbol{\rm Y}}}(\boldsymbol{\rm Y})\). 

The direct calculation of the free energy \eqref{free_energy_1} is hard. Using the replica method originally used in statistical physics, the free energy \eqref{free_energy_1} can be reformulated as\footnote{The calculation of the equation \eqref{free_energy_RM} is referenced from 
\begin{equation}
\mathop {\lim }\limits_{\gamma  \to 0} \frac{\partial }{{\partial \gamma }}{\rm E}{\rm{\{ }}{a^\gamma }{\rm{\}  = }}\mathop {\lim }\limits_{\gamma  \to 0} \frac{{{\rm E}\{ {a^\gamma }\log a\} }}{{{\rm E}\{ {a^\gamma }\} }} = {\rm E}\{ \log a\},
\end{equation}
where \(\gamma > 0 \).} 
\begin{equation}\label{free_energy_RM}
\mathcal{F}{\rm{ = }}\mathop {{\rm{lim}}}\limits_{N \to \infty } \frac{{\rm{1}}}{{{N^{\rm{2}}}}}\mathop {\lim }\limits_{\gamma  \to 0} \frac{\partial }{{\partial \gamma }}\log{{\rm E}_{\boldsymbol{\rm Y}}}{\{ P}_{\boldsymbol{\rm Y}}^\gamma {\rm{(}}{\boldsymbol{\rm Y}}{\rm{)\} }},
\end{equation}
where 
\begin{equation}\label{app_ey}
\begin{aligned}
& {{\rm E}_{\boldsymbol{\rm Y}}}{\{ P}_{\boldsymbol{\rm Y}}^\gamma {\rm{(}}{\boldsymbol{\rm Y}}{\rm{)\} }} =  \\
& \int {\prod\limits_{a = 0}^\gamma  {\left\{ {{\rm d}{\boldsymbol{\rm H}^{(a)}}{\rm d}{\boldsymbol{\rm X}^{(a)}} \cdot {P_{\boldsymbol{\rm Y}|\boldsymbol{\rm Z}}}(\boldsymbol{\rm Y}|{\boldsymbol{\rm Z}^{(a)}} = {\boldsymbol{\rm H}^{(a)}}{\boldsymbol{\rm X}^{(a)}}) } \right\} }}\\
& {{  \cdot \left\{ {\rm d}{\boldsymbol{\rm Y}} \cdot {P_{\boldsymbol{\rm H},\boldsymbol{\rm X}}}(\{ {\boldsymbol{\rm H}^{(a)}}\} _{a = 0}^\gamma ,\{ {\boldsymbol{\rm X}^{(a)}}\} _{a = 0}^\gamma ) \right\} } },    
\end{aligned}
\end{equation}
where \( \boldsymbol{\rm X}^{(0)} \) and \( \boldsymbol{\rm H}^{(0)} \) are the original channel matrix and symbol matrix whose elements independently follow the distribution \(P_{\boldsymbol{\rm X}}(\boldsymbol{\rm X})\) \eqref{X_G_Distribution} and \(P_{\boldsymbol{\rm H}}(\boldsymbol{\rm H})\) \eqref{BG_Distribution}. For simplicity, we assume that each user has the same parameters \(\lambda_{n}=\lambda\) and \(\gamma_{n}=\sigma_{{\rm h}}^2,\quad\forall n\in \{1,2,\ ...\ ,N\} \) in the BG distribution. The generalized version can be obtained by decomposing matrix \(\boldsymbol{\rm H}\) into column vectors \(\{\boldsymbol{\rm h}_{n}\}_{n=1}^{N}\), but this becomes too cumbersome. \( \{\boldsymbol{\rm X}^{(a)} \}_{a=1}^{\gamma} \) and \( \{ \boldsymbol{\rm H}^{(a)} \}_{a=1}^{\gamma} \) are the replicas of \( \boldsymbol{\rm X}^{(0)} \) and \( \boldsymbol{\rm H}^{(0)} \) generated by the retrochannels \cite{Guo2005}. The new matrices \(\boldsymbol{\rm Q}_{\boldsymbol{\rm H}} \in {{\mathbb{C}}^{(\gamma+1) \times (\gamma+1)}} \) and \( \{\boldsymbol{\rm Q}_{\boldsymbol{\rm X}_{v}}\} \in {{\mathbb{C}}^{(\gamma+1) \times (\gamma+1)}}, \forall v \in \{{\rm p},{\rm d}\} \) are introduced to describe the microstates \( \{\boldsymbol{\rm X}_{\rm p}^{(a)} \}_{a=0}^{\gamma} \), \( \{\boldsymbol{\rm X}_{\rm d}^{(a)} \}_{a=0}^{\gamma} \) and \( \{ \boldsymbol{\rm H}^{(a)} \}_{a=0}^{\gamma} \), which can be given by
\begin{equation}\label{app_qh}
q_{\boldsymbol{\rm H},a,b} = \frac{1}{{MN}}  \sum\limits_{m = 1}^M \sum\limits_{n = 1}^N {h_{m,n}^{(a)} 
\cdot {h_{m,n}^{(b)}}} ,
\end{equation}
\begin{equation}\label{app_qx}
q_{\boldsymbol{\rm X}_{\rm p},a,b} = \frac{1}{{NK_{\rm p}}}  \sum\limits_{n = 1}^N \sum\limits_{k = 1}^{K_{\rm p}} {x_{n,k}^{(a)} 
\cdot {x_{n,k}^{(b)}}} ,
\end{equation}
\begin{equation}\label{app_qxd}
q_{\boldsymbol{\rm X}_{\rm d},a,b} = \frac{1}{{NK_{\rm d}}}  \sum\limits_{n = 1}^N \sum\limits_{k = K_{\rm p}+1}^K {x_{n,k}^{(a)} 
\cdot {x_{n,k}^{(b)}}} .
\end{equation}
Using the central limit theorem, \(  \{ \boldsymbol{\rm Z}_{v}^{(a)}\}_{a=0}^{\gamma},\ \forall v \in \{{\rm p},{\rm d}\}\) can be modeled as multivariate Gaussian random variables
whose distribution is given by
\begin{equation}\label{app_qz}
\begin{aligned}
& P_{\boldsymbol{\rm Z}_{v}}(\{ {\boldsymbol{\rm Z}_{v}^{(a)}}\} _{a = 0}^\gamma|\boldsymbol{\rm Q}_{\boldsymbol{\rm H}},\boldsymbol{\rm Q}_{\boldsymbol{\rm X}_{v}}) = \\
&\prod\limits_{m,k} {\frac{1}{{\sqrt {{{(2\pi )}^{(\gamma  + 1)}}\det (\boldsymbol{\Sigma}_{v} )} }} \cdot } {e^{\left( { - \frac{1}{2}  {\boldsymbol{\rm z}_{m,k}^H 
(\boldsymbol{\Sigma}_{v}^{-1}) \boldsymbol{\rm z}_{m,k}} } \right)}},\quad\forall v \in \{{\rm p},{\rm d}\}
\end{aligned}
\end{equation}
where \( \boldsymbol{\rm z}_{m,k} = [z_{m,k}^{(0)}, z_{m,k}^{(1)}, \ ...\ ,z_{m,k}^{(\gamma)}] \) is the random vector composed of all replicas of \(z_{m,k}\), \( (\boldsymbol{\Sigma}_{v} )_{a,b}= q_{\boldsymbol{\rm H},a,b} \cdot q_{\boldsymbol{\rm X}_{v},a,b},\ \forall v \in \{{\rm p},{\rm d}\}\) is the covariance matrix of the random vector \(\boldsymbol{\rm z}_{m,k}\). 

From the above \eqref{app_ey} to \eqref{app_qz}, \({{\rm E}_{\boldsymbol{\rm Y}}}{\{ P}_{\boldsymbol{\rm Y}}^\gamma {\rm{(}}{\boldsymbol{\rm Y}}{\rm{)\} }}\) can be calculated as
\begin{equation}
\begin{aligned}
& {{\rm E}_{\boldsymbol{\rm Y}}}{\{ P}_{\boldsymbol{\rm Y}}^\gamma {\rm{(}}{\boldsymbol{\rm Y}}{\rm{)\} }} = \\ 
& \int {\rm d}{\boldsymbol{\rm Y}} \cdot \prod\limits_{a = 0}^\gamma  \bigg\{ { {P_{\boldsymbol{\rm Y}|\boldsymbol{\rm Z}}}(\boldsymbol{\rm Y}|{\boldsymbol{\rm Z}_{\rm p}^{(a)}},{\boldsymbol{\rm Z}_{\rm d}^{(a)}} )} \\
&\cdot \prod\limits_{v\in \{ {\rm p},{\rm d}\} } \left\{ {\rm d}{{\boldsymbol{\rm Z}}_{v}^{(a)}} \cdot P_{\boldsymbol{\rm Z}_{v}}(\{ {\boldsymbol{\rm Z}_{v}^{(\tilde{a})}}\} _{\tilde{a} = 0}^\gamma|\boldsymbol{\rm Q}_{\boldsymbol{\rm H}},\boldsymbol{\rm Q}_{\boldsymbol{\rm X}_{v}}) \right\} \bigg\} \\ 
&\cdot {\left\{ {{\rm d}{\boldsymbol{\rm Q}}_{\boldsymbol{\rm H}} \cdot {P_{{{\boldsymbol{\rm Q}}_{\boldsymbol{\rm H}}}}}({{\boldsymbol{\rm Q}}_{\boldsymbol{\rm H}}})} \right\}} \cdot  {\prod\limits_{v\in \{ {\rm p},{\rm d}\} } \left\{ {{\rm d}{\boldsymbol{\rm Q}}_{\boldsymbol{\rm X}_{v}} \cdot {P_{{{\boldsymbol{\rm Q}}_{\boldsymbol{\rm X}_{v}}}}}({{\boldsymbol{\rm Q}}_{\boldsymbol{\rm X}_{v}}})} \right\}}. 
\end{aligned}
\end{equation}
According to the large deviation theorem \cite{ellis2006entropy}, \({{\rm E}_{\boldsymbol{\rm Y}}}{\{ P}_{\boldsymbol{\rm Y}}^\gamma {\rm{(}}{\boldsymbol{\rm Y}}{\rm{)\} }}\) can be further simplified to
\begin{equation}\label{free_energy_appa}
\begin{aligned}
& {{\rm E}_{\boldsymbol{\rm Y}}}{\{ P}_{\boldsymbol{\rm Y}}^\gamma {\rm{(}}{\boldsymbol{\rm Y}}{\rm{)\} }} = \\
& \mathop {\sup }\limits_{\boldsymbol{\rm Q}_{\boldsymbol{\rm X}_{\rm p}},\boldsymbol{\rm Q}_{\boldsymbol{\rm X}_{\rm d}},\boldsymbol{\rm Q}_{\boldsymbol{\rm H}}} \bigg[ {G^{\gamma}(\boldsymbol{\rm Q}_{\boldsymbol{\rm X}_{\rm p}},\boldsymbol{\rm Q}_{\boldsymbol{\rm X}_{\rm d}},\boldsymbol{\rm Q}_{\boldsymbol{\rm H}})}\\
&{\rm{ - }} \sum\limits_{v\in \{ {\rm p},{\rm d} \}} \mathop {\sup }\limits_{\tilde{\boldsymbol{\rm Q}}_{\boldsymbol{\rm X}_{v}}  } \left[ {{\rm tr}(\tilde{\boldsymbol{\rm Q}}_{\boldsymbol{\rm X}_{v}}  {\boldsymbol{\rm Q}}_{\boldsymbol{\rm X}_{v}}   ) - \log M^{\gamma}(\tilde{\boldsymbol{\rm Q}}_{\boldsymbol{\rm X}_{v}} )} \right] \\
&  - \mathop {\sup }\limits_{\tilde{\boldsymbol{\rm Q}}_{\boldsymbol{\rm H}}  } \left[ {{\rm tr}(\tilde{\boldsymbol{\rm Q}}_{\boldsymbol{\rm H}} {\boldsymbol{\rm Q}}_{\boldsymbol{\rm H}}  ) - \log M^{\gamma}(\tilde{\boldsymbol{\rm Q}}_{\boldsymbol{\rm H}} )} \right]  \bigg],
\end{aligned}
\end{equation}
where
\begin{equation}
\begin{aligned}
&{G^{\gamma}(\boldsymbol{\rm Q}_{\boldsymbol{\rm X}_{\rm p}},\boldsymbol{\rm Q}_{\boldsymbol{\rm X}_{\rm d}},\boldsymbol{\rm Q}_{\boldsymbol{\rm H}})} =\\
&\frac{1}{N^2} \log \Bigg\{   {\rm d}{\boldsymbol{\rm Y}} \cdot \prod\limits_{a = 0}^\gamma  \bigg\{ { {P_{\boldsymbol{\rm Y}|\boldsymbol{\rm Z}}}(\boldsymbol{\rm Y}|{\boldsymbol{\rm Z}_{\rm p}^{(a)}},{\boldsymbol{\rm Z}_{\rm d}^{(a)}} )} \\
&\cdot \prod\limits_{v\in \{ {\rm p},{\rm d}\} } \left\{ {\rm d}{{\boldsymbol{\rm Z}}_{v}^{(a)}} \cdot P_{\boldsymbol{\rm Z}_{v}}(\{ {\boldsymbol{\rm Z}_{v}^{(\tilde{a})}}\} _{\tilde{a} = 0}^\gamma|\boldsymbol{\rm Q}_{\boldsymbol{\rm H}},\boldsymbol{\rm Q}_{\boldsymbol{\rm X}_{v}}) \right\} \bigg\}  \Bigg\},
\end{aligned}
\end{equation}
\begin{equation}
\begin{aligned}
& M^{\gamma}(\tilde{\boldsymbol{\rm Q}}_{\boldsymbol{\rm X}_{v}} ) =  {{\rm E}_{\{ {\boldsymbol{\rm X}_{v}^{(a)}}\} _{a = 0}^\gamma }}\left\{ \prod\limits_{k=1}^{K_{v}} {{e^{{\rm tr}({{\tilde{\boldsymbol{\rm Q}}}_{\boldsymbol{\rm X}_{v}}} \boldsymbol{\rm X}_{v,k}^H\boldsymbol{\rm X}_{v,k} )}}} \right\},\\
&\quad\quad\quad\quad\quad\quad\quad\quad\quad\quad\quad\quad\quad\quad\quad\quad \forall v\in \{{\rm p},{\rm d}\}
\end{aligned}
\end{equation}
\begin{equation}
\begin{aligned}
M^{\gamma}(\tilde{\boldsymbol{\rm Q}}_{\boldsymbol{\rm H}} ) =  {{\rm E}_{\{ {\boldsymbol{\rm H}^{(a)}}\} _{a = 0}^\gamma }}\left\{ \prod\limits_{m=1}^{M} {{e^{{\rm tr}({{\tilde{\boldsymbol{\rm Q}}}_{\boldsymbol{\rm H}}} \boldsymbol{\rm H}_m^H \boldsymbol{\rm H}_m )}}} \right\}  ,
\end{aligned}
\end{equation}
\( \boldsymbol{\rm H}_{m} = [{\boldsymbol{\rm h}_{m}^{(0)}}^T, {\boldsymbol{\rm h}_{m}^{(1)}}^T, \ ...\ ,{\boldsymbol{\rm h}_{m}^{(\gamma)}}^{T}] \), \( \boldsymbol{\rm X}_{v,k} = [{\boldsymbol{\rm x}_{v,k}^{(0)}}, {\boldsymbol{\rm x}_{v,k}^{(1)}}, \ ...\ ,{\boldsymbol{\rm x}_{v,k}^{(\gamma)}}] \). \( {\boldsymbol{\rm h}_{m}^{(a)}} \) is the \(m\)th row vector of \(\boldsymbol{\rm H}^{(a)}\), and \( {\boldsymbol{\rm x}_{k}^{(b)}} \) is the \(k\)th column vector of \(\boldsymbol{\rm X}_{v}^{(b)},\quad \forall v \in \{{\rm p},{\rm d}\}\). \( M^{\gamma}(\tilde{\boldsymbol{\rm Q}}_{\boldsymbol{\rm H}} ) \) and \( M^{\gamma}(\tilde{\boldsymbol{\rm Q}}_{\boldsymbol{\rm X}_{v}} )  \) are the moment generating functions of \( \{ \boldsymbol{\rm H}^{(a)} \}_{a=1}^{\gamma} \) and \( \{\boldsymbol{\rm X}_{v}^{(a)} \}_{a=1}^{\gamma} \) respectively. The auxiliary
matrices \( \tilde{\boldsymbol{\rm Q}}_{\boldsymbol{\rm H}} \in {{\mathbb{C}}^{(\gamma+1) \times (\gamma+1)}} \) and \( \tilde{\boldsymbol{\rm Q}}_{\boldsymbol{\rm X}_{v}}  \in {{\mathbb{C}}^{(\gamma+1) \times (\gamma+1)}} \) are introduced for the
saddle point method which is used for calculating the free energy. 

Considering the difficulty in directly solving for the saddle point of the free energy \eqref{free_energy_appa}, we assume that the saddle point follows the replica symmetry form \cite{Kabashima2016}, which can be given by
\begin{equation}
\boldsymbol{\rm Q}_{\boldsymbol{\rm H}} = (c_{\boldsymbol{\rm H}} - q_{\boldsymbol{\rm H}})\boldsymbol{\rm I}_{\gamma+1} + q_{\boldsymbol{\rm H}}\boldsymbol{\rm 1}_{\gamma+1},
\end{equation}
\begin{equation}
\tilde{\boldsymbol{\rm Q}}_{\boldsymbol{\rm H}} = (\tilde{c}_{\boldsymbol{\rm H}} - \tilde{q}_{\boldsymbol{\rm H}})\boldsymbol{\rm I}_{\gamma+1} + {\tilde{q}}_{\boldsymbol{\rm H}}\boldsymbol{\rm 1}_{\gamma+1},
\end{equation}
\begin{equation}
\boldsymbol{\rm Q}_{\boldsymbol{\rm X}_{v}} = (c_{\boldsymbol{\rm X}_{v}} - q_{\boldsymbol{\rm X}_{v}})\boldsymbol{\rm I}_{\gamma+1} + q_{\boldsymbol{\rm X}_{v}}\boldsymbol{\rm 1}_{\gamma+1},\quad \forall v \in \{{\rm p},{\rm d}\}
\end{equation}
\begin{equation}
\tilde{\boldsymbol{\rm Q}}_{\boldsymbol{\rm X}_{v}} = (\tilde{c}_{\boldsymbol{\rm X}_{v}} - \tilde{q}_{\boldsymbol{\rm X}_{v}})\boldsymbol{\rm I}_{\gamma+1} + \tilde{q}_{\boldsymbol{\rm X}_{v}}\boldsymbol{\rm 1}_{\gamma+1}.\quad \forall v \in \{{\rm p},{\rm d}\}
\end{equation}

From \eqref{app_qz}, \(z_{m,k}^{(a)}\) can be modeled as
\begin{equation}
z_{m,k}^{(a)} =  \sqrt{c_{\boldsymbol{\rm H}} c_{\boldsymbol{\rm X}_{v}} 
 - q_{\boldsymbol{\rm H}} q_{\boldsymbol{\rm X}_{v}}  } b_{m,k}^{(a)} + \sqrt{ q_{\boldsymbol{\rm H}} q_{\boldsymbol{\rm X}_{v}} } \tilde{b}_{m,k},\quad \forall v \in \{{\rm p},{\rm d}\}
\end{equation} 
where \(\tilde{b}_{m,k}\) and \( \{ b_{m,k}^{(a)} \}_{a=0}^{\gamma} \) are independent standard Gaussian variables. Then, the free energy \eqref{free_energy_appa} can be decomposed into several terms for calculation,
\begin{equation}\label{app_de_1}
\begin{aligned}
&{G^{\gamma}(\boldsymbol{\rm Q}_{\boldsymbol{\rm X}_{\rm p}},\boldsymbol{\rm Q}_{\boldsymbol{\rm X}_{\rm d}},\boldsymbol{\rm Q}_{\boldsymbol{\rm H}})} = \\
&\alpha \cdot \sum\limits_{v\in \{{\rm p},{\rm d}\} } \Bigg\{ \beta_{v} \cdot \log \bigg( \int {{\rm d}y \cdot {\mathcal{D}}{\tilde{b}} \cdot \prod\limits_{a = 0}^\gamma  \Big\{ {\mathcal{D}{b^{(a)}} \cdot}}\\
& {P_{y|z}}(y|z = \sqrt{c_{\boldsymbol{\rm H}} c_{\boldsymbol{\rm X}_{v}} 
 - q_{\boldsymbol{\rm H}} q_{\boldsymbol{\rm X}_{v}}  } b^{(a)} + \sqrt{ q_{\boldsymbol{\rm H}} q_{\boldsymbol{\rm X}_{v}} } \tilde{b} ) \Big\}  \bigg) \Bigg\}, 
\end{aligned}
\end{equation}
where \(P_{y|z}({y|z})= \mathcal{CN}(y;z,\sigma_{\rm n}^2)\). \(\mathcal{D}\xi\) is a Gaussian integral \({\rm d}\xi \cdot e^{-\frac{\xi^2}{2}} / {\sqrt{2\pi}} \). After utilizing a formula \({e^{A\sum\nolimits_{{\rm{a}} < b} {{w_a}{w_b}} }} = {e^{ - \frac{A}{2}\sum\nolimits_{\rm{a}} {{{({w_a})}^2}} }}\int {\mathcal{D}\xi  \cdot {e^{\sqrt A \xi \sum\nolimits_{\rm{a}} {{w_a}} }}} \), the other terms in \eqref{free_energy_appa} can be calculated as
\begin{equation}\label{app_de_2}
\begin{aligned}
&{\rm tr}(\tilde{\boldsymbol{\rm Q}}_{\boldsymbol{\rm X}_{v}}  {\boldsymbol{\rm Q}}_{\boldsymbol{\rm X}_{v}}   ) =  (\gamma+1)\tilde{c}_{\boldsymbol{\rm X}_{v}}c_{\boldsymbol{\rm X}_{v}}  + \gamma(\gamma+1) \tilde{q}_{\boldsymbol{\rm X}_{v}} q_{\boldsymbol{\rm X}_{v}},\\
&\quad\quad\quad\quad\quad\quad\quad\quad\quad\quad\quad\quad\quad\quad\quad\quad\quad \forall v\in \{{\rm p},{\rm d}\}
\end{aligned}
\end{equation}
\begin{equation}\label{app_de_3}
\begin{aligned}
{\rm tr}(\tilde{\boldsymbol{\rm Q}}_{\boldsymbol{\rm H}}  {\boldsymbol{\rm Q}}_{\boldsymbol{\rm H}}   ) =  (\gamma+1)\tilde{c}_{\boldsymbol{\rm H}}c_{\boldsymbol{\rm H}}  + \gamma(\gamma+1) \tilde{q}_{\boldsymbol{\rm H}} q_{\boldsymbol{\rm H}} ,
\end{aligned}
\end{equation}
\begin{equation}\label{app_de_5}
\begin{aligned}
&M^{\gamma}(\tilde{\boldsymbol{\rm Q}}_{\boldsymbol{\rm X}_{v}} ) = \frac{1}{N^2} \log \bigg( {\rm E}_{\{{\boldsymbol{\rm X}_{v}^{(a)}}\}_{a=0}^{\gamma}} \bigg\{   \prod\limits_{k=1}^{K_{v}}\prod\limits_{n=1}^{N}  \mathcal{D}{\xi_{n,k}} \\
&\cdot e^{2{\tilde{q}_{\boldsymbol{\rm X}_{v}}}{\rm Re}\left( \sum\limits_{a=0}^{\gamma} \xi_{n,k} \cdot x_{v,n,k}^{(a)} \right)} \cdot  e^{(\tilde{c}_{\boldsymbol{\rm X}_{v}} - \tilde{q}_{\boldsymbol{\rm X}_{v}}  )  \sum\limits_{a=0}^{\gamma} \left| x_{v,n,k}^{(a)} \right|^2 } \bigg\} \bigg) \\
& = \beta_{v} \cdot \log \bigg(  {\rm E}_{x_{v}} \left\{ \int  {\rm d}\xi \cdot  e^{-\left| \xi -\sqrt{\tilde{q}_{\boldsymbol{\rm X}_{v}}  } x_{v} \right|^2 }  \right\}^{\gamma+1}   \bigg) ,\\
&\quad\quad\quad\quad\quad\quad\quad\quad\quad\quad\quad\quad\quad\quad\quad\quad\quad \forall v\in \{{\rm p},{\rm d}\}
\end{aligned}
\end{equation}
\begin{equation}\label{app_de_4}
\begin{aligned}
&M^{\gamma}(\tilde{\boldsymbol{\rm Q}}_{\boldsymbol{\rm H}} ) = \frac{1}{N^2} \log \bigg( {\rm E}_{\{{\boldsymbol{\rm H}^{(a)}}\}_{a=0}^{\gamma}} \bigg\{   \prod\limits_{m=1}^{M}\prod\limits_{n=1}^{N}  \mathcal{D}{\xi_{m,n}} \\
&\cdot e^{2{\tilde{q}_{\boldsymbol{\rm H}}}{\rm Re}\left( \sum\limits_{a=0}^{\gamma} \xi_{m,n} h_{m,n}^{(a)} \right)} \cdot  e^{(\tilde{c}_{\boldsymbol{\rm H}} - \tilde{q}_{\boldsymbol{\rm H}}  )  \sum\limits_{a=0}^{\gamma} \left| h_{m,n}^{(a)} \right|^2 } \bigg\} \bigg) \\
& = \alpha \log \bigg(  {\rm E}_{h} \left\{ \int  {\rm d}\xi \cdot  e^{-\left| \xi -\sqrt{\tilde{q}_{\boldsymbol{\rm H}}  } h \right|^2 }  \right\}^{\gamma+1}   \bigg) .
\end{aligned}
\end{equation}
Substituting \eqref{app_de_1} to \eqref{app_de_5} into equation \eqref{free_energy_appa} and setting the partial derivatives of the free energy \eqref{free_energy_appa} to zero, when \(\gamma \to 0\), \(\tilde{c}_{\boldsymbol{\rm H}}=0\), \(\tilde{c}_{\boldsymbol{\rm X}_{v}}=0,\  \forall v \in \{{\rm p},{\rm d}\}\) and the fixed point equations \eqref{Q_tilde_Q_1} to \eqref{Q_tilde_Q_6} can be obtained.

\section{The Proof of Lemma 2}\label{App_B}
According to Carleman theorem, in order to prove the validity of Lemma \ref{Lemma 2}, it is first necessary to demonstrate that the joint moment of \(\{{h}_{m,n}^{(0)},{\hat{h}}_{m,n}^{(0)},{x_{{\rm d},\tilde{n},k}^{(0)}},{\hat{x}_{{\rm d},\tilde{n},k}^{(0)}}\}\) converges to the joint moment of \(\{{h},{\hat{h}},{x_{{\rm d}}},{\hat{x}_{{\rm d}}}\}\) as \(N \to +\infty\). Note that \(\{{h},{\hat{h}},{x_{{\rm d}}},{\hat{x}_{{\rm d}}}\}\) is the real values and posterior mean estimation results from the AGWN channels \eqref{X_d_AGMN_pro_2} and \eqref{H_AGMN_pro_2}. Then, we first calculates the joint moment of \( \{  {\rm Re}(h_{m,n}^{(0)}),{\rm Im}(h_{m,n}^{(0)}), {\rm Re}(h_{m,n}^{(a_{\rm h})}),{\rm Im}(h_{m,n}^{(a_{\rm h})}),\) \({\rm Re}(x_{{\rm d},\tilde{n},k}^{(0)}), {\rm Im}(x_{{\rm d},\tilde{n},k}^{(0)}), {\rm Re}(x_{{\rm d},\tilde{n},k}^{(a_{\rm x})}),{\rm Im}(x_{{\rm d},\tilde{n},k}^{(a_{\rm x})})  \} \) by [\cite{Guo2005}, Lemma 1],
\begin{equation}\label{joint_moment_sum}
\begin{aligned}
&\sum\limits_{m = 1,n=1}^{M_1,N_1} \sum\limits_{\tilde{n} = 1,k=1}^{\tilde{N}_1,K_1} {\rm E}\{{\rm Re}(h_{m,n}^{(0)})^{i_{\rm R_h}}\cdot{\rm Im}(h_{m,n}^{(0)})^{i_{\rm I_h}}\cdot {\rm Re}(h_{m,n}^{(a_{\rm h})})^{j_{\rm R_h}} \\
&\cdot{\rm Im}(h_{m,n}^{(a_{\rm h})})^{j_{\rm I_h}}\cdot{\rm Re}(x_{{\rm d},\tilde{n},k}^{(0)})^{i_{\rm R_{\rm x}}}\cdot {\rm Im}(x_{{\rm d},\tilde{n},k}^{(0)})^{i_{\rm I_{\rm x}}}\cdot {\rm Re}(x_{{\rm d},\tilde{n},k}^{(a_{\rm x})})^{j_{\rm R_{\rm x}}}\\
&\cdot{\rm Im}(x_{{\rm d},\tilde{n},k}^{(a_{\rm x})})^{j_{\rm I_{\rm x}}}\}  = \\
&\mathop {{\rm{lim}}}\limits_{\gamma \to 0 } \frac{\partial^2}{\partial \psi_{\rm h} \partial \psi_{\rm x} } \log {{\rm E}_{\boldsymbol{\rm Y}}}\{  e^{\psi_{\rm h} \cdot g_{\rm h} \cdot \psi_{\rm x} \cdot g_{\rm x} } \cdot {P}_{\boldsymbol{\rm Y}}^\gamma {\rm{(}}{\boldsymbol{\rm Y}}{\rm{)\}\Big|_{\psi_{\rm x}=0,\psi_{\rm h} =0 } }},
\end{aligned}
\end{equation}
where  
\begin{equation}
\begin{aligned}
&g_{\rm h} = \sum\limits_{m = 1,n=1}^{M_1,N_1} {\rm Re}(h_{m,n}^{(0)})^{i_{\rm R_h}}\cdot{\rm Im}(h_{m,n}^{(0)})^{i_{\rm I_h}}\cdot {\rm Re}(h_{m,n}^{(a_{\rm h})})^{j_{\rm R_h}}\\
&\cdot{\rm Im}(h_{m,n}^{(a_{\rm h})})^{j_{\rm I_h}},
\end{aligned}
\end{equation}
\begin{equation}
\begin{aligned}
&g_{\rm x} =\sum\limits_{\tilde{n} = 1,k=1}^{\tilde{N}_1,K_1} {\rm Re}(x_{{\rm d},\tilde{n},k}^{(0)})^{i_{\rm R_{\rm x}}}\cdot {\rm Im}(x_{{\rm d},\tilde{n},k}^{(0)})^{i_{\rm I_{\rm x}}}\cdot {\rm Re}(x_{{\rm d},\tilde{n},k}^{(a_{\rm x})})^{j_{\rm R_{\rm x}}}\\
&\cdot{\rm Im}(x_{{\rm d},\tilde{n},k}^{(a_{\rm x})})^{j_{\rm I_{\rm x}}}.
\end{aligned}
\end{equation}
\(M_1,N_1,\tilde{N}_1, K_1\) are the parameters which are less than \(M,N,N,K_{\rm d}\) and grows linearly with \(M,N,N,K_{\rm d}\) respectively. Similar to \eqref{free_energy_appa}, under the large system condition, \eqref{joint_moment_sum} can be further calculated as
\begin{equation}\label{free_energy_appa}
\begin{aligned}
& {{\rm E}_{\boldsymbol{\rm Y}}}{\{ P}_{\boldsymbol{\rm Y}}^\gamma {\rm{(}}{\boldsymbol{\rm Y}}{\rm{)\} }} = \\
    & \mathop {\sup }\limits_{\boldsymbol{\rm Q}_{\boldsymbol{\rm X}_{\rm p}},\boldsymbol{\rm Q}_{\boldsymbol{\rm X}_{\rm d}},\boldsymbol{\rm Q}_{\boldsymbol{\rm H}}} \Bigg[ {G^{\gamma}(\boldsymbol{\rm Q}_{\boldsymbol{\rm X}_{\rm p}},\boldsymbol{\rm Q}_{\boldsymbol{\rm X}_{\rm d}},\boldsymbol{\rm Q}_{\boldsymbol{\rm H}})} \\
&{\rm{ - }}  \sum\limits_{v \in \{{\rm p,d}\}} \mathop {\sup }\limits_{\tilde{\boldsymbol{\rm Q}}_{\boldsymbol{\rm X}_{v}}  } \left[ {{\rm tr}(\tilde{\boldsymbol{\rm Q}}_{\boldsymbol{\rm X}_{v}}  {\boldsymbol{\rm Q}}_{\boldsymbol{\rm X}_{v}}   ) - \log \tilde{M}^{\gamma}(\tilde{\boldsymbol{\rm Q}}_{\boldsymbol{\rm X}_{v}}  )} \right] \\
&  - \mathop {\sup }\limits_{\tilde{\boldsymbol{\rm Q}}_{\boldsymbol{\rm H}}  } \left[ {{\rm tr}(\tilde{\boldsymbol{\rm Q}}_{\boldsymbol{\rm H}} {\boldsymbol{\rm Q}}_{\boldsymbol{\rm H}}  ) - \log \tilde{M}^{\gamma}(\tilde{\boldsymbol{\rm Q}}_{\boldsymbol{\rm H}} )} \right] \\
& -  \left(  \log \tilde{M}^{\gamma}(\tilde{\boldsymbol{\rm Q}}_{\boldsymbol{\rm H}}, \tilde{\boldsymbol{\rm Q}}_{\boldsymbol{\rm X}_{\rm d}}, \psi_{\rm x}, \psi_{\rm h})   \right)   \Bigg],
\end{aligned}
\end{equation}
where 
\begin{equation}\label{energy_appb}
\begin{aligned}
&\log \tilde{M}^{\gamma}(\tilde{\boldsymbol{\rm Q}}_{\boldsymbol{\rm H}}, \tilde{\boldsymbol{\rm Q}}_{\boldsymbol{\rm X}_{\rm d}}, \psi_{\rm x}, \psi_{\rm h}) \\
& = \Big( \log {\rm E}\{ e^{\psi_{\rm x}\cdot \tilde{g}_{\rm x} \cdot \psi_{\rm h}\cdot \tilde{g}_{\rm h}  }\cdot e^{\tilde{\boldsymbol{\rm x}}_{\rm d}^{H} \tilde{\boldsymbol{\rm Q}}_{\boldsymbol{\rm X}_{\rm d}} \tilde{\boldsymbol{\rm x}}_{\rm d}  } \cdot e^{\tilde{\boldsymbol{\rm h}}^{H} \tilde{\boldsymbol{\rm Q}}_{\boldsymbol{\rm H}} \tilde{\boldsymbol{\rm h}} } \} -  \log M^{\gamma}(\tilde{\boldsymbol{\rm Q}}_{\boldsymbol{\rm H}} ) \\
& - \log M^{\gamma}(\tilde{\boldsymbol{\rm Q}}_{\boldsymbol{\rm X}_{\rm d}} ) \Big),
\end{aligned}
\end{equation}
\begin{equation}
\begin{aligned}
&\tilde{g}_{\rm h} = {\rm Re}(\tilde{h}^{(0)})^{i_{\rm R_h}}\cdot{\rm Im}(\tilde{h}^{(0)})^{i_{\rm I_h}}\cdot {\rm Re}(\tilde{h}^{(a_{\rm h})})^{j_{\rm R_h}}\cdot{\rm Im}(\tilde{h}^{(a_{\rm h})})^{j_{\rm I_h}},
\end{aligned}
\end{equation}
\begin{equation}
\begin{aligned}
&\tilde{g}_{\rm x} = {\rm Re}(\tilde{x}_{{\rm d}}^{(0)})^{i_{\rm R_{\rm x}}}\cdot {\rm Im}(\tilde{x}_{{\rm d}}^{(0)})^{i_{\rm I_{\rm x}}}\cdot {\rm Re}(\tilde{x}_{{\rm d}}^{(a_{\rm x})})^{j_{\rm R_{\rm x}}}\cdot{\rm Im}(\tilde{x}_{{\rm d}}^{(a_{\rm x})})^{j_{\rm I_{\rm x}}}.
\end{aligned}
\end{equation}
\(\tilde{\boldsymbol{\rm x}}_{\rm d} = [\tilde{x}^{(0)}_{\rm d},\tilde{x}^{(1)}_{\rm d},\tilde{x}^{(2)}_{\rm d}, \ ...\ , \tilde{x}^{(\gamma)}_{\rm d}]\) and \(\tilde{\boldsymbol{\rm h}} = [\tilde{h}^{(0)},\tilde{h}^{(1)},\tilde{h}^{(2)}, \ ...\ , \tilde{h}^{(\gamma)}]\) are independent random variables whose distributions follow \(\tilde{x}^{(0)}_{\rm d},\tilde{x}^{(1)}_{\rm d},\tilde{x}^{(2)}_{\rm d}, \ ...\ , \tilde{x}^{(\gamma)}_{\rm d} \to P_{x}\) and \(\tilde{h}^{(0)},\tilde{h}^{(1)},\tilde{h}^{(2)}, \ ...\ , \tilde{h}^{(\gamma)} \to P_{h}\). Taking the derivative in \eqref{energy_appb} with respect to \(\psi_{\rm h}\) and \(\psi_{\rm x}\) at \(\psi_{\rm x} = 0,\psi_{\rm h} = 0 \) leaves only one term
\begin{equation}\label{proof_1}
\begin{aligned}
&\frac{\partial^{2}}{\partial \psi_{\rm x}\cdot\partial \psi_{\rm h}} \log \tilde{M}^{\gamma}(\tilde{\boldsymbol{\rm Q}}_{\boldsymbol{\rm H}}, \tilde{\boldsymbol{\rm Q}}_{\boldsymbol{\rm X}_{\rm d}}, \psi_{\rm x}, \psi_{\rm h})\Big|_{\psi_{\rm x} = 0, \psi_{\rm h} = 0} 
 \\
& = \frac{{\rm E}\{ \tilde{g}_{\rm x} \cdot \tilde{g}_{\rm h}  \cdot e^{\tilde{\boldsymbol{\rm x}}^{H} \tilde{\boldsymbol{\rm Q}}_{\boldsymbol{\rm X}_{\rm d}} \tilde{\boldsymbol{\rm x}}  } \cdot e^{\tilde{\boldsymbol{\rm h}}^{H} \tilde{\boldsymbol{\rm Q}}_{\boldsymbol{\rm H}} \tilde{\boldsymbol{\rm h}} } \}}{{\rm E}\{ e^{\tilde{\boldsymbol{\rm x}}^{H} \tilde{\boldsymbol{\rm Q}}_{\boldsymbol{\rm X}_{\rm d}} \tilde{\boldsymbol{\rm x}}  } \cdot e^{\tilde{\boldsymbol{\rm h}}^{H} \tilde{\boldsymbol{\rm Q}}_{\boldsymbol{\rm H}} \tilde{\boldsymbol{\rm h}} } \}}\\
& = \int\limits P_{y_{{x}_{\rm d}}}(y_{{x}_{\rm d}}) \cdot {\rm E}\{ {\rm Re}(\tilde{x}_{{\rm d}}^{(0)})^{i_{\rm R_{\rm x}}}\cdot {\rm Im}(\tilde{x}_{{\rm d}}^{(0)})^{i_{\rm I_{\rm x}}} |y_{x_{\rm d}} \}  \\
&\cdot{\rm E}\{ {\rm Re}(\tilde{x}_{{\rm d}}^{(a_{\rm x})})^{j_{\rm R_{\rm x}}}\cdot{\rm Im}(\tilde{x}_{{\rm d}}^{(a_{\rm x})})^{j_{\rm I_{\rm x}}} |{y_{x_{\rm d}}}\}  {\rm d}{y_{x_{\rm d}}}\\
& \cdot \int\limits P_{y_{h}}(y_{h}) \cdot {\rm E}\{ {\rm Re}(\tilde{h}^{(0)})^{i_{\rm R_h}}\cdot{\rm Im}(\tilde{h}^{(0)})^{i_{\rm I_h}} |y_{h} \}  \\
&\cdot{\rm E}\{ {\rm Re}(\tilde{h}^{(a_{\rm h})})^{j_{\rm R_h}}\cdot{\rm Im}(\tilde{h}^{(a_{\rm h})})^{j_{\rm I_h}} |y_{h}\}   {\rm d}{y_{h}}, \quad ({\rm as}\  \gamma \to 0 )
\end{aligned}
\end{equation}
where \(y_{x_{\rm d}}\) and \(y_{h}\) are the output of the scalar AGWN channels \eqref{X_d_AGMN_pro_2} and \eqref{H_AGMN_pro_2}. From \eqref{proof_1}, it can be proven that under the large system conditions, the joint moments of \( \{  {\rm Re}(h_{m,n}^{(0)}),{\rm Im}(h_{m,n}^{(0)}), {\rm Re}(h_{m,n}^{(a_{\rm h})}),{\rm Im}(h_{m,n}^{(a_{\rm h})}), {\rm Re}(x_{{\rm d},n,k}^{(0)}),\) \( {\rm Im}(x_{{\rm d},n,k}^{(0)}), {\rm Re}(x_{{\rm d},m,n}^{(a_{\rm x})}),{\rm Im}(x_{{\rm d},m,n}^{(a_{\rm x})})  \} \) are independent of \(m,n,k,a_{\rm x}\) and \(a_{\rm h}\), which can be given by
\begin{equation}
\begin{aligned}
&{\rm E}\{{\rm Re}(h_{m,n}^{(0)})^{i_{\rm R_h}}\cdot{\rm Im}(h_{m,n}^{(0)})^{i_{\rm I_h}}\cdot {\rm Re}(h_{m,n}^{(a_{\rm h})})^{j_{\rm R_h}}\cdot{\rm Im}(h_{m,n}^{(a_{\rm h})})^{j_{\rm I_h}}\\
&\cdot{\rm Re}(x_{{\rm d},n,k}^{(0)})^{i_{\rm R_{\rm x}}}\cdot {\rm Im}(x_{{\rm d},n,k}^{(0)})^{i_{\rm I_{\rm x}}}\cdot {\rm Re}(x_{{\rm d},n,k}^{(a_{\rm x})})^{j_{\rm R_{\rm x}}} \\  
& \cdot {\rm Im}(x_{{\rm d},n,k}^{(a_{\rm x})})^{j_{\rm I_{\rm x}}} \} \to\\
&{\rm E}\{{\rm Re}(h)^{i_{\rm R_h}}\cdot{\rm Im}(h)^{i_{\rm I_h}}\cdot {\rm Re}(\tilde{h})^{j_{\rm R_h}}  \cdot{\rm Im}(\tilde{h})^{j_{\rm I_h}}\\
& \cdot{\rm Re}(x_{{\rm d}})^{i_{\rm R_{\rm x}}}\cdot {\rm Im}(x_{{\rm d}})^{i_{\rm I_{\rm x}}}\cdot {\rm Re}(\tilde{x}_{{\rm d}})^{j_{\rm R_{\rm x}}} \cdot {\rm Im}(\tilde{x}_{{\rm d}})^{j_{\rm I_{\rm x}}} \},
\end{aligned}
\end{equation}
where \( \tilde{x}_{\rm d}\) and \(\tilde{h}\) are the output of retrochannels \cite{Guo2005} of the scalar AGWN channels \eqref{X_d_AGMN_pro_2} and \eqref{H_AGMN_pro_2}. The proof results indicate that, for each parameters \(m\), \(n\), \(\tilde{n}\), and \(k\), the joint distribution of \(\{ h_{m,n}^{(0)}, h_{m,n}^{(a_{\rm h})}, x_{{\rm d},\tilde{n},k}^{(0)},x_{{\rm d},\tilde{n},k}^{(a_{\rm x})}  \}\) converges
to the joint distribution of \( \{ h, \tilde{h}, x_{\rm d}, \tilde{x}_{\rm d} \} \). Furthermore, because replicas \( \tilde{x}_{\rm d}\) and \(\tilde{h}\) are random variables following the posterior PDF \(P_{x_{\rm d}|y_{x_{\rm d}}}\) and \(P_{h|y_{h}}\), it can be easily proven that, 
\begin{equation}
\begin{aligned}
&{\rm E}\{{\rm Re}(h_{m,n}^{(0)})^{i_{\rm R_h}}\cdot{\rm Im}(h_{m,n}^{(0)})^{i_{\rm I_h}}\cdot {\rm Re}(\hat{h}_{m,n})^{j_{\rm R_h}}\cdot{\rm Im}(\hat{h}_{m,n})^{j_{\rm I_h}}\\
&\cdot{\rm Re}(x_{{\rm d},\tilde{n},k}^{(0)})^{i_{\rm R_{\rm x}}}\cdot {\rm Im}(x_{{\rm d},\tilde{n},k}^{(0)})^{i_{\rm I_{\rm x}}}\cdot {\rm Re}(\hat{x}_{{\rm d},\tilde{n},k})^{j_{\rm R_{\rm x}}} \\  
& \cdot {\rm Im}(\hat{x}_{{\rm d},\tilde{n},k})^{j_{\rm I_{\rm x}}} \} \to\\
&{\rm E}\{{\rm Re}(h)^{i_{\rm R_h}}\cdot{\rm Im}(h)^{i_{\rm I_h}}\cdot {\rm Re}(\hat{h})^{j_{\rm R_h}}  \cdot{\rm Im}(\hat{h})^{j_{\rm I_h}}\\
& \cdot{\rm Re}(x_{{\rm d}})^{i_{\rm R_{\rm x}}}\cdot {\rm Im}(x_{{\rm d}})^{i_{\rm I_{\rm x}}}\cdot {\rm Re}(\hat{x}_{{\rm d}})^{j_{\rm R_{\rm x}}} \cdot {\rm Im}(\hat{x}_{{\rm d}})^{j_{\rm I_{\rm x}}} \}.
\end{aligned}
\end{equation}
This implies that, under the condition of a large system, the original estimation problem can be equivalent to the independent scalar AGWN channel estimation problems \eqref{X_d_AGMN_pro_2} and \eqref{H_AGMN_pro_2}.

\section{The Proof of Proposition 1}\label{App_C}
First, we calculate the posterior probability distributions of \(P_{{x}_{\rm d}|y_{{x}_{\rm d}}}({x}_{\rm d}|y_{{x}_{\rm d}})\) \eqref{x_d_PP_pro2} and \(P_{{h}|y_{{h}}}({h}|y_{{h}})\) \eqref{h_PP_pro2}. In scalar Gaussian channels \eqref{X_d_AGMN_pro_2} and \eqref{H_AGMN_pro_2}, \(P_{y_{{x}_{\rm d}}|{x}_{\rm d}}(y_{{x}_{\rm d}}|{x}_{\rm d})\) and \(P_{y_{h}|h}(y_{h}|h) \) can be given by
\begin{align}
& P_{y_{{x}_{\rm d}}|{x}_{\rm d}}(y_{{x}_{\rm d}}|{x}_{\rm d})  =  \mathcal{CN}( y_{x_{\rm d}};\sqrt{\tilde{q}_{\boldsymbol{\rm X}_{\rm d}}} x_{\rm d}, 1),\\
& P_{y_{h}|h}(y_{h}|h)  =  \mathcal{CN}( y_{h};\sqrt{\tilde{q}_{\boldsymbol{\rm H}}} h, 1).
\end{align}
Note that \(P_{x_{\rm d}}(x_{\rm d})\) follows a Gaussian distribution (i.e. Gaussian codebook) \eqref{X_G_Distribution}, and \(P_{h}(h)\) follows a BG distribution \eqref{BG_Distribution}. For simplicity, we assume that each user has the same parameters \(\lambda_{n}=\lambda\) and \(\gamma_{n}=\sigma_{{\rm h}}^2,\quad\forall n\in \{1,2,\ ...\ ,N\} \) in the BG distribution. The generalized version can be obtained by decomposing matrix \(\boldsymbol{\rm H}\) into column vectors \(\{\boldsymbol{\rm h}_{n}\}_{n=1}^{N}\), but this becomes too cumbersome. Through some derivations\footnote{
The multiplication of two Gaussian distributions can result in a new Gaussian distribution, which is given by
\begin{equation}
\frac{1}{C} \mathcal{CN}(a;\mu_1,\sigma_1^2) \cdot \mathcal{CN}(a;\mu_2,\sigma_2^2) = \mathcal{CN}(a;\frac{\mu_2\sigma_1^2+\mu_1\sigma_2^2}{\sigma_1^2+\sigma_2^2},\frac{\sigma_1^2\sigma_2^2}{\sigma_1^2+\sigma_2^2}),
\end{equation}
where \(C\) ensures that the PDF integrates to one,
\begin{equation}
C = \frac{1}{\sqrt{2\pi(\sigma_1^2+\sigma_2^2)}}e^{-\frac{(\mu_1-\mu_2)^2}{2(\sigma_1^2+\sigma_2^2)}}.
\end{equation}
Also, the relationship can be proven,
\begin{equation}
\mathcal{CN}(a;\epsilon \cdot b,\sigma^2) = \mathcal{CN}(b; \frac{a}{\epsilon}, \frac{\sigma^2}{\epsilon^2}),
\end{equation}
where \(\epsilon\) is a constant, and \(a\) and \(b\) are variables.
}, \(P_{{x}_{\rm d}|y_{{x}_{\rm d}}}({x}_{\rm d}|y_{{x}_{\rm d}})\) and \(P_{{h}|y_{{h}}}({h}|y_{{h}})\) can be given by
\begin{equation}\label{app_C_post_pdf_x_d}
P_{{x}_{\rm d}|y_{{x}_{\rm d}}}({x}_{\rm d}|y_{{x}_{\rm d}}) = \mathcal{CN}(x_{\rm d};\frac{\sqrt{\tilde{q}_{\boldsymbol{\rm X}_{\rm d}}}\sigma_{\rm x}^2 y_{x_{\rm d}}}{1+\tilde{q}_{\boldsymbol{\rm X}_{\rm d}}\sigma_{\rm x}^{2}},\frac{\sigma_{\rm x}^2}{1+\tilde{q}_{\boldsymbol{\rm X}_{\rm d}}\sigma_{\rm x}^2}),   
\end{equation}
\begin{equation}\label{app_C_post_pdf_h}
\begin{aligned}
&P_{\tilde{h}|y_{{h}}}({h}|y_{{h}}) = \frac{(1 - \lambda){C_1}}{(1 - \lambda){C_1} + \lambda {C_2}}\delta(h)\\
&\quad\quad\quad + \frac{\lambda {C_2}}{(1 - \lambda){C_1} + \lambda {C_2}} \mathcal{CN}(h;\frac{\sqrt{\tilde{q}_{\boldsymbol{\rm H}}}\sigma_{\rm h}^2 y_{h}}{1+\tilde{q}_{\boldsymbol{\rm H}}\sigma_{\rm h}^{2}},\frac{\sigma_{\rm h}^2}{1+\tilde{q}_{\boldsymbol{\rm H}}\sigma_{\rm h}^2}),
\end{aligned}   
\end{equation}
where
\begin{equation}
C_1 = \frac{1}{\sqrt{2\pi(\tilde{q}_{\boldsymbol{\rm H}})^{-1}}}e^{-\frac{\left\|y_{h}\right\|^2}{2}}
\end{equation}
and 
\begin{equation}
C_2 = \frac{1}{\sqrt{2\pi[\sigma_{\rm h}^2+(\tilde{q}_{\boldsymbol{\rm H}})^{-1}]}}e^{-\frac{\left\|y_{h}\right\|^2}{2(\tilde{q}_{\boldsymbol{\rm H}}\sigma_{\rm h}^2+1)}}.
\end{equation}
Then, from \eqref{app_C_post_pdf_x_d} and \eqref{app_C_post_pdf_h}, the posterior mean \(\hat{x}_{\rm d}\) and \(\hat{h}\) can be given by
\begin{align}
& \hat{x}_{\rm d} = \frac{\sqrt{\tilde{q}_{\boldsymbol{\rm X}_{\rm d}}}\sigma_{\rm x}^2 y_{x_{\rm d}}}{1+\tilde{q}_{\boldsymbol{\rm X}_{\rm d}}\sigma_{\rm x}^{2}},\\ 
& \hat{h} = \frac{\lambda {C_2}}{(1 - \lambda){C_1} + \lambda {C_2}} \cdot \frac{\sqrt{\tilde{q}_{\boldsymbol{\rm H}}}\sigma_{\rm h}^2 y_{h}}{1+\tilde{q}_{\boldsymbol{\rm H}}\sigma_{\rm h}^{2}}.
\end{align}
The MSE performance \({\rm mse}_{\boldsymbol{\rm X}_{\rm d}}\) can be calculated as
\begin{equation}
\begin{aligned}
& {\rm mse}_{\boldsymbol{\rm X}_{\rm d}} \\
& = \int_{{x_{\rm{d}}}} {{P_{{x_{\rm{d}}}}}({x_{\rm{d}}})}  \cdot \int_{{y_{{x_{\rm{d}}}}}} {{{\left\| {{{\hat x}_{\rm{d}}} - {x_{\rm{d}}}} \right\|}^2}{P_{{y_{{x_{\rm{d}}}}}\left| {{x_{\rm{d}}}} \right.}}({y_{{x_{\rm{d}}}}}\left| {{x_{\rm{d}}}} \right.)} {\rm d}{y_{{x_{\rm{d}}}}}{\rm d}{x_{\rm{d}}}\\
&=  \frac{\sigma_{\rm x}^2}{1+\tilde{q}_{\boldsymbol{\rm X}_{\rm d}}\sigma_{\rm x}^2}.    
\end{aligned}
\end{equation}
Considering the terms \(C_{1}\) and \(C_{2}\) are difficult to be dealt with in the integral calculation, we only focus on a communication system with a large number of antennas and symbols (i.e. \(\alpha,\beta_{\rm d},\beta_{\rm p} \to \infty\)). Note that \(\tilde{q}_{\boldsymbol{\rm H}},\tilde{q}_{\boldsymbol{\rm X}_{\rm d}},\tilde{q}_{\boldsymbol{\rm X}_{\rm p}} \to \infty\). \(C_{1}\) and \(C_{2}\) can be rewritten as
\begin{equation}\label{C_1_approx}
\begin{aligned}
C_1 & = \frac{1}{\sqrt{2\pi(\tilde{q}_{\boldsymbol{\rm H}})^{-1}}}e^{-\frac{\left\|\sqrt{\tilde{q}_{\boldsymbol{\rm H}}} h + w_{h}\right\|^2}{2}} \\
& \approx \frac{1}{\sqrt{2\pi(\tilde{q}_{\boldsymbol{\rm H}})^{-1}}}e^{-\frac{\tilde{q}_{\boldsymbol{\rm H}}\left\| h \right\|^2}{2}}\\
& \approx 0, \quad\quad ({\rm as}\ \tilde{q}_{\boldsymbol{\rm H}}\to \infty)
\end{aligned}
\end{equation}
\begin{equation}\label{C_2_approx}
\begin{aligned}
C_2 & = \frac{1}{\sqrt{2\pi[\sigma_{\rm h}^2+(\tilde{q}_{\boldsymbol{\rm H}})^{-1}]}}e^{-\frac{\left\|\sqrt{\tilde{q}_{\boldsymbol{\rm H}}} h + w_{h}\right\|^2}{2(\tilde{q}_{\boldsymbol{\rm H}}\sigma_{\rm h}^2+1)}}\\
& \approx \frac{1}{\sqrt{2\pi[\sigma_{\rm h}^2+(\tilde{q}_{\boldsymbol{\rm H}})^{-1}]}}e^{-\frac{\tilde{q}_{\boldsymbol{\rm H}}\left\| h\right\|^2}{2(\tilde{q}_{\boldsymbol{\rm H}}\sigma_{\rm h}^2+1)}}\\
& \approx  \frac{1}{\sqrt{2\pi\sigma_{\rm h}^2}}e^{-\frac{\left\| h\right\|^2}{2(\sigma_{\rm h}^2+1)}}.  \quad\quad ({\rm as}\ \tilde{q}_{\boldsymbol{\rm H}}\to \infty)
\end{aligned}
\end{equation}
From \eqref{C_1_approx} and \eqref{C_2_approx}, \({\rm mse}_{\boldsymbol{\rm H}}\) can be given by
\begin{equation}
\begin{aligned}
& {\rm mse}_{\boldsymbol{\rm H}}\\
& = \int_h {{P_h}(h)}  \cdot \int_{{y_h}} {{{\left\| {\hat h - h} \right\|}^2}{P_{{y_h}\left| h \right.}}({y_h}\left| h \right.)} {\rm d}{y_h}{\rm d}h\\
& \approx \left(  \frac{\sigma_{\rm h}^2}{1+\tilde{q}_{\boldsymbol{\rm H}}\sigma_{\rm h}^2}  \right) \left( \frac{\lambda +  \tilde{q}_{\boldsymbol{\rm H}}\sigma_{\rm h}^2 }{1+\tilde{q}_{\boldsymbol{\rm H}}\sigma_{\rm h}^2}  \right).
\end{aligned}  
\end{equation}
Based on the relationship \(\tilde{q}_{\boldsymbol{\rm H}},\tilde{q}_{\boldsymbol{\rm X}_{\rm d}},\tilde{q}_{\boldsymbol{\rm X}_{\rm p}} \gg \lambda, \sigma_{\rm h}^2\), the MSE performance can be simplified as
\begin{equation}
\begin{aligned}
& {\rm mse}_{\boldsymbol{\rm X}_{\rm d}}\\
& = \frac{1}{\tilde{q}_{\boldsymbol{{\rm X}_{\rm d}}}}\\
& = \frac{1}{\alpha q_{\boldsymbol{\rm H}} {\chi}_{\rm d} }\\
& \approx \frac{\sigma_{\rm n}^2 / N}{\alpha (\lambda \sigma^2_{\rm h} - {\rm mse}_{\boldsymbol{\rm H}})} \\
& \approx \frac{\sigma_{\rm n}^2}{{\rm AE}_{\boldsymbol{\rm H}} - {\rm AMSE}_{\boldsymbol{\rm H}}},\quad\quad\quad\quad\quad\quad\quad\quad\quad\quad
\end{aligned}
\end{equation}
where \({\rm AE}_{\boldsymbol{\rm H}} = M\lambda\sigma_{\rm h}^2 = \frac{{\rm E}\{\left\|  \boldsymbol{\rm H}  \right\|^2  \}}{N}\) is the average energy of the channel matrix \(\boldsymbol{\rm H}\) per user, and \({\rm AMSE}_{\boldsymbol{\rm H}} = M \cdot {\rm mse}_{\boldsymbol{\rm H}} = \frac{{\rm E}\{\left\|  \boldsymbol{\rm H} - \hat{\boldsymbol{\rm H}}  \right\|^2  \}}{N}  \) is the average MSE of the channel matrix \(\boldsymbol{\rm H}\) per user. Then,
\begin{equation}
\begin{aligned}
& {\rm mse}_{\boldsymbol{\rm H}}\\
& = \frac{1}{\tilde{q}_{\boldsymbol{{\rm H}}}}\\
& = \frac{1}{\beta_{\rm d} q_{\boldsymbol{\rm X}_{\rm d}} {\chi }_{\rm d}+ \beta_{\rm p} q_{\boldsymbol{\rm X}_{\rm p}} {\chi }_{\rm p}} \\
& \approx \frac{\sigma_{\rm n}^2/N}{\beta_{\rm d} (\sigma_{\rm x}^2 - {\rm mse}_{\boldsymbol{\rm X}_{\rm d}})+ \beta_{\rm p} \sigma_{\rm x}^2 }\\
& \approx \frac{\sigma_{\rm n}^2}{({\rm AE}_{\boldsymbol{\rm X}_{\rm d}} - {\rm AMSE}_{\boldsymbol{\rm X}_{\rm d}}) + {\rm AE}_{\boldsymbol{\rm X}_{\rm p}}  },\quad\quad\quad\quad\quad\quad
\end{aligned}
\end{equation}
where \({\rm AE}_{\boldsymbol{\rm X}_{\rm d}} = K_{\rm d}\sigma_{\rm x}^2 = \frac{{\rm E}\{\left\|  \boldsymbol{\rm X}_{\rm d} \right\|^2  \}}{N}\) and \({\rm AE}_{\boldsymbol{\rm X}_{\rm p}} = K_{\rm p}\sigma_{\rm x}^2 = \frac{{\rm E}\{\left\|  \boldsymbol{\rm X}_{\rm p} \right\|^2  \}}{N}\) are the average energy of the symbol matrix \(\boldsymbol{\rm X}_{\rm d}, \boldsymbol{\rm X}_{\rm p}\) per user, and \({\rm AMSE}_{\boldsymbol{\rm X}_{\rm d}} = K_{\rm d} \cdot {\rm mse}_{\boldsymbol{\rm X}_{\rm d}} = \frac{{\rm E}\{\left\|  \boldsymbol{\rm X}_{\rm d} - \hat{\boldsymbol{\rm X}}_{\rm d} \right\|^2  \}}{N}\) is the average MSE of the symbol matrix \( \boldsymbol{\rm X}_{\rm d}\) per user.


%

\ifCLASSOPTIONcaptionsoff
  \newpage
\fi

\section*{}
\bibliography{LABOLABO.bib}

\end{document}